\newtheorem{definition}{Definition}
       \newtheorem{example}[definition]{Example}
       \newtheorem{theorem}[definition]{Theorem}
       \newtheorem{proposition}[definition]{Proposition}
       \newtheorem{remark}[definition]{Remark}
       \newtheorem{corollary}[definition]{Corollary}
\newdimen\CdotAxis
\newcommand*{\CdotAux}[3]{%
  {%
    \settoheight\CdotAxis{$#2\vcenter{}$}%
    \sbox0{%
      \raisebox\CdotAxis{%
        \scalebox{#1}{%
          \raisebox{-1.1pt}{%
            $\mathsurround=0pt #2#3$%
          }%
        }%
      }%
    }%
    % Remove depth that arises from scaling.
    \dp0=0pt %
    % Decrease scaled height.
    \sbox2{$#2\bullet$}%
    \ifdim\ht2<\ht0 %
      \ht0=\ht2 %
    \fi
    % Use the same width as the original \cdot.
    \sbox2{$\mathsurround=0pt #2#3$}%
    \hbox to \wd2{\hss\usebox{0}\hss}%
  }%
}
\newcommand{\substitution} [1]         { {\vec{#1}} }
\newcommand{\redn}              {\rightsquigarrow}
\newcommand{\red}              {\;\mathsf{r}\;}
\newcommand{\proj}                     { {\mathsf{\uppi}} }
\newcommand{\emp}[1]    {{\mathsf{P}_{#1}}}
\newcommand{\cruno}           {{\textbf{(CR1)}}}
\newcommand{\crdue}           {{\textbf{(CR2)}}}
\newcommand{\crtre}           {{\textbf{(CR3)}}}
\newcommand{\sn}{\mathsf{SN}}
\newcommand{\snn}{\mathsf{SN}^{\star}}
\newcommand{\pair}[2]{\langle #1,#2\rangle}
\newcommand{\IMPL}{\rightarrow}
\newcommand{\impl}{\rightarrow}
\newcommand{\et}{\wedge}
\newcommand{\vel}{\vee}
\newcommand{\fal}{\bot}
\newcommand{\ver}{\top}
\newcommand{\non}{\neg}
\newcommand{\ax}{\mathcal{A}}
\newcommand{\axx}{\mathsf{Ax}}
\newcommand{\contr}{ \textit{contr}\hspace{1pt}}
\newcommand{\lama}{\lambda_{\p}}
\newcommand{\inappendix}[1]{}
\newcommand{\lam}{\lambda}
\newcommand{\lan}{\langle}
\newcommand{\ran}{\rangle}
\newcommand{\efq}[2]{#2 \, \mathsf{efq}_{#1 }}
\newcommand{\p}{\parallel}
\newcommand{\chosen}[1]{\underline{ #1 }}
\newcommand{\send}[1]{\overline{#1}\,}
\newcommand{\get}[1]{\underline{#1}\,}
\newcommand{\pp}[2] {  {}_{#1} \hspace{-1pt}  ( #2 ) }
\newcommand{\ppp}[3] {{}^{#2}_{\; #1}\hspace{-1pt}( #3 ) }
\newcommand{\po}[1]{{}_{#1} \hspace{-1pt}  \big(}
\newcommand{\pc}{\big)}
\newcommand{\IL}{\mathbf{IL}}
\newcommand{\NJ}{\mathbf{NJ}}
\newtheorem{procedure}{Procedure}
\def\NJ{{\bf NJ}}
\def\Deduce#1{\hbox{$\hphantom{#1}$\kern\inferLabelSkip\DeduceSym\kern\inferLab
elSkip$#1$}}
\newcommand\shortdots{\!\makebox[1em][c]{.\hfil.\hfil.}}
\newcounter{inlinelabel}
\let\theinlinelabelbak\theinlinelabel
\renewcommand{\theinlinelabel}{(\theinlinelabelbak)}
\newcommand{\inlinelabel}[1]{
        \refstepcounter{inlinelabel}\label{#1}%
        \theinlinelabel
}
\title{A typed parallel $\lambda$-calculus via 1-depth intermediate proofs\thanks{Funded by the FWF grants Y544-N23 and P32080-N31, and project IBS (ANR-18-CE27-0012-01).}}
\author{Federico Aschieri\inst{1} \and Agata Ciabattoni\inst{1} \and Francesco A.\ Genco\inst{2}}
\institute{TU Wien, Vienna, Austria \\
\email{federico.aschieri@tuwien.ac.at, agata@logic.at}\\
 \and 
IHPST, Universit\'e Paris 1 and CNRS, Paris, France \\
\email{frgenco@gmail.com} 
}
\authorrunning{Aschieri, Ciabattoni and Genco}
\titlerunning{A typed parallel $\lambda$-calculus via 1-depth intermediate proofs}
\begin{document}

\maketitle

\begin{abstract}
%We introduce a Curry--Howard correspondence for a large class of intermediate logics that can be formalized as intuitionistic logic extended with 
%non-nested 
%rules for classical disjunctive tautologies (1-depth intermediate proofs).
%
 We introduce a Curry--Howard correspondence
 for a large class of intermediate logics characterized
 by intuitionistic proofs with non-nested applications of 
 rules for classical disjunctive tautologies (1-depth intermediate proofs).
% Intuitionistic Logic extended with non-nested 
% rules corresponding to classical disjunctive tautologies (1-depth intermediate
% proofs).
 The resulting calculus, we call it  $\lama$, is a strongly normalizing parallel extension of the
 simply typed $\lambda$-calculus.
 Although simple, the $\lama$ reduction rules can model arbitrary
process network topologies, and encode interesting parallel programs ranging
from numeric computation to algorithms on graphs.
\end{abstract}

\section{Introduction}\label{sec:intro}
The fundamental connection between  logic and computation,
known as Curry--Howard correspondence, 
relates logics and computational systems.
Originally introduced  
for intuitionistic logic $\IL$ and simply typed $\lambda$-calculus,
it has been extended to many different logics (classical~\cite{Parigot92},
linear~\cite{CairesPfenning,AG2020}, modal~\cite{MCHP}\dots) and notions of 
computation, see, e.g., \cite{Wadler} for an overview.

A recent addition to them is the discovery in \cite{ACGTCS} of the
connection between propositional logics intermediate between
classical logic and $\IL$, 
and concurrent extensions of the simply typed $\lambda$-calculus.
More precisely, the considered logics extend $\IL$ with the classical disjunctive tautologies
interpreted as synchronization schemata in~\cite{DanosKrivine},
 i.e.~axiom schemata of the form $(F_1 \impl G_1)\vel \ldots \vel (F_m \impl G_m) $.
%which includes (most of) those in~\cite{DanosKrivine}.
%where: each $F_i$ is either a propositional variable (different from $F_j$ for $i\neq j$)
%or $\ver$; if $F_i \neq \ver$, then $F_i=G_j$ for some $j\in \{1, \ldots , m\}$. 
%; for some $i\in \{1, \ldots , m\}$ $F_i\neq \ver$.
This general result was preceeded by                 
Curry--Howard correspondences for the
 {\em particular cases} of G\"odel-Dummett logic
($\IL$ with LIN $= (A \to B) \vee (B \to A)$)
and classical logic ($\IL$ with EM $= A \vee \neg A$) 
in~\cite{ACGlics2017,ACG2018}.
All these logics possess cut-free hypersequent calculi --
a generalization of sequent calculi consisting of parallel compositions of sequents that can ``communicate''; our results confirmed Avron's conjecture \cite{Avron91} of the connection between (intermediate
logics characterized by cut-free) hypersequent calculi and concurrency. 
Although the resulting typed concurrent $\lambda$-calculi provided an adequate computational interpretation
of these logics, they are quite sophisticated and not easy to use as programming languages. 
Their main reductions are:
{\em intuitionistic reductions} -- which are the usual computational rules 
for the simply typed $\lambda$-calculus~\cite{Girard}, 
{\em permutation reductions} -- needed to prove weak normalization, and the reductions implementing communications among
simply typed $\lambda$-terms; these are divided into {\em basic cross reductions} -- simple reductions for the natural deduction version of the characteristic hypersequent rules, and {\em full cross reductions} -- needed for the
subformula property -- which enable the transmission of messages that depend on their computational environment by using 
%that  have dependencies with their computational environment, 
%using 
code mobility concepts~\cite{codemobil} such as that of closure~\cite{landin}. 

% The present work is motivated by simplifying 

The aim of this work is to present a version of the calculi in \cite{ACGTCS,ACGlics2017,ACG2018} suitable for programming.
We show that a simplified version of these calculi is expressive enough to model arbitrary process network topologies and to encode interesting
parallel algorithms taken from \cite{Eden}.
%The aim of the present work is to simplify
%the calculi in \cite{ACGTCS,ACGlics2017,ACG2018}.
%We indeed show that a fragment of them
%is expressive enough to model arbitrary process network 
%topologies and to encode interesting
%parallel algorithms, taken from \cite{Eden}. 
%by designing an intermediate logics-based  
%typed parallel $\lambda$-calculus
%which is easy-to-use and yet expressive.
Inspired by \cite{Marcello,MFG13}
%\footnote{
%\cite{Marcello,MFG13} Introduced the concept of {\em bounded depth proofs}
%to approximate classical logic with nested applications of the structural
%rule expressing the excluded middle axiom EM limited by a fixed natural number.} 
%indeed approximates
%classical logic with nested applications of the structural
%rule expressing the excluded middle axiom EM limited by a fixed natural number
%({\em bounded depth proofs}).
%}
we base our Curry--Howard correspondence on 
%a suitable fragment of intermediate logics with classical disjunctive tautologies. 
%The key is to use 
1-depth\footnote{
\cite{Marcello,MFG13} Introduced the concept of {\em bounded depth proofs}
to approximate classical logic with nested applications of the structural
rule expressing the excluded middle axiom EM limited by a fixed natural number.}
intermediate proofs, i.e., $\IL$  proofs with 
{\em non-nested} applications of
the rules for  
%(an equivalent version of) 
classical disjunctive tautologies. 
%the additional axioms. %classical disjunctive tautologies.
This leads to $\lama$, an easy-to-use and yet expressive 
parallel extension of the simply typed
$\lambda$-calculus. 
Consisting only of {\em intuitionistic reductions} and (a simplified version of)
{\em basic cross reductions}, the reduction rules of $\lama$ always terminate, regardless of the reduction strategy. % (strong normalization theorem).

 Motivated by the tool Grace\footnote{Grace stands for GRAph-based Communication in Eden.}~\cite{Grace} which
allows programmers in the parallel functional language Eden to specify a 
 network of processes as a
 directed graph and provides special constructs to generate the actual
 network topology,
 we also provide an automatic
 procedure to extract $\lama$ typing rules from any communication topology in
 such a way that the typed terms can only communicate according to the
 topology. The idea of enforcing network topologies with types
 is also present in~\cite{Montesi} in the different context of
 the $\pi$-calculus~\cite{Milner} -- the most
 widespread formalism for {\em modeling} concurrent systems.

We consider classical disjunctive tautologies in the (intuitionistically equivalent) 
form

\centerline{$(\mathbb{A}_1 \impl \mathbb{A}_1 \et \mathbb{B}_1)\vel \dots \vel (\mathbb{A}_m \impl \mathbb{A}_m\et \mathbb{B}_m)\qquad \qquad $  \inlinelabel{eq:formulae}}

\noindent such that  all $\mathbb{B}_{i}$ are either $\bot$ or a conjunction of some
$\mathbb{A}_1, \dots, \mathbb{A}_{m}$.  
These axioms can indeed encode any reflexive directed graphs  
as follows: $\mathbb{A}_i$ represents a process in the network specified by the
graph, and $\mathbb{B}_i$ is the list of processes that are connected
to $\mathbb{A}_{i}$ by an outgoing arc, or $\bot$ if there are no such
processes. The intuitive reading of
$(\mathbb{A}_i \impl \mathbb{ A}_i\et \mathbb{ B}_i)$ is that all conjuncts in
$\mathbb{B}_i$, as well as $\mathbb{A}_i$ itself, 
can send messages to
$\mathbb{A}_i$. We allow $\mathbb{A}_i$ to send a message to itself in
case $\mathbb{A}_i$ wants to save it for later use, thus simulating a memory mechanism.
We establish a Curry--Howard correspondence between $\lama$ terms and
(fragments of) the intermediate logics 
obtained from the intuitionistic natural deduction calculus $\NJ$ by
allowing a non-nested use of the rules corresponding to 
axioms of the form \ref{eq:formulae}.  The
terms typed using these natural deduction rules contain as many parallel processes as
the number of premises.  For instance, the decorated version of the
rules for the axiom schema $ em= (\mathbb{A} \impl \mathbb{A}
\et \fal ) \vel(\mathbb{B} \impl \mathbb{B} \et \mathbb{A}) $ and
$ C_3= (\mathbb{A} \impl \mathbb{A} \et
\mathbb{B})\vel(\mathbb{B} \impl \mathbb{B} \et \mathbb{C})\vel
(\mathbb{C} \impl \mathbb{C} \et \mathbb{A})$, intuitionistically
equivalent to the excluded middle
law and $C_3$~\cite{L1982} are

\begin{footnotesize}
$$ \infer[(em)]{\pp{a}{u\p v}: C}{\infer*{u:C}{[a:A \impl A \et \fal ]
}\!\!&\!\!\infer*{v:C}{[a:B \impl B \et A]}}
% \quad \infer[\text{G\"{o}del}]{\pp{a}{u \p v}: C}{\infer*{u:C}{[a: A
% \impl A \et B ] }&\infer*{v:C}{[a:B \impl B \et A]}}
\quad \quad \quad\infer[(C_3)]{\pp{a}{t \p u \p v }: D}{\infer*{t:D}{[a: A \impl A
\et B ] }\!\!&\!\!\infer*{u:D}{[a:B \impl B \et C]}\!\!&\!\!\infer*{v:D}{[a:C \impl C
\et A]}}$$
% \[
%  \infer{\pp{a}{s \p t \p u \p v }: D}{\infer*{s:D}{[a: A \impl A
% \et B ] }&\infer*{t:D}{[a: B \impl B
% \et A ] }&\infer*{u:D}{[a: \ver \impl \ver \et B \et C]}&\infer*{v:D}{[a: C \impl C
% \et \fal]}}
% \]
\end{footnotesize}with the (1-depth) restriction: $t, u$ and $v$ are 
a parallel composition of simply-typed $\lam$-terms that cannot communicate with each other.
The variable $a$ represents a private communication
channel that behaves similarly to the $\pi$-calculus 
operator $\nu$.  The typing rules establish how the communication channels connect the terms. For
example,  $(em)$ encodes the fact that the process $v$ can receive a
message of type $A$ from the process $u$, while the rule $(C_3)$
that $t$ can receive a message from $u$, $u$ from $v$ and $v$ from $t$. The behavior of these channels
during the actual communications is defined by the reduction rules of
the calculus.  If we omit reflexive edges, the communication topologies
corresponding to the reductions for the above rules are\\ 
\tikzstyle{proc}=[circle, minimum size=2mm, inner sep=0pt, draw]\begin{tikzpicture}[node distance=1.5cm,auto,>=latex', scale=0.19]

      %%%DISTANCE FROM LEFT
      \node [] () at (-52,0) {};
      
      %%% CL
      \node [] () at (-37,0) {};

      \node [] () at (-32,0) {$ (em)$};

      \node [proc] (1) at (-28,0) {};

      \node [proc] (2) at (-18,0) {};

      \path[->] (1) edge [thick, bend left] (2);

%%% Gödel
      % \node [proc] (1) at (7,0) {};
      % \node [proc] (2) at (10,0) {};
      % \path[<->] (1) edge [thick, bend left] (2);

      %%% C3

      \node [] () at (-9,0) {$(C_3)$};

      \node [proc] (1) at (-5,1) {};

      \node [proc] (3) at (0,-1) {};

      \node [proc] (2) at (5,1) {};

      \path[->] (1) edge [thick, bend left] (2);

      \path[->] (2) edge [thick, bend left=10] (3);

      \path[->] (3) edge [thick, bend left=10] (1);
     \end{tikzpicture}
\\
In particular, $(em)$ implements the simplest message-passing mechanism, 
similar to that of
the asynchronous $\pi$-calculus~\cite{HT}, and $(C_3)$ 
cyclic communication among three processes.

The restriction (1-depth) enables us to define simple and yet
expressive reduction rules and to prove the strong normalization of
$\lama$. The resulting calculus is strictly more
 expressive than the simply typed $\lam$-calculus, and 
 can be used for interesting computational tasks (Sec.~\ref{sec:computing}). 

\section{The Typing System
of  $\lama$}
\label{sec:types}

% We introduce the concurrent calculus $\lama$.

% Our calculus extends the simply typed $\lambda$-calculus with the new
% rules algorithmically defined starting from any arbitrary
% communication topology.

%
%The $\lama$ calculus arises from a method for automatically solving
%this problem. 
%
% %%%%% AGATA'S VERSION 
% {\  
% Indeed to code a parallel algorithm or a concurrent system into a
% concrete programming language, one usually starts from a communication
% topology specified by a directed graph. The next task is to set
% up the communication structure between the processes involved in the
% computation. A method for automatically solving this problem leads to
% $\lama$.  Given a direct, reflexive graph $G$ whose nodes and
% edges represent respectively processes and communication channels,
% $G$ can be rendered as a typed natural deduction rule, which
% correspond to a logic formula of a certain shape.
% }
%
%Let us explain the idea with an example.
%Section~\ref{sec:red}, that describes the rule reductions, will prove the
%correspondence between graphs and rules behavior.
%

$\lama$ extends the simply typed
$\lambda$-calculus with channels for multi-party communication and
reduction rules for message exchange.
Table~\ref{tab:type_nj} contains the type
assignments for $\lam$-terms, see e.g.~\cite{Girard} for details. Terms typed by such
rules are called \textbf{simply typed $\lambda$-terms}
%\textbf{proof terms}
and are denoted here by $t, u, v
\dots$
\begin{table}[t]
\centering
\begin{footnotesize}
  \begin{minipage}[l]{0.52\linewidth}
    \begin{minipage}[l]{0.5\linewidth}
      $\begin{array}{c} x^A: A \end{array} \qquad\quad \vcenter{\infer{
          \efq{P}{u}: P}{ u: \bot}} \;\; \text{ with $P$ atomic,
        $P \neq \bot$} $
    \end{minipage}
\bigskip

\begin{minipage}[l]{0.7\linewidth}
  $ \vcenter{\infer{
      \langle u,t\rangle: A \et B}{u:A & t:B}} \qquad\quad
  \vcenter{\infer{u\,\pi_0: A}{u: A\et B}} \qquad\quad \vcenter{\infer{u\,\pi_1: B}{u: A\et B}} $
\end{minipage}
\end{minipage}
  \begin{minipage}[l]{0.35\linewidth}
    $\vcenter{ \infer{\lambda x^{\scriptscriptstyle A} u: A\rightarrow
        B}{\infer*{u:B}{[x^{\scriptscriptstyle A}: A]}}} \qquad\quad \vcenter{\infer{tu:B}{ t:
        A\IMPL B & u:A}}$
  \end{minipage}
\end{footnotesize}
\medskip
\hrule
\smallskip
\caption{Type assignments for the simply typed $\lam$-calculus.}\label{tab:type_nj}

\vspace{-20pt}
\end{table}
%Agata
%We fix basic notation, definitions and terminology that
%will also apply to $\lama$. 
Terms may contain
variables $x_{0}^{\scriptscriptstyle A},
x_{1}^{\scriptscriptstyle A}, x_{2}^{\scriptscriptstyle A}, \ldots$ of
type $A$ for every formula $A$.
%Agata
% {\   These variables are denoted as
% $x^{\scriptscriptstyle A},$ $ y^{\scriptscriptstyle A}, $ $
% z^{\scriptscriptstyle A} , \dots$ Whenever the type is not relevant,
% it will be dropped and we shall simply write $x, y, z, \dots$}
% {\   \textbf{REMOVE?}; these variables are denoted as
% $x^{\scriptscriptstyle A},$ $ y^{\scriptscriptstyle A}, $ $
% z^{\scriptscriptstyle A} , \dots$ Whenever the type is not relevant,
% it will be dropped and we shall simply write $x, y, z, \dots$ }
Free and
bound variables of a proof term are defined as usual. We assume the
standard renaming rules and $\alpha$-equivalences that avoid the
capture of variables during reductions.

The typing rules of simply typed $\lam$-calculus, stripped of
$\lam$-terms, are the inference rules of Gentzen's natural
deduction system $\NJ$ for $\IL$. Actually, if $\Gamma= x_{1}: A_{1},
\shortdots , x_{n}: A_{n}$, and all free variables of a  term $t: A$ are in
$x_{1}, \shortdots , x_{n}$, from the logical point of view, $t$
represents an $\NJ$ derivation of $A$ from
the hypotheses $A_{1}, \shortdots, A_{n}$. We will thus write
$\Gamma\vdash t: A$.

\noindent \textbf{Notation.} $\rightarrow$  and $\et $ associate to
the right.
% \comment{so that
%  \begin{gather*}
%    A_{1}\rightarrow A_{2}\rightarrow\ldots \rightarrow A_{n}
% \\
%    =
% \\
%    A_{1}\rightarrow (A_{2}\rightarrow (\ldots (A_{n-1}\rightarrow
%    A_{n})\ldots))
%  \end{gather*}}
$\langle t_{1}, t_{2}, \ldots, t_{n}\rangle $ stands for %the term
$\langle t_{1}, \langle t_{2},$ $\ldots$ $\langle t_{n-1},
t_{n}\rangle$ $\ldots \rangle\rangle$, and $\langle t_{1}, t_{2},
\ldots, t_{n}\rangle \, \proj_{i}$, ($i=0, \ldots, n$) for %the term
$\langle t_{1}, t_{2}, \ldots, t_{n}\rangle\, \pi_{1}\ldots \pi_{1}
\pi_{0}$ containing the  projections that select the
$(i+1)$th element of the sequence.
%We define as usual
$\non A$ is $A
\impl \bot $ and $\ver$ is $\bot \impl \bot$.

%%%%%%%%%%%%%%%%%%%%
%
To type parallel terms that interact according to possibly
complex communication mechanisms, we build on ideas
from~\cite{ACGTCS,Marcello,DanosKrivine,CG2018} and base the Curry--Howard
correspondence for $\lama$ on a fragment of the axiomatic extensions of $\IL$ with
(intuitionistically equivalent versions of) classical disjunctive tautologies~\cite{DanosKrivine}. 
% of the form $\bigvee_{i=0}^{m} (A_i \to B_i)$ where $B_i$ is a conjunction.
These axioms can be transformed into rules involving parallel
communicating sequents (i.e., hypersequents \cite{Avron91}), and
as shown in~\cite{ACGTCS} they lead to various communication schemata. 
%and include the class of \emph{disjunctive tautologies},
%interpreted as synchronization schemata in~\cite{DanosKrivine}.
The fragment considered in this paper consists of 1-depth proofs, 
whose notion is adapted from~\cite{Marcello,MFG13}. These are  
$\NJ$ proofs with {\em non-nested} applications of the natural deduction version of the corresponding
hypersequent rules, see~\cite{CG2018}. 
%
%The fragment considered in this paper consists of 1-depth proofs, i.e.~$\NJ$ proofs
%with {\em non-nested} applications of the natural deduction version of the corresponding
%hypersequent rules (the notion of bounded depth
%proof is adapted from~\cite{Marcello,MFG13}). %, that uses them to approximate classical logic).  
The use of this fragment drastically simplifies the reduction rules of $\lama$ w.r.t. the
calculi in~\cite{ACGTCS,ACGlics2017,ACG2018} and enables us to type channels with input/output directions, as in the $\pi$ calculus.
 
%As shown in Sec.~\ref{sec:topologies}, 
%these axioms suffices to encode all communication
%topologies represented as reflexive directed graphs.
%
%
% $\bigvee (A_i \to \bigwedge B_j)$.  These axioms characterize
% the\footnote{Such axioms can be automatedly transformed into
% equivalent hypersequent rules~\cite{CGT08}.}  formulae that can be
% transformed into rules involving communicating sequents (i.e.,
% hypersequents~\cite{Avron91}), and include the class of
% \emph{disjunctive tautologies} -- whose interpretation as
% synchronization schemata was introduced in~\cite{DanosKrivine}.
% Actually, as shown in Section~\ref{sec:topologies}, a subclass of
% these axioms suffices to encode all communication topologies
% represented as reflexive directed graphs.
%
%An axiom  encoding such a graph will, in turn, correspond to a typing rule that
%allows the terms to communicate only by the channels represented in the
%graph.  

The class $\axx$ of axiom schemata that we consided here, that will be shown in Sec.~\ref{sec:topologies} 
to encode all communication topologies represented as reflexive directed graphs, is\vspace{.3pt}

 ${\color{white}A} \qquad\qquad\qquad\qquad\quad  (\mathbb{A}_1 \impl \mathbb{A}_1 \et \mathbb{B}_1)\vel \dots \vel (\mathbb{A}_m \impl \mathbb{A}_m\et
\mathbb{B}_m) $\vspace{.3pt}

\noindent where 
%$\mathbb{A}_1, \dots , \mathbb{A}_m$ are pairwise different propositional variables 
%$\mathbb{A}_i \neq \mathbb{A}_j$ for $i\neq j$
%and  $\mathbb{B}_1, \dots , \mathbb{B}_m$ are either $\bot$ or conjunctions of distinct propositional variables
% among $\mathbb{A}_1, \dots , \mathbb{A}_m$, ordered according to the indexes. 
%Formally, 
$\mathbb{A}_i \neq \mathbb{A}_j$ whenever $i\neq j$; and for any $i \in \{1,
\dots , m \}$, either $\mathbb{B}_i = \fal$ or $\mathbb{B}_i = \mathbb{A}_{k_1} \et \dots \et
\mathbb{A}_{k_p}$, %with
%$\mathbb{A}_{k_1} , \dots , \mathbb{A}_{k_p} \in \{\mathbb{A}_1, \shortdots , \mathbb{A}_{i-1}, \mathbb{A}_{i+1} , \shortdots , \mathbb{A}_m \}$
 with 
%$\mathbb{A}_{k_i} \neq \mathbb{A}_{k_j}$ whenever $i\neq j$ and  
$k_1<\ldots 
< k_p$.  
\begin{remark}
Each disjunct $\mathbb{A}_i \impl \mathbb{A}_i \et \mathbb{B}_i$ is
logically equivalent to $\mathbb{A}_i \impl
\mathbb{B}_i$. 
The redundant
%Nevertheless, we keep the second, logically redundant,
occurrence of $\mathbb{A}_{i}$ is kept to type a memory mechanism for
input channels.\end{remark}As usual, an instance of an axiom schema
$\ax\in\axx$ is a formula obtained from $\ax$ by uniformly replacing
each propositional variable with an actual formula.
$\lama$ is obtained by the Curry--Howard correspondence applied
to $\NJ$ extended with non-nested applications
of the natural deduction rules for the axioms in $\axx$.
The type assignment for $\lama$ terms comprises the rules for the
simply typed $\lambda$-calculus in Table~\ref{tab:type_nj} and those
for the parallel operators in Table~\ref{tab:type_lama}.  % Proof
% terms may contain variables $x_{0}^{\scriptscriptstyle A}, x_{1}^{\scriptscriptstyle A}, x_{2}^{\scriptscriptstyle A}, \ldots$
% of type $A$ for every formula $A$; these 
We denote the variables of simply typed $\lambda$-calculus as
$x^{\scriptscriptstyle A},$ $ y^{\scriptscriptstyle A}, $ $ z^{\scriptscriptstyle A} , \ldots,$ $ a^{\scriptscriptstyle A}, b^{\scriptscriptstyle A}, c^{\scriptscriptstyle A}, \dots$, $ \send{a}^{\scriptscriptstyle A}, \send{b}^{\scriptscriptstyle A}, \send{c}^{\scriptscriptstyle A}\dots$,  $ \get{a}^{\scriptscriptstyle A}, \get{b}^{\scriptscriptstyle A}, \get{c}^{\scriptscriptstyle A}\dots$ and,
whenever the type is not important, as $x, y, z, \ldots, a,
b, c \dots $  
% the variables  by the $(\ax)$ rule will be
% often denoted with letters $a, b, c, \dots$ %but they are not in a
% %syntactic category apart. 
We call \textbf{intuitionistic variables} the variables $x, y, z,
\dots$, which stand for terms; they are bound by the $\lambda$
operator.  The variables $a, \send{a}, \get{a}, \ldots$ are
called \textbf{channels} or \textbf{communication variables} and
represent  communication channels between parallel
processes: $a, b, c, \ldots$ are used as channel binders, $\send{a}, \send{b}, \send{c}, \ldots$ represent \textbf{output channels} that can transmit messages, while $\get{a}, \get{b}, \get{c}, \ldots$ \textbf{input channels} that can receive messages.
% {\ehi Each occurrence of each channel can either be a sending
%occurrence or a receiving occurrence. For any channel $a$, we  usually denote its sending
%occurrences as $\send{a}$. Where unambiguous or not relevant, we do not distinguish between sending and receiving occurrences.}  
% Instances of sending and
% receiving channels are bound by the $\ppp{a}{\ax}{\dots \p \dots
% \p\dots}$ operator.
%
% A variable $a$ that occurs in a term
% $\ppp{a}{\ax}{u_{1}\p\dots \p u_{m}}$ is called \textbf{channel} or
% \textbf{communication variable}  and represents a \emph{private}
% communication channel between the parallel processes.  
%
We denote $\ppp{a}{\ax}{u_{1} \p\ldots\p u_{m}}$ by
$\pp{a}{u_{1} \p\ldots\p u_{m}}$ when $\ax$ is clear from the context
or irrelevant. All free occurrences of $\send{a}$ and $\get{a}$ in $u_1 , \ldots , u_m$
are bound in $\pp{a}{u_1 \p \ldots \p u_m }$ and must have the types indicated by the inference rule ($\ax$).

%All conventions stated
%for the simply typed $\lam$-calculus apply to $\lama$ too.

% \begin{table}[h]
% \centering
% \hrule
%  \medskip

% \begin{footnotesize}
% $\vcenter{\infer[(\ax )]{\ppp{a}{\ax}{u_{1} \p\ldots\p u_{n}}:
% B} {\infer*{u_{1}:B}{[a^{\scriptscriptstyle A_{1} \impl A_1 \et B_1}: A_{1}\impl A_1 \et B_1]}
% & \ldots &
% \infer*{u_{n}:B}{[a^{\scriptscriptstyle A_n\impl A_n\et B_n}: A_{m} \impl A_m\et B_m]}
% }}$ 
% \bigskip

% $\vcenter{\infer[(\contr)]{ t_1 \p t_2 : A}{ t_1 : A
% & t_2 : A }}$
% \bigskip

% where $ (A_1 \impl A_1 \et B_1)\vel \dots \vel (A_m \impl A_m\et B_m)$ is an instance of $\ax$
%    \end{footnotesize}
% \smallskip
% \hrule
% \caption{Type assignments for $\lama$ where $\ax \in \axx$.}\label{tab:type_lama}
% \end{table}\noindent

% \begin{table}[h]
% \centering
% \hrule
%  \medskip

% \begin{footnotesize}
% $\vcenter{\infer[(\ax )]{\ppp{a}{\ax}{(u_{1} \p\ldots\p u_{n}) \p \ldots \p (u_{p} \p\ldots\p u_{q})}: B}
% {\infer*{u_{1}:B}{[a^{\scriptscriptstyle A_{1}}:
% A_{1}]} & \shortdots &
% \infer*{u_{n}:B}{[a^{\scriptscriptstyle A_{1}}:
% A_{1}]} & \shortdots & \infer*{u_{p}:B}{[a^{\scriptscriptstyle A_m}: A_{m}]} & \shortdots &
% \infer*{u_{q}:B}{[a^{\scriptscriptstyle A_m}: A_{m}]}
% }}$
% \bigskip

% % $\vcenter{\infer[(\contr)]{ t_1 \p t_2 : A}{ t_1 : A
% % & t_2 : A }}$
% % \bigskip

% where % $A_i = (B_i \impl B_i \et C_i) $ and
% $ A_1 \vel \dots \vel A_m$ is an instance of $\ax$
%    \end{footnotesize}
% \smallskip
% \hrule
% \caption{Type assignments for $\lama$ where $\ax \in \axx$.}\label{tab:type_lama}
% \end{table}\noindent 

\begin{table}[t]
\centering
\begin{footnotesize}

$\vcenter{\infer[(\contr)]{ t_1 \p \ldots \p t_n : A}{ t_1 : A
& \ldots  & t_n : A }}$
$\quad$ where $t_1 , \ldots , t_n $ are simply typed $\lam$-terms

\medskip

 $\vcenter{\infer[(\ax )]{\ppp{a}{\ax}{(u_{1} \p\ldots\p u_{n}) \p
\ldots \p (u_{p} \p\ldots\p u_{q})}: B} {\infer*{u_{1} \p\ldots\p
u_{n} :B}{[	\send{a}^{\scriptscriptstyle A_1 \impl A_1 \et B_1 }, \get{a}^{\scriptscriptstyle A_1 \impl A_1 \et B_1 }: A_1 \impl
A_1 \et B_1 ]} & \shortdots & \infer*{u_{p} \p\ldots\p
u_{q}:B}{[\send{a}^{\scriptscriptstyle A_m \impl A_m\et B_m}, \get{a}^{\scriptscriptstyle A_m \impl A_m\et B_m}: A_m \impl A_m\et B_m]} }}$
\medskip

where $ (A_1 \impl A_1 \et B_1)\vel \dots \vel (A_m \impl A_m\et
B_m)$ is an instance of $\ax\in\axx$ 

%{\  \textbf{ ***SIMPLE DISJUNCTS IN THE NOTATION?}}
   \end{footnotesize}
\smallskip
\hrule
\caption{Type assignments for $\lama$.}\label{tab:type_lama}
\vspace{-20pt}
\end{table}The rule $(\contr)$ is useful for representing parallelism without communication. It is logically redundant, though, since it is an instance of $( \ax )$ with no channel occurrence.

From a computational perspective the rules
$(\ax)$ produce terms of the shape $\ppp{a}{\ax}{v_{1}\p\dots\p v_{m}}$ that
put in parallel $v_{1}, \dots, v_{m}$, which we call the \textbf{processes} of this term; %$\ppp{a}{\ax}{v_{1}\p\dots\p v_{m}}$; 
each $v_{i}$ in turn has the shape $u_{1}\p\dots \p u_{k}$, where $u_{1}, \dots,
u_{k}$ are simply typed $\lambda$-terms called the \textbf{threads} of $v_i$. Processes can communicate through the channel $a$, whereas their threads  represent parallel independent subprograms that cannot interact with each other.
Informally, in order to establish
a communication channel connecting two terms $v_i$ and $v_j$, we require that
$\send{a}^{\scriptscriptstyle A_i \impl A_i\et B_i}$ occurs in $v_{i}$,  $\get{a}^{\scriptscriptstyle A_j \impl A_j\et
  B_j}$ occurs in $v_{j}$ and
   ${A}_{i}$ is in ${B}_j$.
% where $A_{i}, B_j$ instantiate the subformulae $\mathbb{A}_{i},
% \mathbb{B}_j$ of $\ax$.  Informally, the rule $(\ax)$ establishes a
% communication channel connecting two terms $v_i$ and $v_j$ provided
% $a^{\scriptscriptstyle A_i \impl A_i\et B_i}$ occurs in $v_{i}$,
% $a^{\scriptscriptstyle A_j \impl A_j\et B_j}$ occurs in $v_{j}$ and

%  $\mathbb{A}_{i}$ is in $\mathbb{B}_j$

% {\ $\mathbb{A}_{i}$ is in $\mathbb{B}_j$ where $A_{i}, B_j$
% instantiate the subformulae $\mathbb{A}_{i}, \mathbb{B}_j$ of $\ax$}.
On one hand, the argument $w$ of a channel application
$\send{a}^{\scriptscriptstyle A_{i} \impl A_i \et B_i}\, w$ will be
interpreted as a message of type $A_{i}$ that must be
\emph{transmitted}; on the other hand, the channel application
$\get{a}^{\scriptscriptstyle A_{j} \impl A_j\et B_j}\, t$ will
\emph{receive} a batch of messages of type $B_{j}$ containing $w$
that will replace the whole channel application $\get{a}^{\scriptscriptstyle
A_{j} \impl A_j \et B_j}\, t$ upon reception. This is the reason why the direction of the communication and the direction of $\impl$ are reversed. In general, each
channel application $\send{a} v$ first transmits $v$ and, immediately after that, starts to listen on the same channel by reducing to $\get{a} v$  and waiting for a message that will replace the whole term $\get{a} v$. To formalize the relation between a process $v_i$ and all the
processes $v_j$ such that $v_i$ can send messages to $v_j$, we need to
look at the structure of the axiom schema $\ax$, because its instances
may lose information about its general shape. For this purpose, we
introduce the concept of outlink.
% \begin{definition}[Outlinks] \label{def:outlink}
% Given a term $\ppp{a}{\ax}{v_{1}\p\dots\p v_{m}}$, where 
% \[\ax = (\mathbb{A}_1 \impl \mathbb{A}_1 \et \mathbb{B}_1)\vel
% \shortdots \vel (\mathbb{A}_m \impl \mathbb{A}_m\et \mathbb{B}_m) \]
% for $i, j \in \{1, \shortdots , m \}$ and $i\neq j$
% we say that the term $v_i $ is \textbf{outlinked} to the term $v_j$
% if $\, \mathbb{B}_j = \mathbb{A}_{k_1} \et \shortdots \et
% \mathbb{A}_i \et \shortdots \et \mathbb{A}_{k_p}$. 
% \end{definition}
\begin{definition}[Outlinks] \label{def:outlink}
Let  $\ppp{a}{\ax}{v_{1}\p\dots\p v_{m}}$ be a term, where 
$\ax = (\mathbb{A}_1 \impl \mathbb{A}_1 \et \mathbb{B}_1)\vel
\shortdots \vel (\mathbb{A}_m \impl \mathbb{A}_m\et \mathbb{B}_m)$.
For any $i, j \in \{1, \shortdots , m \}$ and $i\neq j$,
we say that the term $v_i $ is \textbf{outlinked} to the term $v_j$
if $\, \mathbb{B}_j = \mathbb{A}_{k_1} \et \shortdots \et
\mathbb{A}_i \et \shortdots \et \mathbb{A}_{k_p}$. 
\end{definition}
% The direction of the communication and the direction
% of $\impl$ are reversed, because the type of the messages, which are
% arguments of channel occurrences $a^{\scriptscriptstyle A_{i} \impl
% A_i \et B_i}$, must be contained in the type of the applications of the
% receiving channel $a^{\scriptscriptstyle A_{j} \impl A_j \et
% ({A}_{k_1} \et \shortdots \et {A}_i \et \shortdots \et {A}_{k_p})}$.
\begin{remark}
The restriction of (1-depth) proofs forces the derivations of the premises of the rule $(\ax)$ to be $\IL$ proofs; by removing this restriction, the rule is equivalent to an (unrestricted) instance of the axiom $\ax$, see~\cite{CG2018}. 
% See Proposition~\ref{prop:corres} in the appendix.
\end{remark}

\section{Communications in $\lama$}\label{sec:red}

% After having shown how to install the communication channels between
% processes, we present here the basic rules for making the
% communicantions happen: the reduction rules of $\lama$. As it will be
% shown in Section~\ref{sec:computing}, these are ...

%As shown in Sec.~\ref{sec:computing}, this communication mechanism is enough to code interesting
%parallel programs. 

%In Sec.~\ref{sec:calc}, we will cover the full set of
%reduction rules, which delivers code mobility techniques, but at the
%price of complicating the system. 

In Sec.~\ref{sec:types} we showed how to install communication
channels connecting processes, we present now the reduction rules of
$\lama$ that implement the actual communications. 
%although 
Since these communications are higher-order, they 
can transmit arbitrary simply typed
$\lambda$-terms as messages, provided their free variables are not
bound in the surrounding context. Unlike in \cite{ACGTCS}, messages are thus not restricted to values and communication channels have directions.  The reduction rules of $\lama$
comprise two groups of rules: those for the simply typed
$\lambda$-calculus (\emph{intuitionistic reductions}), and those
that deal with process communication (\emph{cross
reductions}). \\ %We now discuss the rules and their motivation in detail.\\
\textbf{Intuitionistic Reductions.} The usual computational rules
for the simply typed $\lam$-calculus represent the operations of
applying a function to an argument and extracting a component from a pair~\cite{Girard}. From the logical point of view, they are the standard
Prawitz' reductions~\cite{Prawitz} of the natural deduction calculus $\NJ$ for $\IL$.\\
\noindent \textbf{Cross Reductions}.  Their goal is to implement
communication, namely to transmit programs in the form of simply typed
$\lambda$-terms. A very simple example of cross reduction for the $(em)$ topology shown in Section \ref{sec:intro} is the following:\vspace{.3pt}

$\qquad\qquad \qquad\qquad\quad\;\; \pp{a}{\mathcal{C} [\send{a}v] \p \mathcal{D}[\get{a}w]}\;\mapsto \;\pp{a}{\mathcal{C}[\get{a}v] \p \mathcal{D} [\langle w, v\rangle ]}$
 
\noindent where $\mathcal{C} [\; ], \mathcal{D}[\;],v$ and $w$ contain no channels. In general,  since $\lama$ terms may contain more than one output channel
application, we first have to choose which application will transmit
the next message.
% the first choice to make is which
% occurrence should contain the next output message.
For example, here we have two communicating processes, each containing
two threads:\vspace{.3pt}

$(\ast) \qquad\qquad\qquad\quad \pp{a}{ \;\;  ( \get{a}\, r
\p \send{a} ( x( \send{a} \, s)) )\;\;  \p\;\;  ( \get{a}\, u \p \get{a} ( y (\get{a} w) ) ) \;\; }$\vspace{.3pt}

\noindent where $r, s , u, w$ are simply typed $\lam$-terms not
containing $a$. The first process $\get{a}\, r \p \send{a} ( x( \send{a} \, s))$ contains
two occurrences of the output channel $\send{a}$. Let us focus on its second
thread $  \send{a} ( x( \send{a} \, s)) $.  The channel application $\send{a} ( x( \send{a} \, s))$
cannot transmit the message $x (\send{a}\, s)$, because the channel $a$ might
be used with a different type in the second process, so type
preservation after reduction would fail. Hence the only
possibility here is to choose the second channel application $\send{a}\, s$
as the one containing the output message, in this case $s$. In
general, to make sure that the message does not contain the channel
$a$, it is enough to choose as application occurrence that contains
the output message the \emph{rightmost occurrence} of the  channel ${a}$
in the whole process, provided it occurs as an actual output channel $\send{a}$. Hence, in any process, the rightmost thread that
contains the channel $\send{a}$ may contain the message. If instead the \emph{rightmost occurrence} of $a$ is of the form $\get{a}$, the process has no message to send.

The second choice is  which occurrence of an input channel application should receive the current message. For example, in the term $(\ast)$ above,
%let us consider again the term $(\ast)$ above.
% \begin{align*}
%& \pp{a}{ \; (  a\, r   \p a ( x( a \, s))  )\; \p\;  (
%    a\, u  \p a ( y (a w)  ) ) \; }
%  \end{align*} 
the second process $\get{a}\, u \p \get{a} ( y (\get{a} w) )$ contains three occurrences
of $\get{a}$. Since a channel can both send and receive, and has usually sent something  before receiving, we are led to choose again the rightmost
occurrence of a channel application as receiver, provided it is an input channel.
Nonetheless, since the threads of a single process do not communicate
with each other, it is best to let all of them receive the message in
correspondence of their \emph{locally rightmost}  channel
application. Thus in the example, both $\get{a}\, u $ and $\get{a}\, w$ will
receive messages.
 
% the Floyd--Warshall algorithm of Section~\ref{sec:computing} (see
% Remark~\ref{memory})

 The third choice is what to do with the arguments of a receiving
channel. After a few programming examples, as those in
Section~\ref{sec:computing}, it is natural to
convince oneself that it is better to keep the arguments, a feature
that we call \emph{memory}. In the previous term, when $\get{a}\, u$ and $\get{a}\,
w$ receive $s$, they will be replaced respectively, with $\langle u,
s\rangle$ and $\langle w, s\rangle$.
 
%$$ \pp{a}{ \; (  a\, r   \p a ( x( a \, s))  )\; \p\;  (
%    a\, u  \p a ( y (a w)  ) ) \; }$$

Continuing our example and summing up, we will have the following
reduction of $(\ast)$:\vspace{.3pt}

\centerline{{\small $ \pp{a}{ \; ( \get{a}\, r
\p \send{a} ( x( \send{a} \, s)) )\;\;  \p\;\;  ( \get{a}\, u \p \get{a} ( y (\get{a} w) ) ) \; } \quad \mapsto \quad \pp{a}{ \; ( \get{a}\, r \p \send{a} (
x( \get{a} \, s)))\; \p\; ( \lan u , s \ran \p \get{a
} ( y \lan w , s \ran ) ) \;
}$}}

\noindent As we can see, the message $s$ in
correspondence of the rightmost occurrence of $\send{a}$ in $ \send{a} ( x( \send{a} \,
s))$ is transmitted by the first process to the rightmost application
of $\get{a}$ in each one of the two threads $\get{a}\, u$ and $\get{a} ( y (\get{a} w) )$ of
the second process; at the same time, the output channel application $\send{a}\, s$ has turned  into the input channel application $\get{a}\, s$.

The fourth choice to make is which processes should receive messages
and which processes should send them.  As anticipated in the previous
section, we are guided by the typing. Let us consider for instance the
term $ \pp{a}{ \; x \, (\get{a}^{\scriptscriptstyle A \impl A \et B} \,s) \p
\; y\, (\send{a}^{\scriptscriptstyle B \impl B \et A}\, t) \; }$, where $s$
and $ t$ are specific simply typed $\lam$-terms not containing $a$, while $x:A \et B \impl C$
and $y:B \et A \impl C$.  A cross reduction rule corresponding to this
typing rule admits communication in two directions, from left to right
and from right to left, because, according to the types, the second
process can receive messages from the first and viceversa. The actual direction of the message is determined however by the rightmost occurrences of the channel $a$ in the two processes. In this case, the rightmost occurrence of $a$ in the first process is an input channel, while the rightmost occurrence of $a$ in second process is an output channel.   Hence the
reduction is from right to left and is $ \quad \pp{a}{ \; x \,
(\get{a}^{\scriptscriptstyle A \impl A \et B} \,s) \p \; y\,
(\send{a}^{\scriptscriptstyle B \impl B \et A}\, t) \; } \;\mapsto\; \pp{a}{
\; x \, \lan s, t\ran\p y\, (\get{a}^{\scriptscriptstyle B \impl B \et A}\,
t) \ \; }\;$

%The fourth choice to make is which processes should receive messages
%and which processes should send them.  As anticipated in the previous
%section, we are guided by the typing. Let us consider for instance the
%term $ \pp{a}{ \; x \, (a^{\scriptscriptstyle A \impl A \et B} \,s) \p
%\; y\, (a^{\scriptscriptstyle B \impl B \et A}\, t) \; }$, where $s$
%and $ t$ are specific simply typed $\lam$-terms, $x:A \et B \impl C$
%and $y:B \et A \impl C$.  A cross reduction rule corresponding to this
%typing rule admits communication in two directions, from left to right
%and from right to left, because, according to the types, the second
%process can receive messages from the first and viceversa. The
%reduction rule for the left to right direction, for instance, is$ \quad \pp{a}{ \; x \, (a^{\scriptscriptstyle A \impl A \et B} \,s) \p \; y\,
%(a^{\scriptscriptstyle B \impl B \et A}\, t) \; } \;\mapsto\; \pp{a}{
%\; x\, (a^{\scriptscriptstyle A \impl A \et B} \,s) \p \; y \, \lan t,
%s\ran \; } \quad$ whereas from right to left is $ \quad \pp{a}{ \; x \,
%(a^{\scriptscriptstyle A \impl A \et B} \,s) \p \; y\,
%(a^{\scriptscriptstyle B \impl B \et A}\, t) \; } \;\mapsto\; \pp{a}{
%\; x \, \lan s, t\ran\p y\, (a^{\scriptscriptstyle B \impl B \et A}\,
%t) \ \; }\;$. We can thus see that our reduction rules are naturally
%non-deterministic. When writing complex programs it is possible to fix
%a reduction strategy that deterministically selects at each step which
%process should receive the next batch of messages,
%see~\cite{MiniArxiv}. Programmers can thus predict the behaviour
%of their parallel code.

To define communication reductions, we need to introduce two
kinds of contexts: one for terms that can communicate, one for terms
which are in parallel but cannot communicate.

\begin{definition}[Simple Parallel Term]\label{def:simpparterm} A
\textbf{simple parallel term} is a $\lama$ term $t_{1}\p\ldots \p t_{n}$, where each
$t_{i}$, for $1\leq i\leq n$, is a simply typed $\lambda$-term.\end{definition}

% We call $t_{1} , \ldots , t_{n}$ the \textbf{threads} of
% $t_{1}\p\ldots \p t_{n}$.

\begin{definition}\label{defisimplec}\label{def:simpleparallelcont} A
\textbf{context} $\mathcal{C}[\ ]$ is a $\lama$ term with some fixed
variable $[\; ]$ occurring %exactly once.
\begin{itemize}
\item A \textbf{simple context} is a context which is a simply typed
$\lambda$-term.
\item A \textbf{simple parallel context} is a context which is a
simple parallel term.
\end{itemize} For any $\lama$ term $u$ of the same type of $[\; ]$,
$\mathcal{C}[u]$ denotes the term obtained replacing $[\; ]$ with $u$
in $\mathcal{C}[\ ]$, \emph{without renaming bound variables}.\end{definition}
We explain the general case of the cross reduction.  The rule
identifies a single process as the receiver and possibly several
processes as senders. Once the receiving process is fixed, the senders
are determined by the axiom schema $\ax$, that is instantiated by the type of the
communication channel occurring in the receiving process.
In particular, in the term\vspace{.3pt}

$(\star) \qquad\qquad  \ppp{a}{\ax} {\shortdots \!\! \p \mathcal{C}_1[\send{a} w_1 ]\p
 \! \shortdots \! \!  \p (\shortdots\p{\mathcal D}_1[\get{a} \, v_1]\p \!
\shortdots \! \! \p {\mathcal D}_n [\get{a} \, v_n] \p\shortdots) \p \!
\shortdots \! \!  \p \mathcal{C}_p[\send{a} w_p ]\p \! \shortdots
} $\vspace{.3pt}

\noindent the processes $\mathcal{C}_1[\send{a} w_1\ ], \dots,
\mathcal{C}_p[\send{a} w_p\ ]$ are the senders and $(\shortdots\p{\mathcal D}_1[\get{a} \, v_1]\p \! \shortdots \! \! \p
{\mathcal D}_n [\get{a} \, v_n]\p\shortdots) $ in its entirety is the receiver. Formally,
$\mathcal{C}_1[\send{a} w_1\ ], \dots, \mathcal{C}_p[\send{a} w_p\ ]$ are all the
process \emph{outlinked} to the process
$(\shortdots\p{\mathcal D}_1[\get{a} \, v_1]\p \! \shortdots \! \! \p
{\mathcal D}_n [\get{a} \, v_n]\p\shortdots) $, see Definition \ref{def:outlink}. In this latter process, we
have displayed  the threads that actually contain the channel $a$:
all of them will receive the messages.  Consistently with our choices,
the displayed occurrences of $a$ are rightmost in each ${\mathcal
D}_j[\get{a} \, v_j]$ and in each $\mathcal{C}_j[\send{a}\, w_j]$. The processes
containing $w_1, \ldots , w_p$ send them to all the rightmost
occurrences of $\get{a}\, v_{1}, \dots, \get{a}\, v_{n}$ in the processes ${\mathcal D}_1 , \ldots ,
{\mathcal D}_n$:\vspace{.3pt}

\centerline{ {\small$ (\star) \quad
\mapsto\quad \ppp{a}{\ax} { \shortdots \!\! \p \!\mathcal{C}_1[a w_1
]\p \! \shortdots\!\! \p\!\! (\shortdots \!\! \p \! {\mathcal
D}_1[\lan v_1, w_1, \shortdots , w_p \ran] \p \! \shortdots\!\! \p \!
{\mathcal D}_n [\lan v_n, w_1, \shortdots , w_p \ran]\p \! \shortdots
)\!\! \p\!  \shortdots\!\! \p \! \mathcal{C}_p[a w_p ] \p \!
\shortdots \! }$}}\vspace{.3pt}

\noindent provided that, for each $w_j$, its
free variables are free in $\mathcal{C}_j[a\, w_j]$: this condition is
needed to avoid that bound variables become free, violating the
Subject Reduction. Whenever $w_{j}$ is a closed term -- executable
code -- the condition is automatically satisfied.   As we can see, the
reduction retains all terms $v_1 , \ldots , v_n $ occurring in
${\mathcal D}_1 , \ldots , {\mathcal D}_n$ before the
communication.\begin{remark}Unlike in
$\pi$-calculus, channel \emph{occurrences} that send messages are not immediately consumed in $\lama$,
because cross reductions adopt the perspective of the receiver rather
than that of the senders.
% A process may wish to send the same message
%to multiple processes, so it is reasonable to allow this by keeping
%the same output channel application active after the reduction.
 However, after transmission, the output
channel occurrence will be turned immediately into an input channel occurrence, which will be consumed when the
process is selected as current receiver. 
%For  the purposes
%of deterministic programming, we can define a discipline to ensure that every process 
%sends its messages to all
%potential receivers exactly once~\cite{MiniArxiv}.
\end{remark}
The last choice is what to do with the threads or processes that do
not contain any communication channel. The idea is that whenever a
term contains no channel occurrence, it has already reached a result,
as it does not need to interact with the context.  Hence at the end of
the computation we select some of the processes that have reached
their own results and consider them all together the global result of
the computation.  Thus we introduce the simplification reduction in
Table~\ref{tab:red}, which also displays all the reduction rules of
$\lama$.
%
%
%\begin{flushleft}
%  \textbf{Parallel Operator Permutations }  \end{flushleft}
%
%$\ppp{a}{\ax}{u_{1}\p\! \shortdots \!\!  \p u_{m}}\, \xi \mapsto
%\ppp{a}{\ax} {u_{1} \xi\p \! \shortdots \!\!\p u_{m} \xi }$ 
%
%if $\xi$
%is a one-element stack and $a$ does not occur free in $\xi$/
%\smallskip
%
% 
% $w \ppp{a}{\ax}{u_{1}\p \! \shortdots \!\! \p u_{m}} \mapsto
% \ppp{a}{\ax}{ w u_{1}\p\! \shortdots \!\! \p w u_{m}} $ if $a$ does
%    not occur free in $w $
%\smallskip
%
% $\lambda x^{\scriptscriptstyle A}\,\ppp{a}{\ax}{u_{1}\p\! \shortdots \!\! \p u_{m}} \mapsto
%\ppp{a}{\ax}{\lambda x^{\scriptscriptstyle A} u_{1}\p\! \shortdots \!\! \p \lambda x^{\scriptscriptstyle A} u_{m}}$
%\smallskip
%
%$\langle \ppp{a}{\ax}{ u_{1}\p \! \shortdots \!\!\p u_{m}},\, w \rangle \mapsto \ppp{a}{\ax}{
%  \langle u_{1}, w\rangle\p \! \shortdots \!\!\p \lan u_{m}, w\ran } $ if $a$ does
%    not occur free in $w$
%\smallskip
%
%$\langle w, \, \ppp{a}{\ax}{u_{1}\p \! \shortdots \!\!\p u_{m}}\rangle
%\mapsto \ppp{a}{\ax}{ \lan w, u_{1}\ran\p \! \shortdots \!\!\p  \lan w, u_{m} \ran
%}  $ if $a$ does not occur free in $w$
%\smallskip
%
% $ \ppp{a}{\ax}{ u_1\p \! \shortdots \!\! \p \ppp{b}{\ax '}{w_1\p  \! \shortdots \!\! \p w_n}  \! \shortdots \!\! \p
%u_m}  \mapsto   \ppp{b}{\ax '}{ \ppp{a}{\ax}{ u_1\p \! \shortdots \!\!
%    \p w_1  \shortdots \!\! \p u_m} \shortdots \!\! \p  \ppp{a}{\ax}{ u_1\p \! \shortdots \!\! \p w_n  \shortdots \!\! \p u_m}}  $
%if the communication complexity of $a$ is greater than $0$
%
%
%
% {\   $ \inj _i (t) [x_0.u_0,x_1.u_1] \mapsto u_i [t/x_i]$}
%
\begin{table*}[t]
\begin{footnotesize}\begin{flushleft} \textbf{Intuitionistic
Reductions} $\quad\quad (\lambda x^{\scriptscriptstyle A}\, u)t\mapsto
u[t/x^{\scriptscriptstyle A}] \qquad \pair{u_0}{u_1}\,\pi_{i}\mapsto
u_i \;  \mbox{ for $i=0,1$} $\vspace{-5pt}\end{flushleft}\textbf{Cross Reductions: Communication}\vspace{-5pt}
  \begin{center} $\ppp{a}{\ax} {\shortdots\!\! \p \mathcal{C}_1[\send{a} w_1
]\p\!\shortdots\!\! \p \! \!(\shortdots \!\!\p {\mathcal D}_1[\get{a} \,
v_1]\p \!\shortdots\!\! \p {\mathcal D}_n [\get{a} \, v_n]\p \!  \shortdots
) \! \! \p \! \shortdots\!\! \p \mathcal{C}_p[\send{a} w_p ]\p \! \shortdots
} $$\quad \mapsto$\\$\ppp{a}{\ax} { \shortdots \!\! \p \!\mathcal{C}_1[\get{a}
w_1 ]\p \! \shortdots\!\! \p\!\! (\shortdots \!\! \p \!  {\mathcal
D}_1[\lan v_1, w_1, \shortdots , w_p \ran] \p \!  \shortdots\!\! \p \!
{\mathcal D}_n [\lan v_n, w_1, \shortdots , w_p \ran]\p \! \shortdots
)\!\! \p\!  \shortdots\!\! \p \!  \mathcal{C}_p[\get{a} w_p ] \p \!
\shortdots \! }$\end{center} where 
$\mathcal{C}_1[\send{a}\, w_1], \dots, \mathcal{C}_p[\send{a}\, w_p]$ are all the
processes outlinked to the process $ ( \shortdots \p{\mathcal D}_1[\get{a} \, v_1]
  \p  \shortdots  \p {\mathcal D}_n  [\get{a} \, v_n] \p \shortdots ) $; the displayed occurrences of $a$  are rightmost in each
${\mathcal D}_j[\get{a} \, v_j]$ and in each
$\mathcal{C}_j[\send{a}\, w_j]$; 
%each displayed occurrence of $\send{a}$ in each $\mathcal{C}_1[\send{a}\, w_1], \dots, \mathcal{C}_p[\send{a}\, w_p]$ is a sending occurrence; 
each
$\mathcal{D}_j$ is a simple context and each $\mathcal{C}_j$ is a
simple parallel context; the free variables of each $w_j$ are
free in $\mathcal{C}_j[\send{a}\, w_j]$; 
%each displayed occurrence $a$ in each  $\mathcal{C}_1[\send{a}\, w_1], \dots, \mathcal{C}_p[\send{a}\, w_p]$ is a receiving occurrence.
\\
\vspace{-5pt}

\textbf{Cross Reductions: Simplification}$\quad$\begin{minipage}[h]{0.6\linewidth}
  $\quad \ppp{a}{\ax}{ (u_1 \p \! \shortdots \!\! \p u_n) \p \!
    \shortdots \!\! \p (u_{m} \p \! \shortdots \!\! \p u_{p}) }
  \mapsto u_{i_1} \p \! \shortdots \!\!  \p u_{i_q}$

  whenever $ u_{i_1} , \shortdots , u_{i_q}$ do not contain $a$ and
  $1\leq i_1 < \! \shortdots \!\! < i_q \leq p$.
\end{minipage}
% $\ppp{a}{\ax}{ u_1 \p \! \shortdots \!\! \p u_m } \mapsto v_1^1 \p \! \shortdots \!\! \p
% v_{1}^{n_1} \p \! \shortdots \!\! \p v_m^1 \p \! \shortdots \!\! \p v_{m}^{n_m} $ where each
% $v^j_k$ is a simply typed $\lam$-term that does not contain $a$ and,
% for each $1 \leq k \leq m $, $u_k = \! \shortdots \!\! \p v_k^1 \p \! \shortdots \!\! \p
% v_{k}^{n_k} \p \! \shortdots \!\! $
\end{footnotesize}
\hrule
\caption{Reduction Rules for  $\lama$.% 
% where $\ax  \in \axx$.
% is the schema $(A_1' \impl A_1' \et B_1')\vel \ldots \vel (A_m' \impl
% A_m'\et B_m') $.
}
\label{tab:red}
\end{table*}As usual, we adopt the reduction schema:
$\mathcal{C}[t]\mapsto \mathcal{C}[u]$ whenever $t\mapsto u$ and for
any context $\mathcal{C}$. We denote by $\mapsto^{*}$ the
reflexive and transitive closure of the one-step reduction
$\mapsto$.
%SPACE 
%In  $\lama$, computation steps  preserve the type of terms, as formally stated below.

%The reduction rules of $\lama$ are all displayed in Table~\ref{tab:red}.
%Now we have all elements to define the reduction rules for
%$\lama$ terms.  Table~\ref{tab:red} shows them.
%We explain now the basic reduction rules for the proof terms of
%$\lama$, which are given in Table~\ref{tab:red}. 
%As usual, the terms occurring  to the left of $\mapsto$ 
%are called {\bf redexes}. Moreover, we adopt

\begin{theorem}[Subject Reduction]\label{subjectred}
If $t : A$ and $t \mapsto u$, then $u : A$ and all the free variables
of $u$ appear among those of $t$.
\end{theorem}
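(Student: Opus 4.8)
The plan is to prove the two conclusions --- preservation of the type $A$ and containment of the free variables --- \emph{simultaneously}, by induction on the derivation of $t \mapsto u$. The free-variable containment is not a mere bonus: it is exactly what makes the contextual-closure step go through. Indeed, $\mapsto$ is the compatible closure of the base rules under the schema $\mathcal{C}[t']\mapsto\mathcal{C}[u']$, and that step rests on a \textbf{replacement lemma}: if $t':B$ and $u':B$ with the free variables of $u'$ among those of $t'$, and if $\mathcal{C}[t']:A$, then $\mathcal{C}[u']:A$ and the free variables of $\mathcal{C}[u']$ are among those of $\mathcal{C}[t']$. This lemma is proved by a routine induction on $\mathcal{C}$; its one delicate point is that substitution into the hole is performed \emph{without renaming bound variables}, so we must ensure that the free variables of $u'$ cause no capture. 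Precisely the inclusion of the free variables of $u'$ in those of $t'$ guarantees this, since $t'$ was already placed in $\mathcal{C}$ without capture. It therefore suffices to check the base rules.

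For the intuitionistic reductions the argument is standard. The $\beta$-case $(\lambda x^{A} u)t \mapsto u[t/x^{A}]$ follows from the usual \textbf{substitution lemma} for the simply typed $\lambda$-calculus: from $x^{A}:A \vdash u:B$ and $\vdash t:A$ (in the appropriate common context) we obtain $u[t/x^{A}]:B$, and the free variables of $u[t/x^{A}]$ are those of $u$ other than $x^{A}$ together with those of $t$, all occurring in $(\lambda x^{A} u)t$. The projection cases $\pair{u_0}{u_1}\pi_i \mapsto u_i$ are immediate from the typing rules for pairing and projection, and they only discard free variables.

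The heart of the proof is the communication cross reduction. Take the receiver to be process $v_j$, so every $\get{a}$ occurring in it carries type $A_j\impl A_j\et B_j$ and hence $\get{a}\,v_l : A_j\et B_j$ with $v_l : A_j$. By Definition~\ref{def:outlink} the senders outlinked to $v_j$ are exactly the processes whose index appears among the conjuncts of $B_j$: if $B_j = A_{k_1}\et\shortdots\et A_{k_p}$, then the message $w_r$ emitted by the $r$-th outlinked sender has type $A_{k_r}$, since $\send{a}^{A_{k_r}\impl A_{k_r}\et B_{k_r}}w_r$ forces $w_r : A_{k_r}$. Consequently $\lan v_l, w_1,\shortdots,w_p\ran : A_j\et A_{k_1}\et\shortdots\et A_{k_p} = A_j\et B_j$, so each receiving occurrence keeps its type. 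On the sender side, the occurrence $\send{a}\,w_r : A_{k_r}\et B_{k_r}$ becomes $\get{a}\,w_r$, which has the same type because the input and output channels share the annotation $A_{k_r}\impl A_{k_r}\et B_{k_r}$, so the surrounding contexts $\mathcal{C}_r$ remain typed as before; hence $u:A$. For the free variables, the only relocated terms are the messages $w_1,\dots,w_p$, copied into the receiver. The side condition of the rule states that the free variables of each $w_r$ are already free in $\mathcal{C}_r[\send{a}w_r]$, hence free in $t$, so no free variable of $u$ is created anew and the containment holds.

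Finally, the simplification reduction $\ppp{a}{\ax}{\cdots}\mapsto u_{i_1}\p\shortdots\p u_{i_q}$ preserves the type $A$ because each selected thread $u_{i_l}$ already has type $A$ (being one of the $\lambda$-terms typed by a premise of the rule), and $u_{i_1}\p\shortdots\p u_{i_q}$ is then typed by $(\contr)$; the free-variable condition is trivial as we only erase subterms. The main obstacle is the communication case: everything hinges on verifying that the outlink relation makes the types of the transmitted messages line up exactly with the conjuncts of $B_j$, so that the reconstructed tuple $\lan v_l, w_1,\shortdots,w_p\ran$ indeed has type $A_j\et B_j$, and on checking that the side condition on the free variables of the $w_r$ simultaneously prevents capture in the contextual step and the creation of spurious free variables.
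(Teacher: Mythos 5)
Your proof is correct and follows essentially the same route as the paper's, which likewise treats the cross reduction as the only non-trivial case and checks that the outlink relation aligns the message types $A_{k_1},\dots,A_{k_p}$ with the conjuncts of $B_j$ (so that $\lan v_l, w_1,\shortdots,w_p\ran$ has type $A_j\et B_j$) and that the free-variable side condition prevents the creation of new free variables. The one point the paper makes explicit that you leave implicit is that, because the displayed $\send{a}$ is rightmost in each $\mathcal{C}_r[\send{a}\,w_r]$, the message $w_r$ contains no occurrence of $a$; this matters because copying $w_r$ into the receiver would otherwise introduce occurrences of $a$ typed by the sender's premise into a process where the rule $(\ax)$ requires all occurrences of $a$ to carry the receiver's type.
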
 
\begin{proof} 
%See Appendix.
%Since the proof that the intuitionistic reductions preserve the type
%is completely standard, it is enough to prove the theorem for the
%cross reductions.
The only not trivial case is that of 
%Reductions of the form %$\; u \p u \mapsto u\; $ and
% $\; \ppp{a}{\ax}{
%(u_1 \p \!  \shortdots \!\! \p u_n) \p \! \shortdots \!\! \p (u_{m} \p
%\!  \shortdots \!\! \p u_{p}) } $ $\mapsto$ $ u_{i_1} \p \! \shortdots
%\!\! \p u_{i_q}\; $ trivially preserve the type of terms since all
%processes $u_k$ must have the same type. It is clear, moreover, that
%no new free variable is created.
%
%Suppose then that we perform a
%cross reduction step as follows:\begin{center}{\footnotesize
cross reductions. Assume the step is as follows:\begin{center}{\footnotesize
    $\ppp{a}{\ax} {\shortdots\!\! \p \mathcal{C}_1[\send{a} w_1
      ]\p\!\shortdots\!\! \p \! \!(\shortdots \!\!\p {\mathcal
        D}_1[\get{a}^{\scriptscriptstyle A _i \impl A_i \et B_i } \, v_1]\p \!\shortdots\!\! \p
      {\mathcal D}_n [\get{a}^{\scriptscriptstyle A _i \impl A_i \et B_i } \, v_n]\p \!
      \shortdots ) \! \! \p \! \shortdots\!\! \p \mathcal{C}_p[\send{a} w_p
      ]\p \! \shortdots } $
\\    $\mapsto$\\
    $\ppp{a}{\ax} { \shortdots \!\! \p \!\mathcal{C}_1[\get{a} w_1 ]\p \!
      \shortdots\!\!  \p\!\! (\shortdots \!\! \p \! {\mathcal
        D}_1[\lan v_1, w_1, \shortdots , w_p \ran] \p \!
      \shortdots\!\! \p \! {\mathcal D}_n [\lan v_n, w_1, \shortdots ,
      w_p \ran]\p \! \shortdots )\!\!  \p\!  \shortdots\!\! \p \!
      \mathcal{C}_p[\get{a} w_p ] \p \!  \shortdots \! }$}
  \end{center} As the type of the channel $a$ is an instance
$(A_1 \impl A_1 \et B_1)\vel \! \shortdots \!\! \vel (A_m \impl A_m\et
B_m) $ of the schema $\ax = (\mathbb{A}_1 \impl \mathbb{A}_1 \et
\mathbb{B}_1)\vel \shortdots \vel (\mathbb{A}_m \impl \mathbb{A}_m\et
\mathbb{B}_m) $ where $\mathbb{B}_i = \mathbb{A}_{k_1} \et \shortdots
\et \mathbb{A}_{k_p}$, and as $ w_1 : A_{k_1} , \shortdots , w_p :
A_{k_p} $ where $A_{k_1} , \shortdots , A_{k_p}$ instantiate
$\mathbb{A}_{k_1} , \shortdots , \mathbb{A}_{k_p}$, then
$(\shortdots \!\! \p \! {\mathcal D}_1[\lan v_1, w_1,
\shortdots , w_p \ran] \p \! \shortdots\!\! \p \! {\mathcal D}_n [\lan
v_n, w_1, \shortdots , w_p \ran]\p \! \shortdots )$ is well defined
and its type is the same as that of $(\shortdots \!\!\p {\mathcal
D}_1[\get{a}^{\scriptscriptstyle A _i \impl A_i \et B_i } \, v_1]\p \!\shortdots\!\! \p
{\mathcal D}_n [\get{a}^{\scriptscriptstyle A _i \impl A_i \et B_i } \, v_n]\p \!  \shortdots
)$. Hence the term type term does not change.

Since the displayed occurrences of $a$ are rightmost in each
$\mathcal{C}_j[\send{a}\, w_j]$ and thus $a$ does not occur in $w_j$, no
occurrence of $a$ with type different from $A _i \impl A_i \et B_i$ occurs in the terms ${\mathcal D}_1[\lan
v_1, w_1, \shortdots , w_p \ran] , \shortdots , {\mathcal D}_n [\lan
v_n, w_1, \shortdots , w_p \ran]$.
Finally, the free variables of each $w_j$ are free also in
$\mathcal{C}_j[\send{a}\, w_j]$, and thus no new free variable is created by
the reduction.

\end{proof} 
\begin{remark}
The communication reductions of $\lama$ allow
processes to communicate independently of their order, and hence the parallel composition
needs not to be commutative.
\end{remark}

\section{From Communication Topologies to Programs} 
\label{sec:topologies}

% To code a parallel algorithm or a concurrent system into a concrete
% programming language, one usually starts from a communication topology
% specified by a directed graph . The next task is to set up the
% communication structure between the processes involved in the
% computation.

We present a method for automatically extracting $\lama$ typing rules
from graph-specified communication topologies.  Given a
directed, reflexive graph $G$ whose nodes and edges 
represent
respectively processes and communication channels, we describe how to
transform it into an axiom schema $\ax \in\axx$ corresponding to a typing
$(\ax)$ rule for $\lama$ terms. We will show that this typing rule encodes a process topology
which exactly mirrors the graph $G$: two processes may communicate if
and only the corresponding graph nodes are connected by an edge and
the direction of communication follows the edge. In other
words, the edges of the graph correspond to the \emph{outlinked}
relation between processes of Definition~\ref{def:outlink}.

% Hence, any communication topology can be represented by a
% logical formula of the form

%  \[ (A_1 \impl A_1 \et B_1)\vel \dots \vel (A_m \impl A_m\et B_m) \]
% where $A_i \neq A_j$ if $i\neq j$ and for any $B_i \in \{B_1, \dots ,
% B_m \}$ there exist $A_{k_1} , \dots , A_{k_p} \in \{A_1, \dots , A_m
% \}$ such that if $B_i \neq \fal$ then $B_i = A_{k_1} \et \dots \et
% A_{k_p}$ and, for $j \in \{k_1 , \dots , k_p\}$, $A_i \neq
% A_{k_j}$. We denote by $\axx $ the class of such formulae.

\begin{procedure}\label{proc:topology}
Given a directed reflexive graph $G = (V, E)$ with $k=|V|$, the axiom
schema $\ax$ encoding $G$ is $\mathbb{C}_{1}\lor\dots\lor
\mathbb{C}_{k}$ such that for each $n\in\{1, \dots, k\}$:
 \begin{itemize}
\item $\mathbb{C}_{n}=\mathbb{A}_n \impl \mathbb{A}_n \wedge
\mathbb{A}_{i_1} \et \ldots \et \mathbb{A}_{i_m}$, if the $n$-th node
in $G$ has incoming edges from the non-empty list of pairwise distinct
nodes $i_1, \ldots , i_m\neq n $;
      \item $\mathbb{C}_{n}= \mathbb{A}_n \impl \mathbb{A}_n \et
\fal$, if the $n$-th node in $G$ has only one incoming edge.
      \end{itemize}\end{procedure}
We show an example and then state the correspondence between
graphs and $\lama$ reductions.

%
%  \begin{itemize}
%  \item Input a directed reflexive graph $G$
%\item Fix a propositional letter $A_{n}$ for each node $n$ of $G$
%  \item Let $\ax$ be the empty string
%  \item while $G$ contains unmarked nodes
%    \begin{enumerate}
%    \item select some unmarked node $n$
%    \item \begin{itemize}
%      \item if $n$ has incoming edges from the nodes
%        $i_1, \ldots , i_m$,
%        let\\
%        $\ax := \ax \vel (A_n \impl  A_{i_1}\ldots \et \ldots A_{i_m}
%        )$
%      \item if $n$ has only one incoming edge,
%      let\\
%        $\ax:=\ax \vel ( A_n \impl A_n \et \fal)$
%      \end{itemize}
%    \item mark $n$
%    \end{enumerate}
%  \item output $\ax$
%  \end{itemize}

\begin{example}\label{ex:topology}
Consider the axiom below corresponding to the graph\begin{center}
  \tikzstyle{proc}=[circle, minimum size=4.5mm, inner sep=0pt, draw]
  \begin{tikzpicture}[node distance=1.5cm,auto,>=latex', scale=0.3]
    \node [proc] (4) at (-7,1.5) {4};
    
    \node [proc] (3) at (9,1.5) {3};

    \node [proc] (2) at (4,3) {2};

    \node [proc] (1) at (-2,3) {1};

    \path[<->] (1) edge [thick, bend left=10] (2);

    \path[->] (1) edge [thick, bend right=20] (3);

    \path[->] (2) edge [thick, bend left=10] (3);

    \path[->] (4) edge [thick, bend left=10] (1);

    \path[->] (1) edge [thick, loop below] (1);
    
    \path[->] (2) edge [thick, loop below] (2);

    \path[->] (3) edge [thick, loop right] (3);

    \path[->] (4) edge [thick, loop left] (4);
  \end{tikzpicture}\end{center}
  $$(\mathbb{A}_1\impl \mathbb{A}_1 \et \mathbb{A}_2 \et \mathbb{A}_4)
\vel (\mathbb{A}_2\impl \mathbb{A}_2 \et \mathbb{A}_1) \vel
(\mathbb{A}_3\impl \mathbb{A}_3 \et \mathbb{A}_1 \et \mathbb{A}_2)\vel
(\mathbb{A}_4\impl \mathbb{A}_4 \et
\fal) $$\end{example}\vspace{-5pt}The rule extracted from a graph
forces communications to happen as indicated by the edges of the
graph. Indeed the following holds.
\begin{proposition}[Topology Correspondence] For any directed reflexive
graph $G$ and term $t:= \pp{a}{u_1 \p \ldots \p u_m}$ typed using the rule
corresponding to the axiom extracted from $G$ by
Procedure~\ref{proc:topology}, there is a cross
reduction for $t$ that transmits a term $w$
from $u_x$ to $u_y$ if and only if $G$ contains an edge from $x$ to
$y$.\end{proposition}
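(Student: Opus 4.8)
The plan is to reduce the statement to a chain of equivalences linking the combinatorics of $G$, the syntactic shape of the extracted axiom, the \emph{outlinked} relation of Definition~\ref{def:outlink}, and the applicability of the communication rule of Table~\ref{tab:red}. Writing the axiom produced by Procedure~\ref{proc:topology} as $\ax = \mathbb{C}_1 \vel \dots \vel \mathbb{C}_k$ with $\mathbb{C}_n = \mathbb{A}_n \impl \mathbb{A}_n \et \mathbb{B}_n$, the central observation, which I would isolate as a short lemma proved by case distinction on the two clauses of the procedure, is that for $x \neq y$ the conjunct $\mathbb{A}_x$ occurs in $\mathbb{B}_y$ if and only if $G$ contains an edge from $x$ to $y$: the first clause lists in $\mathbb{B}_y$ exactly the sources $i_1, \dots, i_m \neq y$ of the non-reflexive incoming edges of node $y$, while the second clause sets $\mathbb{B}_y = \fal$ precisely when $y$ has no incoming edge other than its reflexive loop.

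Next I would connect this to the outlinked relation. Since the type of the channel $a$ in $t$ is an instance of $\ax$, Definition~\ref{def:outlink} gives that $u_x$ is outlinked to $u_y$ exactly when $\mathbb{A}_x$ appears among the conjuncts of $\mathbb{B}_y$. Combining this with the lemma yields the static half of the correspondence: $u_x$ is outlinked to $u_y$ if and only if $G$ has an edge $x \to y$. It is worth stressing that this half depends only on the typing rule $(\ax)$ and not on the particular simply typed $\lambda$-terms filling the processes.

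Then I would establish the dynamic half, relating the outlinked relation to the existence of a cross reduction. The \emph{only if} direction is immediate from the shape of the communication rule of Table~\ref{tab:red}: in any cross reduction the receiver is a single process and the senders $\mathcal{C}_1[\send{a} w_1], \dots, \mathcal{C}_p[\send{a} w_p]$ are, by definition, all and only the processes outlinked to it; hence if a reduction transmits $w$ from $u_x$ to $u_y$, then $u_x$ is outlinked to $u_y$, so by the static half $G$ contains the edge $x \to y$. For the \emph{if} direction, given an edge $x \to y$ the static half makes $u_x$ outlinked to $u_y$, and it then suffices to exhibit one reduction instance realizing the transmission, namely by selecting $u_y$ as receiver and letting $u_x$ play the role of a sender $\mathcal{C}_j[\send{a} w]$, with $w$ the message carried by the rightmost output occurrence of $a$ in $u_x$ and the rightmost occurrence of $a$ in each displayed thread of $u_y$ being an input occurrence $\get{a} v$.

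The step I expect to be the main obstacle is precisely this last existence claim, since the applicability of the rule is constrained by the \emph{rightmost-occurrence} conditions on $a$ and by the side condition that the free variables of $w$ stay free in $\mathcal{C}_j[\send{a} w]$. I would therefore argue that, the outlinked relation being independent of term structure, one may always place $u_x$ and $u_y$ in a configuration in which $u_x$ ends in an output occurrence $\send{a} w$ and each displayed thread of $u_y$ ends in an input occurrence $\get{a} v$, so that the rule fires and delivers $w$ from $u_x$ to $u_y$. Finally, I would note that the correspondence concerns communication between distinct processes, as cross reductions require $x \neq y$; the reflexive self-loops of $G$ carry no cross-process communication and are realized instead by the memory conjunct $\mathbb{A}_n$, consistently with the earlier convention of omitting reflexive edges from the communication topology.
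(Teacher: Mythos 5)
The paper states this proposition without giving a proof, so there is no official argument to compare yours against; judged on its own, your chain of equivalences is exactly the intended justification. The key lemma --- that for $x\neq y$ the conjunct $\mathbb{A}_x$ occurs in $\mathbb{B}_y$ if and only if $G$ has an edge from $x$ to $y$ --- is immediate from the two clauses of Procedure~\ref{proc:topology} (the second clause applying precisely when the only incoming edge of $y$ is its reflexive loop), and composing it with Definition~\ref{def:outlink} and with the side condition of the communication rule (the senders are all and only the processes outlinked to the receiver) gives the ``only if'' direction cleanly. Your closing remark that self-loops are not realized by cross reductions but by the memory conjunct $\mathbb{A}_n$ is also the right reading, since outlinking is only defined for $i\neq j$.

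The place where you are (rightly) uneasy is the ``if'' direction, and the difficulty is real: the proposition universally quantifies over the term $t$, but for a fixed $t$ the existence of a cross reduction transmitting from $u_x$ to $u_y$ depends on the syntactic shape of the processes --- $u_x$ must have a rightmost occurrence of $a$ in output form $\send{a}\,w$ with the free-variable side condition satisfied, the threads of $u_y$ containing $a$ must have rightmost input occurrences, and, reading the displayed rule literally, every other process outlinked to $u_y$ must simultaneously be ready to send. A term with no channel occurrences at all admits no cross reduction whatever $G$ is. So the ``if'' direction is provable only under the reading the surrounding prose actually gives the proposition (``two processes \emph{may} communicate if and only if\dots''), i.e., as a claim about which transmissions the typing rule \emph{permits}, witnessed by a suitably shaped term typed with $(\ax)$. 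Your resolution --- exhibiting a configuration with the required input/output polarities at the rightmost occurrences --- is the only reasonable one, but you should say explicitly that this trades the universal quantification over $t$ for an existential one (or adds a hypothesis on the shape of $u_x$, $u_y$ and the other outlinked processes); as literally stated, the ``if'' direction fails for arbitrary $t$.
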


\section{The Strong Normalization Theorem}\label{sec:norm}

We prove the strong normalization theorem for $\lama$: any reduction of every  $\lama$ term ends in a finite number of steps into a normal form. This means that the computation of any typed $\lama$ term 
 always terminates independently of the chosen reduction strategy. 
%By Subject Reduction, this implies that the corresponding natural
%deduction proofs normalize.
%The strong normalization argument for $\lama$ is new. 
Instead of struggling directly with the complicated combinatorial properties of process communication, as done for the weak normalization in~\cite{ACG2018,ACGlics2017}, we reduce the strong normalization of $\lama$ to the strong normalization of a non-deterministic reduction relation over simply typed $\lambda$-terms. The idea is to  simulate communication by non-determinism, a technique inspired by~\cite{AschieriZH}. 

\begin{definition}[Normal Forms and Strongly Normalizable Terms]\label{def:formnorm}\mbox{}
 \begin{itemize}
\item  A \textbf{redex} is a term $u$ such that $u\mapsto v$ for some
  $v$ and reduction in Table~\ref{tab:red}. 
A term $t$ is called a \textbf{normal form} or, simply, \textbf{normal}, 
if $t$ is not a redex.
%if there is no $t'$ such that $t\mapsto t'$. %We define $\nf$ to be the set of normal $\lama$ terms. 
\item  
A finite or infinite sequence of terms
$u_1,u_2,\ldots,u_n,\ldots$ is said to be a \textbf{reduction} of $t$, if
$t=u_1$, and for all  $i$, $u_i \mapsto
u_{i+1}$.
 A  term $u$ of $\lama$ is  \textbf{normalizable} if there is a finite reduction of $u$ whose last term is normal and is \textbf{strong normalizable} if every reduction of $u$ is finite. With $\sn$ we denote the set of the strongly normalizable terms of $\lama$.\end{itemize}\end{definition}

% not only simplifies that for
%$\lam_G$~\cite{ACGlics2017}, but improves and refines it.
% Indeed
%$\lama$ is a generalization of $\lam_G$, see Example~\ref{ex:goedel},
%and its normalization strategy has been modified, along with the
%termination argument, to enable parallel intuitionistic reductions
%before and after cross reductions, see points $(2)$ and $(3)$
%in Def.~\ref{defi-redstrategy}. Finally, Lemma~\ref{lem:uebermensch}
%fills a small gap in the argument of the proof~\cite{ACGlics2017}.

%We first recall the notion of normal form.

\subsection{Non-deterministic Reductions}

Strong normalization for parallel $\lambda$-calculi is an intricate
problem.  Decreasing complexity measures are quite hard to find  and indeed those introduced in \cite{ACG2018,ACGlics2017} for
proving normalization of fragments of $\lama$ fail in case of strong
normalization.  Given a term $\pp{a}{u_{1}\p\dots\p u_{n}}$, one would
like to measure its complexity as a function of the complexities of
the terms $u_{i}$, for example taking into account the number of
channel occurrences and the length of the longest reduction of
$\lambda$-redexes in $u_{i}$. However, when $u_{i}$ receives a message
its code may drastically change and both those numbers may
increase. Moreover, there is a potential circularity to address:
channels send messages and may generate new $\lambda$-calculus
redexes in the receivers; $\lambda$-calculus redexes may duplicate
channel occurrences, generating even more communications.

To break this circularity, we use a radically different
complexity measure. The complexity of a process $u_{i}$ should take
into account all the possible messages that $u_{i}$ may receive. If
$u_{i}=\mathcal{D}[\get{a}\, t ]$, after receiving a message it becomes
$\mathcal{D}[\lan t , s \ran ]$, where $s$ is an \emph{arbitrary}
simply typed $\lambda$-term from the point of view of $u_{i}$. We thus
create a reduction relation over simply typed $\lambda$-terms that
simulates the reception ``out of the blue'' of this kind of
messages. Namely, we extend the reduction relation of
$\lambda$-calculus by two new rules: i) $\send{a}^{T}\redn \get{a}^{T}$; ii) $\get{a}^{T} \!\redn\! t$ for every channel
$a^{T}$ and simply typed $\lambda$-term $t: T$ that does not contain channels.  In order to simulate  the reception of an arbitrary
batch $w$ of messages,  $t$ will be
instantiated as $\lam x \, \lan x , w\ran $ in the proof of
Th.~\ref{SNT}. With these reductions,
$\lambda$-terms that do not contain channels are the usual
deterministic ones.

\begin{definition}[Deterministic simply typed $\lambda$-terms] A
simply typed $\lambda$-term $t$ is called \textbf{deterministic}, if $t$
does not contain any channel occurrence.
\end{definition}

\begin{definition}[The non-deterministic reduction
relation $\redn$] The reduction relation $\redn$ over simply typed
$\lambda$-terms is defined as extension of the relation $\mapsto$ as
follows:\begin{footnotesize}$$ (\lambda x^{\scriptscriptstyle A}\, u)\,
t\redn u[t/x^{\scriptscriptstyle A}] \quad \quad
\pair{u_0}{u_1}\,\pi_{i}\redn u_i, \mbox{ for $i=0,1$} \quad \quad 
\send{a}^{T} \redn \get{a}^{T}$$
$$\get{a}^{T} \redn t, \ \text{ for every channel $a^{T}$ and deterministic simply typed $\lambda$-term $t: T$}$$\end{footnotesize}and as usual we close by contexts: $\mathcal{C}[t]\redn \mathcal{C}[u]$ whenever
$t\redn u$ and $\mathcal{C}[\,]$ is a simple context. We denote by $\redn^{*}$
the reflexive and transitive closure of the one-step
reduction $\redn$.\end{definition}
The plan of our proof will be to prove the strong normalization of simply typed $\lambda$-calculus with respect to the reduction $\redn$ (Corollary \ref{cor:snn}) and then derive the strong normalization of $\lama$ using $\redn$ as the source of the complexity measure (Theorem \ref{Adequacy Theorem}). The first result 
will be proved by the standard Tait--Girard reducibility technique (Definition \ref{def:red}).

We define $\snn$ to be the set of strongly normalizing simply typed
$\lambda$-terms with respect to the non-deterministic reduction
$\redn$. The reduction tree of a strongly normalizable term with
respect to $\redn$ is no more finite, but still well-founded. It is
well-known that it is possible to assign to each node of a
well-founded tree an ordinal number, in such a way that it decreases
passing from a node to any of its children. We will call the \emph{ordinal
size} of a term $t\in\snn$ the ordinal number assigned to the root of
its reduction tree and we denote it by $h(t)$; thus, if $t\redn u$,
then $h(t)>h(u)$. To fix ideas, one may define $h(t):=\mathsf{sup}\{
h(u)+1\ |\ t\rightsquigarrow u\}$.

\subsection{Reducibility and Properties of Reducible Terms}\label{section-reducibility}

We define a notion of reducibility for simply typed $\lambda$-terms  with respect to the reduction $\redn$. As usual, we prove that every reducible term is strongly normalizable w.r.t. $\redn$ and afterwards that all simply typed $\lambda$-terms are reducible. The difference with the usual reducibility proof is that we first prove that deterministic simply typed $\lambda$-terms are reducible (Th.~\ref{Adequacy Theorem Mini}), which makes it possible to prove  that channels are reducible (Prop.~\ref{prop:channels}) and finally that all terms are reducible (Th.~\ref{Adequacy Theorem}). This amounts to prove the usual Adequacy Theorem twice. %, normally proved just once as final result.
This stratification of the proof reminds of the reducibility technique employed in~\cite{Amadio} for proving weak normalization of a concurrent $\lambda$-calculus with shared memory.

\begin{definition}[Reducibility]\label{def:red}
\label{definition-reducibility}
Assume $t: C$ is a simply typed $\lambda$-term. We define the relation $t\red C$ (``$t$ is reducible of type $C$'') by induction and by cases according to the form of $C$:\\
\textbf{1.}
$t\red \emp{}$, with $\emp{}$ atomic, if and only if 
$t\in\snn$\\
\textbf{2.} $t\red {A\wedge B}$ if and only if $t\, \pi_0 \red {A}$ and $t\, \pi_1\red {B}$\\
\textbf{3.} $t\red {A\rightarrow B}$ if and only if for all $u$, if $u\red
{A}$, then $tu\red {B}$\end{definition}
We show that the set of reducible terms for a formula $C$ is a  reducibility candidate~\cite{Girard}. Neutral terms
are defined as terms that are not ``values'' and need to be further reduced.

%\subsection{Properties of Reducible Terms}\label{section-reducibilityproperties}

\begin{definition}
A simply typed $\lambda$-term  not of the form $\lambda x\, u$ or $\pair{u}{t}$
is neutral.
\end{definition}

\begin{definition}[Reducibility Candidates] Extending the approach of \cite{Girard}, we define three properties
% , \crquattro\ 
  of reducible terms $t$:
\cruno\ If $t\red A$, then $t\in \snn$, 
\crdue\ If $t \red A$ and $t\redn^{*} t'$, then $t' \red A$, and
\crtre\  If $t$ is neutral and  for every $t'$, $t\redn t'$ implies $t'\red A$, then $t \red A$.

%\crquattro\ $t=\Ecrom{}{u}{v}\red A$ if and only if $u\red A$ and $v\red A$.\\
\end{definition}
We show that every term $t$ possesses the reducibility candidate properties. The arguments for 
\cruno, \crdue, \crtre{} are  standard (see~\cite{Girard}).

\begin{proposition}[\cite{Girard}]
Every simply typed $\lambda$-term
%Let $t: C$ be a simply typed $\lambda$-term. Then $t$ 
satisfies \cruno, \crdue, \crtre.\end{proposition}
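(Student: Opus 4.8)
The plan is to establish \cruno, \crdue{} and \crtre{} simultaneously, by induction on the type $A$, following the Tait--Girard pattern \cite{Girard}; the only deviation is that the non-determinism of $\redn$ forces us to reason through the ordinal size $h$ rather than through finite reduction lengths. In the base case $A=\emp{}$ atomic, reducibility is membership in $\snn$, so \cruno{} is immediate, \crdue{} holds because any $\redn$-reduct of a term in $\snn$ is again in $\snn$ (its ordinal size being strictly smaller), and \crtre{} holds because a term all of whose one-step reducts are in $\snn$ is itself strongly normalizable. For $A=B\et C$ I would unfold reducibility through the projections: \cruno{} follows since $t\,\pi_0\red B$ lies in $\snn$ by the induction hypothesis and an infinite reduction of $t$ induces one of $t\,\pi_0$; \crdue{} follows from $t\,\pi_i\redn^{*}t'\,\pi_i$ together with the induction hypothesis; and for \crtre{} the point is that a neutral $t$ is not a pair, so every reduct of $t\,\pi_0$ has the shape $t'\,\pi_0$ with $t\redn t'$, whence $t\,\pi_0\red B$ by the induction hypothesis for $B$, and symmetrically $t\,\pi_1\red C$.

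The real work is the implication case $A=B\impl C$. For \cruno{} I would use that a fresh variable $x^{B}$, being neutral and irreducible, satisfies the premise of \crtre{} vacuously and is therefore reducible of type $B$; then $t\,x^{B}\red C$ is in $\snn$ by the induction hypothesis, so $t\in\snn$. Property \crdue{} is routine, using $t\,u\redn^{*}t'\,u$ for each $u\red B$. The crux is \crtre{}: fixing $u\red B$, the induction hypothesis \cruno{} gives $u\in\snn$, and I would run an inner induction on the ordinal size $h(u)$. Since $t$ is neutral, $t\,u$ has no head redex, so each of its reducts is either $t'\,u$ with $t\redn t'$, which is reducible of type $C$ because $t'\red B\impl C$ by hypothesis, or $t\,u'$ with $u\redn u'$, where $u'\red B$ by \crdue{} for $B$ and $h(u')<h(u)$, hence $t\,u'\red C$ by the inner induction hypothesis. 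As $t\,u$ is neutral, \crtre{} for $C$ gives $t\,u\red C$, and $u$ being arbitrary yields $t\red B\impl C$.

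I expect the main obstacle to be precisely this inner induction: the channel rules $\send{a}^{T}\redn\get{a}^{T}$ and $\get{a}^{T}\redn t$ make the $\redn$-reduction tree infinitely branching, so the induction on the argument $u$ must be carried out over its ordinal size $h(u)$ rather than over a natural-number bound on its reduction length, with \cruno{} for $B$ guaranteeing that this ordinal exists. All the remaining steps come down to the familiar observation that neutrality forbids head reductions in the projection and application cases, which is exactly what makes \crtre{} go through.
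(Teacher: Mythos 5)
Your proof is correct and follows exactly the route the paper intends: it declares the arguments for \cruno, \crdue, \crtre{} to be the standard Tait--Girard ones and defers to \cite{Girard}, which is precisely the simultaneous induction on the type that you carry out. Your one substantive addition --- replacing the usual induction on a finite reduction-length bound for $u$ in the \crtre{} arrow case by an induction on the ordinal size $h(u)$, because the rule $\get{a}^{T}\redn t$ makes the reduction tree infinitely branching --- is exactly the adjustment the paper's setup of $h(\cdot)$ is designed to support, so the two arguments coincide.
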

The next task is to prove
that all introduction rules of simply typed $\lambda$-calculus define
a reducible term from a list of reducible terms for all premises.
\begin{proposition}\label{proposition-somecases}\mbox{}
%\begin{enumerate}
(1) If for every $t\red A$, $u[t/x]\red B$, then  $\lambda x\, u\red A\rightarrow B$ and
(2) If $u\red A$ and $v\red B$, then $\pair{u}{v}\red  A\land B$.

%\item For every channel $a: T$, $a \red T$.
%\end{enumerate}
\end{proposition}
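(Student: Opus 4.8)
The plan is to prove both statements by the standard adequacy arguments, using the reducibility candidate properties \cruno{}, \crdue{}, \crtre{} established in the previous proposition. For part (2), the key observation is that by Definition~\ref{def:red}, showing $\pair{u}{v}\red A\land B$ amounts to showing $\pair{u}{v}\,\pi_0\red A$ and $\pair{u}{v}\,\pi_1\red B$. Since $\pair{u}{v}\,\pi_i$ is a neutral term, I would apply \crtre{}: it suffices to check that every $\redn$-reduct of $\pair{u}{v}\,\pi_0$ is reducible of type $A$ (and symmetrically for $\pi_1$). The reducts are obtained either by the projection rule $\pair{u}{v}\,\pi_0\redn u$, or by reducing inside $u$ or $v$. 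The first gives $u$, which is reducible by hypothesis; the others give $\pair{u'}{v}\,\pi_0$ or $\pair{u}{v'}\,\pi_0$, where $u\redn u'$ or $v\redn v'$. By \crdue{}, $u'\red A$ and $v'\red B$, so an induction on $h(u)+h(v)$ (the ordinal sizes, which are finite here since the argument is as in~\cite{Girard}) closes these cases.

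For part (1), to show $\lambda x\,u\red A\rightarrow B$ I must show, by Definition~\ref{def:red}, that for every $t\red A$ we have $(\lambda x\,u)\,t\red B$. The term $(\lambda x\,u)\,t$ is neutral, so again I appeal to \crtre{} and examine all one-step $\redn$-reducts. The principal reduct is the $\beta$-redex contraction $(\lambda x\,u)\,t\redn u[t/x]$, which is reducible of type $B$ directly by the hypothesis. The remaining reducts come from reducing inside $u$ or inside $t$, yielding $(\lambda x\,u')\,t$ or $(\lambda x\,u)\,t'$; these are handled by a side induction. The subtlety is that the hypothesis is phrased as ``$u[t/x]\red B$ for every reducible $t$,'' so when $u\redn u'$ I need $u'[t/x]\red B$ to invoke the induction hypothesis on $(\lambda x\,u')\,t$; this follows because $u[t/x]\redn u'[t/x]$ (substitution commutes with reduction) and then \crdue{} gives reducibility of $u'[t/x]$.

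The main obstacle, as usual in the Tait--Girard method, is setting up the induction correctly so that \crtre{} can be applied: one must verify that \emph{all} neutral reducts are covered and that the induction measure genuinely decreases. Here the measure is $h(u)+h(t)$ (respectively $h(u)+h(v)$), using that $t\red A$ implies $t\in\snn$ by \cruno{}, so that $h(t)$ is a well-defined ordinal and strictly decreases under $\redn$. One should also note the new non-deterministic rules $\send{a}^{T}\redn\get{a}^{T}$ and $\get{a}^{T}\redn t$ do not affect this proposition, since the terms under consideration ($\lambda x\,u$, $\pair{u}{v}$) are introduction terms whose outermost structure is untouched by those channel rules, and any channel reductions happening strictly inside subterms are absorbed into the $h(\cdot)$ measure exactly as the ordinary intuitionistic reductions are.

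\begin{proof}
We use \cruno{}, \crdue{}, \crtre{} freely.

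(2) By Definition~\ref{def:red} it suffices to prove $\pair{u}{v}\,\pi_0\red A$ and $\pair{u}{v}\,\pi_1\red B$; we treat the first, the second being symmetric. Since $u\red A$ and $v\red B$, by \cruno{} both $u,v\in\snn$, so $h(u),h(v)$ are defined; we argue by induction on $h(u)+h(v)$. The term $\pair{u}{v}\,\pi_0$ is neutral, so by \crtre{} we only need that every $t'$ with $\pair{u}{v}\,\pi_0\redn t'$ satisfies $t'\red A$. Such a $t'$ is either $u$ (by the projection rule), which is reducible by hypothesis, or $\pair{u'}{v}\,\pi_0$ with $u\redn u'$, or $\pair{u}{v'}\,\pi_0$ with $v\redn v'$. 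In the latter two cases \crdue{} gives $u'\red A$ and $v'\red B$, and since $h(u')<h(u)$ or $h(v')<h(v)$ the induction hypothesis yields $t'\red A$.

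(1) Let $t\red A$; by Definition~\ref{def:red} we must show $(\lambda x\,u)\,t\red B$. By hypothesis $u[t/x]\red B$, hence by \cruno{} $u[t/x]\in\snn$, and in particular $u,t\in\snn$, so $h(u),h(t)$ are defined; we argue by induction on $h(u)+h(t)$. The term $(\lambda x\,u)\,t$ is neutral, so by \crtre{} it suffices to show $t'\red B$ for every $t'$ with $(\lambda x\,u)\,t\redn t'$. If $t'=u[t/x]$ (the $\beta$-contraction) we are done by hypothesis. Otherwise $t'=(\lambda x\,u')\,t$ with $u\redn u'$, or $t'=(\lambda x\,u)\,t'$ with $t\redn t'$. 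In the first case, for any $r\red A$ we have $u[r/x]\redn u'[r/x]$, so by \crdue{} $u'[r/x]\red B$; thus $u'$ satisfies the hypothesis of the statement and, since $h(u')<h(u)$, the induction hypothesis gives $(\lambda x\,u')\,t\red B$. In the second case, by \crdue{} $t'\red A$ and $h(t')<h(t)$, so the induction hypothesis applied with the argument $t'$ gives $(\lambda x\,u)\,t'\red B$.
\end{proof}
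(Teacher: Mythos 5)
Your proof is correct and is precisely the standard Tait--Girard argument via \cruno, \crdue{} and \crtre{} that the paper itself gives by simply citing \cite{Girard}, so the two approaches coincide. One caveat: your parenthetical claim that the ordinal sizes are finite is false in this non-deterministic setting (the paper explicitly notes that reduction trees under $\redn$ are no longer finite, only well-founded), so the measure $h(u)+h(v)$ must be read as the natural (Hessenberg) sum or as well-founded induction on the componentwise-ordered pair, since ordinary ordinal addition is not strictly monotone in its left argument and the step ``$h(u')<h(u)$ hence the measure decreases'' would otherwise fail.
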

%\begin{proof} As in \cite{Girard}. 
% By induction on $C$. The base case is when $C$ is atomic. Then $t\red
% C$ means $t\in\snn$. Therefore $\cruno,\crdue, \crtre$ are trivial.
% Assume that $C=A\rightarrow B$ (the case $C=A\land B$ is similar).
%
% \cruno. Suppose $t\red A\rightarrow B$. By i.h. \crtre, for any
% intuitionistic variable $x$, we have $x\red A$. Therefore, $t\,x\red
% B$, and by \cruno, $t\, x\in\snn$, and thus $t\in\snn$.
%
%  \crdue. Suppose $t\red A\rightarrow B$ and $t\redn t'$. Let $u\red
% A$: we have to show $t'u\red B$. Since $tu\red B$ and $tu\redn t'u$,
% the i.h. leads to \crdue\ that $t'u\red B$.
% 
%  \crtre.\ Assume $t$ is neutral and $t\redn t'$ implies $t'\red
% A\rightarrow B$. Suppose $u\red A$; we have to show that $tu \red
% B$. We proceed by induction on the ordinal height of the reduction
% tree of $u$ ($u\in\snn$ by i.h. \cruno).  By i.h. \crtre\ holds for
% the type $B$. So assume $tu \redn z$; it is enough to show that $z\red
% B$.  If $z=t'u$, with $t \redn t'$, then by hypothesis $t'\red
% A\rightarrow B$, so $z\red B$. If $z =tu'$, with $u\redn u'$, by
% i.h. \crdue\ $u'\red A$, and therefore $z\red B$ by the i.h. relative
% to the size of the reduction tree of $u'$. There are no other cases
% since $t$ is neutral.
%\end{proof}
 
% \marginpar{Should this go somewhere else?}  

%The proof is as in~\cite{Girard}.
% (Adequacy Theorem \ref{Adequacy Theorem}).

%In some cases that is true by definition of reducibility; we list below some non-trivial but standard cases we have to prove.

\begin{proof}
As in~\cite{Girard}, using $\cruno$, $\crdue$ and $\crtre$.
\end{proof}

\subsection{The Mini Adequacy Theorem}

We prove that simply typed $\lambda$-terms that do not contain channels are reducible.

\begin{theorem}[Mini Adequacy Theorem]\label{Adequacy Theorem Mini}\label{thm:MiniAdeq}
Suppose that $w: A$ is a deterministic simply typed $\lambda$-term, with intuitionistic free variables among $x_1: {A_1},\ldots,x_n:{A_n}$. For all terms $t_1, \ldots, t_n$ such that
$\text{ for  $i=1,\ldots, n$, }t_i\red A_i$ 
we have
$w[t_1/x_1\cdots t_n/x_n]\red A$
\end{theorem}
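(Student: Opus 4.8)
The plan is to prove the statement by induction on the structure of the deterministic term $w$ (equivalently, on its $\NJ$ typing derivation). The crucial preliminary observation is that, since $w$ contains no channel occurrences, every subterm of $w$ is again deterministic, so the channel cases (the $\send{a}$, $\get{a}$ terms) never arise and the induction hypothesis applies verbatim to each immediate subterm. By contrast, the substituted terms $t_1,\ldots,t_n$ are allowed to contain channels, but this is harmless: reducibility is defined for arbitrary simply typed $\lambda$-terms, so $t_i\red A_i$ is all that is used. Throughout I write $\sigma$ for the substitution $[t_1/x_1\cdots t_n/x_n]$, and the goal in each case is $w\sigma\red A$.

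First I would dispatch the variable and introduction cases. If $w=x_i$ then $w\sigma=t_i\red A_i$ by hypothesis. If $w=\lambda y^{\scriptscriptstyle B}\,u$ with $A=B\rightarrow C$, then $u$ is deterministic with free variables among $x_1,\ldots,x_n,y$, so for any $s\red B$ the induction hypothesis applied to $u$ with the extended substitution gives $u\sigma[s/y]\red C$; Proposition~\ref{proposition-somecases}(1) then yields $\lambda y\,u\sigma=w\sigma\red B\rightarrow C$. If $w=\langle u,v\rangle$ with $A=B\wedge C$, the induction hypothesis gives $u\sigma\red B$ and $v\sigma\red C$, and Proposition~\ref{proposition-somecases}(2) gives $w\sigma\red B\wedge C$.

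Next I would treat the elimination cases, which follow directly from the defining clauses of reducibility. If $w=u\,v$ with $u:B\rightarrow C$ and $v:B$, the induction hypothesis gives $u\sigma\red B\rightarrow C$ and $v\sigma\red B$, so the clause for $\rightarrow$ in Definition~\ref{def:red} yields $(u\sigma)(v\sigma)=w\sigma\red C$. If $w=u\,\pi_i$ with $u:B_0\wedge B_1$, the induction hypothesis gives $u\sigma\red B_0\wedge B_1$, and the clause for $\wedge$ gives $u\sigma\,\pi_i=w\sigma\red B_i$. The only remaining case is ex falso, $w=\efq{P}{u}$ with $P$ atomic, $P\neq\bot$, and $u:\bot$: here the induction hypothesis gives $u\sigma\red\bot$, hence $u\sigma\in\snn$ by \cruno; since $\mathsf{efq}_{P}$ is never the head of a redex and $\redn$ closes under simple contexts, every reduction of $w\sigma=\efq{P}{u\sigma}$ takes place inside $u\sigma$, so $w\sigma\in\snn$, which is exactly $w\sigma\red P$ because $P$ is atomic.

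I expect no genuine obstacle here: restricting to deterministic $w$ is precisely what removes the problematic channel cases, which are instead handled afterwards in Proposition~\ref{prop:channels} and the full Adequacy Theorem~\ref{Adequacy Theorem}. The candidate properties \cruno--\crtre{} together with Proposition~\ref{proposition-somecases} already supply everything needed for the $\lambda$- and pairing-introductions and for the eliminations. The only case requiring a small separate argument is ex falso, where one combines \cruno{} with the observation that $\mathsf{efq}_{P}$ creates no top-level redex, thereby transferring strong normalization from $u\sigma$ to $w\sigma$.
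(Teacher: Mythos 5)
Your proof is correct and takes essentially the same route as the paper's: the paper proves the Mini Adequacy Theorem simply as ``the proof of the full Adequacy Theorem without the channel case,'' i.e., by structural induction on $w$, using Proposition~\ref{proposition-somecases} for the $\lambda$- and pairing-introductions and the defining clauses of reducibility for the eliminations, exactly as you do. The only difference is that you additionally spell out the $\mathsf{efq}$ case, which the paper's case analysis leaves implicit.
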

\begin{proof}
As the proof of Th.~\ref{Adequacy Theorem} without case 2., which concerns channels. 
\end{proof}

% Notation: for any term $v$, we denote $v[t_1/x_1\cdots t_n/x_n]$ with $\substitution{v}$. 
% We proceed by induction on the shape of $w$. 
% \begin{enumerate}

% \item
% If $w= x_{i}: A_{i}$, for some $i$, then
% $A=A_i$. So $\substitution{w}=t_i\red
% {A_i}={A}$.

% \item If $w=ut$, then $ u:
% B\rightarrow A$ and $ t: B$. So
% $\substitution{w}=\substitution{u}\,\substitution{t}\red {A}$, for
% $\substitution{u}\red {B}\rightarrow {A}$ and
% $\substitution{t}\red {B}$ by induction hypothesis.

%   \item If  $w=\lambda x^{\scriptscriptstyle B}\,
% u$, then
% $A=B\rightarrow C$. So, $\substitution{w}=\lambda x^{\scriptscriptstyle B}\, \substitution{u}$, since we may assume $x^{\scriptscriptstyle B} \neq x_1, \ldots, x_k$. For  every  $t\red
% {B}$, by induction hypothesis on $u$, $\substitution{u}[t/x]\red{C}$.  
% Therefore, by Proposition \ref{proposition-somecases}, $\lambda x^{\scriptscriptstyle B}\, \substitution{u}\red {B}\rightarrow {C}=
% {A}$. 

% \item The cases of pairs and projections are straightforward.

% \end{enumerate}

\begin{corollary}[Mini Strong Normalization of $\redn$]\label{cor:sndt} If $t: A$ is a deterministic simply typed $\lambda$ term, then $t\red A$ and $t\in\snn$.
\end{corollary}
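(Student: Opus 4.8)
The plan is to obtain the corollary as an immediate consequence of the Mini Adequacy Theorem (Th.~\ref{Adequacy Theorem Mini}), applied to the \emph{identity} substitution. The only ingredient I need beyond the theorem itself is the standard base case of the reducibility method: every intuitionistic variable is reducible of its own type.

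First I would show that $x^{A} \red A$ for every intuitionistic variable $x^{A}$. A variable is neutral, since it is neither an abstraction $\lambda x\, u$ nor a pair $\langle u, t\rangle$, and it is $\redn$-irreducible, as none of the reduction rules defining $\redn$ applies to a bare variable. Hence the premise of \crtre{} holds vacuously, and \crtre{} yields $x^{A} \red A$.

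Next I would invoke the Mini Adequacy Theorem. Let $t: A$ be deterministic; since $t$ contains no channel occurrence, all its free variables are intuitionistic, say among $x_1: A_1, \ldots, x_n: A_n$. Choosing $t_i := x_i$ for each $i$, the previous step guarantees $t_i \red A_i$, so Th.~\ref{Adequacy Theorem Mini} gives $t[x_1/x_1 \cdots x_n/x_n] \red A$. As the identity substitution leaves $t$ unchanged, this is exactly $t \red A$. Finally, $t \in \snn$ follows from $t \red A$ by \cruno.

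There is no real obstacle here: the statement is a routine corollary of the adequacy theorem once reducibility of variables is in place. The only point deserving a word of care is that a deterministic term, having no channels, has only intuitionistic free variables, which is precisely what makes the hypotheses of Th.~\ref{Adequacy Theorem Mini} directly applicable with the identity substitution.
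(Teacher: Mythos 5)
Your proof is correct and follows exactly the paper's own argument: reducibility of the intuitionistic variables via \crtre{} (vacuously, since a bare variable is neutral and $\redn$-irreducible), then the Mini Adequacy Theorem with the identity substitution to get $t\red A$, and finally \cruno{} for $t\in\snn$. The only difference is that you spell out the vacuous application of \crtre{} in more detail than the paper does.
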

\begin{proof}
Assume $x_1: {A_1},\ldots,x_n:{A_n}$ are all the intuitionistic free variables of $t$ and thus all its free variables. By \crtre, one has $x_{i}\red A_{i}$, for $i=1,\ldots, n$. 
From Theorem \ref{Adequacy Theorem Mini}, we derive $t\red A$. From \cruno,  we conclude that $t\in\snn$.\end{proof}By the Mini Adequacy Theorem we can prove that channels are reducible.

\begin{proposition} [Reducibility of  Channels]\mbox{}
\label{prop:channels}
 i) For every input channel $\get{a}: T$, $\get{a} \red T$, and 
 ii) For every output channel $\send{a}: T$, $\send{a} \red T$
\end{proposition}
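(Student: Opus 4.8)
The plan is to prove both claims by the third reducibility candidate property \crtre{}, exploiting the fact that channels are \emph{neutral}: being variables, neither $\get{a}^{T}$ nor $\send{a}^{T}$ has the form $\lambda x\, u$ or $\pair{u}{t}$. Consequently, by \crtre{} it suffices in each case to check that every one-step $\redn$-reduct of the channel is already reducible of type $T$. Since part ii) will appeal to part i), I would prove the input-channel case first.

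For i), I would first inspect which rules of $\redn$ can fire on $\get{a}^{T}$. The $\beta$- and projection-rules do not apply, and the rule $\send{a}^{T}\redn\get{a}^{T}$ produces $\get{a}^{T}$, not consumes it; the only applicable rule is therefore $\get{a}^{T}\redn t$, whose reducts are exactly the deterministic simply typed $\lambda$-terms $t:T$ (and closure under simple contexts adds nothing, as $\get{a}^{T}$ has no proper reducible subterm). By the Mini Strong Normalization Corollary~\ref{cor:sndt}, each such deterministic $t:T$ satisfies $t\red T$. Thus every one-step reduct of the neutral term $\get{a}^{T}$ is reducible of type $T$, and \crtre{} yields $\get{a}^{T}\red T$.

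For ii), the only rule of $\redn$ applicable to $\send{a}^{T}$ is $\send{a}^{T}\redn\get{a}^{T}$, so $\get{a}^{T}$ is the unique one-step reduct. By part i) this reduct is reducible of type $T$, and since $\send{a}^{T}$ is again neutral, \crtre{} gives $\send{a}^{T}\red T$. The argument is short precisely because the genuine work has already been done: the reducibility of \emph{all} the (infinitely many) deterministic messages into which $\get{a}^{T}$ can turn rests entirely on Corollary~\ref{cor:sndt}, itself a consequence of the channel-free Mini Adequacy Theorem. Hence the only subtle point is the stratification of the proof — establishing reducibility for deterministic terms before channels — rather than any new combinatorial difficulty here.
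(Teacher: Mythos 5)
Your proof is correct and follows exactly the same route as the paper's: both parts use \crtre{} on the neutrality of the channel, with the reducts of $\get{a}^{T}$ being deterministic terms handled by Corollary~\ref{cor:sndt}, and the unique reduct $\get{a}^{T}$ of $\send{a}^{T}$ handled by part i). No gaps; your remarks on the stratification match the paper's own framing.
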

\begin{proof}
i) Since $\get{a}$ is neutral, by \crtre{} it is enough to show that for all $u$ such that $\get{a} \redn u$, it holds that $u\red T$. Indeed, let us consider any $u$ such that $\get{a}\redn u$; since $u$ must be  a deterministic  simply typed $\lambda$-term,  by Corollary \ref{cor:sndt} $u\red T$.

ii) Since $\send{a}$ neutral, by \crtre{} it is enough to show that for all $u$ such that $\send{a} \redn u$,  it holds $u\red T$. Indeed, let us consider any $u$ such that $\send{a}\redn u$; since $u=\get{a}$, by i) we have $u\red T$.

 %We proceed by induction on $T$ and by case according to the shape of $T$.
%\begin{itemize}
%\item $T$ is atomic. By Corollary \ref{cor:sndt} and $\cruno$,  every term $t$ such that $a\redn t$ is a deterministic simply typed $\lambda$-term, hence we have $t\in \snn$. Therefore $a\in \snn$.
%
%\item $T=A\rightarrow B$. Assume $u\red A$; we have to show $a\, u\red B$.  We proceed by induction on the ordinal height of the reduction tree of $u$. Assume $a\, u \redn z$: by $\crtre$, it is enough to show that $z \red B$.  
%If $z=a\, u'$, with $u\redn u'$, then by 1\crdue $u'\red B$, therefore by induction hypothesis $z\red B$. If $z=t u$, with $a\redn t$, then $t$ is a deterministic simply typed $\lambda$-term, hence $t\red A\rightarrow B$ by Theorem \ref{thm:miniad} and therefore, $z\red B$.
%
%\item $T= A\land B$. Similar to the previous case.
%\end{itemize}
\end{proof}

%\subsection{The Adequacy Theorem and Strong Normalization of $\lama$}\label{thm:snlama}

\label{section-adequacy}

We can finally prove that all simply typed $\lambda$-terms are reducible.

\begin{theorem}[Adequacy Theorem]\label{Adequacy Theorem}
Suppose that $w: A$ is any simply typed $\lambda$-term, with intuitionistic free variables among $x_1: {A_1},\ldots,x_n:{A_n}$. For all terms $t_1, \ldots, t_n$ such that
 for $i=1,\ldots, n,$ $ t_i\red A_i$
we have $ w[t_1/x_1\cdots
t_n/x_n]\red A$
\end{theorem}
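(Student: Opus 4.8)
The plan is to prove the statement by structural induction on $w$, treating channels as the single new base case layered on top of the argument already used for the Mini Adequacy Theorem (Theorem~\ref{thm:MiniAdeq}). Throughout, write $\sigma=[t_1/x_1\cdots t_n/x_n]$ for the substitution; by hypothesis $\sigma$ replaces every free intuitionistic variable of $w$ with a reducible term of matching type and, crucially, leaves every channel occurrence untouched (channels are not intuitionistic variables and, by the usual renaming conventions, are never captured by $\sigma$). The proof obligation is then simply that $w\sigma\red A$ in each case.

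First I would handle the base cases. If $w$ is an intuitionistic variable it must be some $x_i$, and $w\sigma=t_i\red A_i$ by assumption. The genuinely new base case (\emph{case 2}) is when $w$ is a channel: then $w\sigma=w$, and reducibility is exactly Proposition~\ref{prop:channels}, which gives $\get{a}\red T$ and $\send{a}\red T$. The inductive cases then reproduce the classical Girard argument. For an application $w=u\,v$ the induction hypothesis yields $u\sigma\red B\impl A$ and $v\sigma\red B$, whence $w\sigma\red A$ by clause 3 of the reducibility definition; for a projection $w=u\,\pi_i$ one uses clause 2 after obtaining $u\sigma\red A\et B$; for an abstraction $w=\lam y^{B}\,u$ and for a pair $w=\pair{u}{v}$ one invokes Proposition~\ref{proposition-somecases}, having first pushed $\sigma$ through the binder/constructor and checked that it commutes with the construction (for the abstraction, applying the induction hypothesis to $u$ under $\sigma$ extended by an arbitrary reducible $t\red B$ for $y$). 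The only remaining constructor is $\efq{\emp{}}{u}$: here the induction hypothesis gives $u\sigma\red\fal$, i.e.\ $u\sigma\in\snn$ since $\fal$ is atomic (equivalently by \cruno), and because $\efq{\emp{}}{u\sigma}$ is neutral and reduces only inside its argument, it lies in $\snn$, which is precisely reducibility at the atomic type $\emp{}$.

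I expect no real obstacle at this stage, because the stratification described earlier has already front-loaded the hard work: the genuinely delicate step — showing that the input and output channels, whose types $A_i\impl A_i\et B_i$ are \emph{arrow} types, are reducible — has been discharged in Proposition~\ref{prop:channels}, and that proof in turn relied on the Mini Adequacy Theorem to send every possible deterministic reduct of $\get{a}$ to a reducible term. The sole point requiring care in the present theorem is to verify that $\sigma$ neither substitutes for nor captures a channel, so that the channel base case and the induction hypotheses compose soundly; granting this, the theorem is the standard Adequacy Theorem of the reducibility method, obtained by adjoining case~2 to the induction of Theorem~\ref{thm:MiniAdeq}.
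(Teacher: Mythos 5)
Your proposal is correct and follows essentially the same route as the paper: induction on the shape of $w$, with the intuitionistic variable and channel occurrences as base cases (the latter discharged by Proposition~\ref{prop:channels}), and the standard Girard-style inductive cases for application, abstraction, pairs and projections via Proposition~\ref{proposition-somecases}. Your explicit treatment of the $\exfalso$ constructor is a small addition the paper silently subsumes under the ``straightforward'' cases, and it is handled correctly.
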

%\begin{proof} 
%Standard. See the appendix.
%\end{proof}

%\begin{proof}
%
%Notation: for any term $v$ and formula $B$, we denote \[v[t_1/x_1\cdots t_n/x_n]\] with $\substitution{v}$. We proceed by induction on the shape of $w$. % and according to the shape of $w$. 
%
%\begin{enumerate}
%
%\item
%If $w= x_{i}: A_{i}$, for some $i$, then
%$A=A_i$. So $\substitution{w}=t_i\red
%{A_i}={A}$.
%
%
%\item
%If $w= a: A_{i}$, for some $i$ and channel $a$, then
%$A=A_i$. By Prop.~\ref{prop:channels}, $\substitution{a}=a\red A$.
%
%\item If $w=ut$, then $ u:
%B\rightarrow A$ and $ t: B$. So
%$\substitution{w}=\substitution{u}\,\substitution{t}\red {A}$, for
%$\substitution{u}\red {B}\rightarrow {A}$ and
%$\substitution{t}\red {B}$ by induction hypothesis.
%
%  \item If  $w=\lambda x^{\scriptscriptstyle B}\,
%u$, then
%$A=B\rightarrow C$. So, $\substitution{w}=\lambda x^{\scriptscriptstyle B}\, \substitution{u}$, since we may assume $x^{\scriptscriptstyle B} \neq x_1, \ldots, x_k$. For  every  $t\red
%{B}$, by induction hypothesis on $u$, $\substitution{u}[t/x]\red{C}$.  
%Therefore, by Proposition \ref{proposition-somecases}, $\lambda x^{\scriptscriptstyle B}\, \substitution{u}\red {B}\rightarrow {C}=
%{A}$.
%
%\item The cases of pairs and projections are straightforward.\end{enumerate}\end{proof}
%
%
%
%
\begin{corollary}[Strong Normalization of $\redn$]\label{cor:snn} For any simply typed $\lambda$-term $t: A$, then $t\in\snn$.
\end{corollary}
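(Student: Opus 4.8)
The plan is to obtain this corollary as an immediate specialization of the Adequacy Theorem (Theorem~\ref{Adequacy Theorem}), in exact analogy with the way Corollary~\ref{cor:sndt} is extracted from the Mini Adequacy Theorem. All the genuine difficulty has by now been absorbed into the reducibility machinery, so only a short instantiation argument is needed.

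First I would enumerate the intuitionistic free variables of the given $t: A$ as $x_1: A_1, \ldots, x_n: A_n$. Each $x_i$ is neutral and admits no $\redn$-reduction, so property \crtre{} gives $x_i \red A_i$ vacuously, for every $i = 1, \ldots, n$. These are precisely the reducible arguments required by the hypotheses of the Adequacy Theorem.

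Next I would invoke Theorem~\ref{Adequacy Theorem} with the identity substitution $t_i := x_i$. Since replacing every intuitionistic variable by itself leaves the term unchanged, the conclusion becomes $t = t[x_1/x_1 \cdots x_n/x_n] \red A$. Unlike in Corollary~\ref{cor:sndt}, the term $t$ may now contain free channel occurrences; these require no separate treatment here, because their reducibility is already built into the proof of the Adequacy Theorem through Proposition~\ref{prop:channels}. A final application of property \cruno{} then turns reducibility into strong normalization, yielding $t \in \snn$.

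There is no real obstacle in this step: the combinatorial and proof-theoretic work has been confined to the Adequacy Theorem, which must jointly accommodate $\lambda$-abstraction, pairing, application, and the non-deterministic channel rules $\send{a}^{T} \redn \get{a}^{T}$ and $\get{a}^{T} \redn t$. Given that theorem, the present corollary is little more than a one-line consequence.
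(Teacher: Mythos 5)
Your proposal matches the paper's own proof exactly: enumerate the intuitionistic free variables, obtain $x_i \red A_i$ from \crtre{}, apply Theorem~\ref{Adequacy Theorem} with the identity substitution to get $t \red A$, and conclude $t \in \snn$ by \cruno. The remark that channel occurrences are handled inside the Adequacy Theorem via Proposition~\ref{prop:channels} is also consistent with the paper's structure, so nothing is missing.
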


\begin{proof}
Let $x_1: {A_1},\ldots,x_n:{A_n}$ be all the intuitionistic free variables of $t$. \crtre{} leads to $x_{i}\red A_{i}$. %and by Proposition \ref{prop:channels} $a_{i}\red B_{i}$, for $i=1,\ldots, n, m$. 
From Th.~\ref{Adequacy Theorem}, we derive $t\red A$. $t\in\snn$ follows from \cruno.
\end{proof}

%\subsection{Strong Normalization of $\lama$}\label{thm:snlama}

%We are finally in a position to prove the strong normalization of $\lama$.

\begin{theorem}[Strong Normalization of $\lama$] 
\label{SNT}
For any $\lama$ term $t$,  $t\in \sn$.
\end{theorem}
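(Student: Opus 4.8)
The plan is to reduce strong normalization of $\lama$ to Corollary~\ref{cor:snn} by simulating every $\lama$ reduction with the non-deterministic relation $\redn$. First I would fix a translation $t\mapsto t^{*}$ from $\lama$ terms to simply typed $\lambda$-terms that is homomorphic on all $\lambda$-calculus constructors, preserves every channel occurrence $\send{a}^{T},\get{a}^{T}$, and collapses the parallel structure: after $\alpha$-renaming so that distinct binders $\pp{a}{\cdot}$ carry distinct channel names, I translate a parallel composition $s_{1}\p\dots\p s_{k}$ of same-type components by the sequentialisation $\mathsf{seq}(s_{1}^{*},\dots,s_{k}^{*}):=(\lambda x_{2}\dots\lambda x_{k}.\,s_{1}^{*})\,s_{2}^{*}\cdots s_{k}^{*}$ with fresh $x_{i}$, and drop the wrapper, setting $(\pp{a}{v_{1}\p\dots\p v_{m}})^{*}:=\mathsf{seq}(v_{1}^{*},\dots,v_{m}^{*})$ with the channel $a$ now free. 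This keeps $t^{*}$ a well-typed simply typed $\lambda$-term of the same type as $t$ (which is why I sequentialise rather than tuple: the block keeps its type $B$, so enclosing projections and applications stay well typed), makes translation commute with substitution of intuitionistic variables, and places each thread inside a \emph{simple} context of $t^{*}$. By Corollary~\ref{cor:snn}, $t^{*}\in\snn$, so the ordinal size $h(t^{*})$ is defined and will be the primary component of the complexity measure.

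The core step is a Simulation Lemma: if $t\mapsto t'$ by an intuitionistic or a communication reduction (Table~\ref{tab:red}), then $t^{*}\redn^{+}(t')^{*}$ (at least one step), whence $h(t^{*})>h((t')^{*})$. Intuitionistic steps are immediate, since $\beta$ and projection are themselves $\redn$-steps, the surrounding translated context is simple, and $(M[N/y])^{*}=M^{*}[N^{*}/y]$. A communication step I would simulate occurrence by occurrence: each sender $\mathcal{C}_{j}[\send{a}^{T}w_{j}]$ is handled by the rule $\send{a}^{T}\redn\get{a}^{T}$; each receiver thread $\mathcal{D}_{l}[\get{a}^{T}v_{l}]$ by first firing $\get{a}^{T}\redn\lambda x.\langle x,w_{1}^{*},\dots,w_{p}^{*}\rangle$ and then a $\beta$-step to $\mathcal{D}_{l}[\langle v_{l},w_{1},\dots,w_{p}\rangle]$. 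The instance of $\get{a}^{T}\redn t$ is legitimate because the $w_{j}$ are rightmost occurrences and hence, under the $1$-depth restriction, contain no channel, so $\lambda x.\langle x,w_{1},\dots,w_{p}\rangle$ is a deterministic term of the right type $T=A_{i}\impl A_{i}\et B_{i}$; and since every displayed occurrence lies in a simple context, these $\redn$-steps are legal and reassemble exactly into $(t')^{*}$. This simulation is what breaks the duplication/communication circularity: an outer $\beta$-redex duplicating a block $\pp{a}{\cdot}$ is mirrored by a single $\redn$ duplication in $t^{*}$, and the well-founded tree behind $h(t^{*})$ already accounts for all future reductions, including freshly enabled communications, in every copy.

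For the simplification reduction, which deletes channel-carrying threads and one wrapper, $(t')^{*}$ is obtained from $t^{*}$ by keeping a subset of the thread-leaves and re-bundling, so I would show $h(t^{*})\ge h((t')^{*})$: the ordinal size of a $\mathsf{seq}$-bundle is the natural sum of the sizes of its leaves plus a finite bundling overhead, both of which can only shrink under simplification. I then take as measure the lexicographic pair $\mu(t):=(h(t^{*}),N(t))$, where $N(t)$ is the number of $\pp{a}{\cdot}$ binders in $t$: intuitionistic and communication steps strictly decrease the first component, while simplification keeps it non-increasing and strictly decreases the second (and a bare parallel term, being wrapper-free, is not a simplification redex, so no loop occurs). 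As the lexicographic order on ordinals $\times\,\mathbb{N}$ is well founded, $\mu$ cannot decrease infinitely, so no $\lama$ term starts an infinite reduction, i.e.\ $t\in\sn$.

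The step I expect to be the main obstacle is the communication case of the Simulation Lemma together with the ordinal estimate for simplification: I must verify carefully that the translated sender and receiver occurrences really sit in simple contexts (so that context closure of $\redn$ applies), that the freely chosen message $\lambda x.\langle x,w_{1},\dots,w_{p}\rangle$ reproduces $(t')^{*}$ verbatim, and that $h$ of a $\mathsf{seq}$-bundle is genuinely monotone in its leaves up to finite overhead — an elementary but delicate piece of ordinal bookkeeping on which the whole lexicographic argument rests.
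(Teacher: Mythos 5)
Your proof rests on the same key idea as the paper's: every $\lama$ step is simulated by the non-deterministic relation $\redn$ on simply typed terms (intuitionistic steps directly, senders by $\send{a}\redn\get{a}$, receivers by $\get{a}\redn\lambda x\,\langle x,w_1,\dots,w_p\rangle$ followed by a $\beta$-step), and strong normalization is then read off the well-founded ordinal size $h$ supplied by Corollary~\ref{cor:snn}. Where you genuinely differ is the packaging of the measure. The paper never glues the threads together: it takes the tuple $(h(v_1),\dots,h(v_q))$ of ordinal sizes of the \emph{individual threads}, ordered lexicographically, observes that every intuitionistic or communication step strictly decreases the changed components while leaving the others untouched, and does not push the simplification reduction through the measure at all --- its reduct is a bare parallel composition of threads each already in $\snn\subseteq\sn$, so that case is closed directly. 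You instead sequentialize the whole term into a single $\lambda$-term $t^{*}$ and use $(h(t^{*}),N(t))$. This buys a single ordinal in place of a tuple, but at the price of the one obligation you flag yourself: showing $h(t^{*})\ge h((t')^{*})$ for simplification, i.e.\ that $h$ of a $\mathsf{seq}$-bundle is monotone under deleting leaves and re-bundling. That estimate is plausible but requires genuine ordinal bookkeeping --- the head $\beta$-redexes of a bundle discard leaves only in a fixed order, the kept head $u_{i_1}$ need not be the head of $t^{*}$, and the naive sandwich ``natural sum $\le h\le$ natural sum plus $k-1$'' does not by itself compare the two bundles when the leaves are near-normal. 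You can excise this entirely by treating simplification the paper's way: after it fires there is no wrapper left, the only remaining reductions are interleaved intuitionistic reductions of threads already in $\snn$, and the reduct is in $\sn$ outright. The rest of your simulation lemma is sound; in particular your justification that the messages $w_j$ are deterministic (rightmost occurrences, single bound channel under the 1-depth restriction) matches the paper's argument, though the worry about outer $\beta$-redexes duplicating a wrapper block is moot, since 1-depth typing confines the wrapper to the outermost level.
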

\begin{proof}
Assume $t=\pp{a}{u_1 \p \ldots \p u_m}$ and $u_1= v_{1}\p\dots\p
v_{n}$, $\dots$, $u_{m}=v_{p}\p\dots\p v_{q} $. Define $\rho_{i}$, for
$i\in\{1, \ldots, q\}$, as the ordinal size $h(v_{i})$ of $v_{i}$ with
respect to the reduction relation $\redn$.
% Define  $\rho_{i}$, for $i\in\{1, \dots, q\}$, as the sequence of the  ordinal sizes of $v_{k}, \dots, v_{j}$  with respect to the reduction relation $\redn$, provided $u_i= v_{k}\p\dots\p v_{j}$. 
We proceed by lexicographic induction on the sequence
$\rho = ( \rho_1, \ldots , \rho_q) $,
which we call the complexity of $t$.
Let $t'$ be any term such that $t\mapsto t'$: to prove the thesis, it is enough to show that $t'\in \sn$. Let $\rho'$ the complexity of $t'$.  We must consider three cases.
\\
\textbf{1.} $t'=\pp{a}{u_1 \p \ldots \p u_{k}'\p \dots \p u_m}$ and in turn 
$u_{k}'=v_i \p \ldots \p v_{r}'\p \dots \p v_j$
with $v_{r}\mapsto v_{r'}$ by contraction of an intuitionistic redex. Then also $v_{r}\redn v_{r}'$. Hence $\rho'$ is lexicographically strictly smaller than $\rho$ and we conclude by the i.h. that $t'\in \sn$.\\
\textbf{2.} $ t'=\pp{a}{u_1' \p \!\shortdots\! \p u_{k}'\p\! \shortdots\! \p u_m'}\;  $ $ u_{k}= \!\shortdots\! \p v_i \p\! \shortdots\!
\p v_j\p\! \shortdots\!\;   $ $ u_{k}'=\!\shortdots\! \p v_i' \p\! \shortdots\! \p v_j'\p
\!\shortdots\!    $  $  v_{i}={\mathcal D}_1[\get{a}\, w_1], \shortdots ,  v_{j}={\mathcal
D}_{n}[\get{a} \, w_n] \qquad\quad   $ $ v_{i}'={\mathcal D}_1[\lan w_1, w\ran],
\shortdots v_{j}={\mathcal D}_{n}[\lan w_{n}, w\ran]$\\where $w$ is
the sequence of messages transmitted by cross reduction. Being for
each $l$\\\centerline{$ \mathcal{D}_l[\get{a}\, w_l]\redn \mathcal{D}_l[(\lambda x\, \lan x,
    w\ran)\, w_l]\redn \mathcal{D}_l[\lan w_{l}, w\ran]$}\\
we obtain $v_{i}\redn^{*} v_{i}', \dots v_{j}\redn^{*}
v_{j}'$. Moreover, for each $l\neq k$, either $u_{l}=u_{l}'$ or \\
$ u_{l}= \!\shortdots\! \p s_i \p\! \shortdots\!
\p s_j\p\! \shortdots\!\quad   $ $ u_{k}'=\!\shortdots\! \p s_i \p\! \shortdots\! \p s_j'\p
\!\shortdots\!    \quad  s_{j}={\mathcal
C}[\send{a} \, r] \redn  s_{j}'={\mathcal C}[\get{a}\, r]$

Hence $\rho'$ is lexicographically strictly smaller than
$\rho$ and by i.h. $t'\in \sn$.\\ 
\textbf{3.} $t'=v_{i_{1}}\p\dots\p v_{i_{k}}$. As $v_{i_{1}},\dots, v_{i_{k}}$ all belong to $\sn$, we easily obtain   $t'\in \sn$.\end{proof}

\section{Computing with $\lama$}\label{sec:computing}

%The next two
%examples of increasing complexity showcase some of the expressiveness
%of our language.

We illustrate the expressive power of $\lama$  
with examples of parallel programs from~\cite{loogen2011}. 
%a parallel algorithm  
%for computing approximations of $\pi$, and 
%a parallel version of Floyd-Warshall algorithm
%to compute 
%for the shortest path between two nodes in a graph.
%See the appendix for more examples.
% We illustrate the expressive power of $\lama$ with examples of
% programs. We first show that $\lama$ encodes parallel OR and
% is thus  more powerful than simply typed $\lam$-calculus.  The next two examples
% are $\lama$-programs encoding the parallel algorithms in~\cite{loogen2011} for computing approximations of
% $\pi$ and shortest paths on graphs.
% of increasing complexity 
% showcase some of the expressiveness of our language.
% % by a sequence of increasingly sophisticated examples.  
% % By contrast, as a consequence of Berry's sequentiality theorem
% % (see~\cite{Barendregt}) there is no parallel OR in simply typed
% % $\lam$-calculus.
%All the $\lama$ terms that we are going to construct will be reduced
%according to a fixed normalization strategy, which we are going to
%define now: a recipe for selecting, in any given term, the subterm to
%which apply one of our reductions.  Different normalization strategies
%for $\lama$ could be defined since by Theorem~\ref{SNT} we know that
%all of them are terminating. 
%We design the normalization strategy that
%we will adopt according to the principles, whose goal is to

Reductions are performed
%We shall perform reductions 
according to the principles (a)--(c) below, that make programming in $\lama$ as efficient and
deterministic as possible. Indeed, as put by Harper~\cite{Harper}, whereas concurrency is concerned with nondeterministic composition of programs,  parallelism is concerned with asymptotic \emph{efficiency} of programs with \emph{deterministic behavior}.

% By Theorem~\ref{SNT} we know that
% any strategy for $\lama$ is terminating. We design our
% normalization strategy according to the following principles, whose
% goal is to make parallel programming with $\lama$ as efficient and as
% deterministic as possible.

(a) Messages should be normalized before being sent.  This property is
fundamental for efficient parallel computation. Indeed, for instance,
a repeatedly forwarded message can be duplicated many times inside a
process network;
%, because each process may keep a copy of it; 
hence if the message is not normal, each process may have to normalize it,
%all over again, 
wasting resources. This implies in particular that message
transmission cannot be triggered according to a call-by-value
strategy.

% Since $\lama$ features broadcasting and message forwarding (example
% with memory before floyd-warshall???), this property is fundamental
% for efficient parallel computation. Indeed, a broadcasted or
% repeatedly forwarded message can be duplicated many times inside a
% process network, so if the message is not normal, each process may may
% have to normalize it all over again, wasting resources.

(b) The senders and the receiver of the message should be normalized
before the communication. Indeed, the communication behaviour depends
on the syntactic shape of a process: it is the rightmost occurrence of
the channel in a $\lama$ term to determine what message is sent or
received. Hence, a term might send a message, whereas its normal form
another. For example $(\lambda x\, y\, x\, (a n)) (am)$ would
transmit $m$, while its normal form $y\, (am)\, (a n)$ would transmit
$n$. We want to avoid this kind of unpredictable behaviour due to the non-confluence of the calculus and make
clear what channel is supposed to be the active one, which sends or
receives a message.

% (d) The receiver of the message should carry out the calculations
% immediately after reception to avoid further delay.

  % immediately after, let the receiver perform the computations induced
% upon reception of the message; 

% to avoid inactivity
% while communication takes place, let the other threads perform some
% other independent calculations that may be carried out in
% parallel. This way we satisfy (c) and (d).
 
 (c) When no communication nor intuitionistic reduction is possible, we extract the desired result of the computation by a suitable simplification reduction.

Informally, our normalization strategy consists in iterating the
basic reduction relation $\succ$ defined below, that takes a term
$t=\pp{a} {u_{1}\p \ldots\p u_m }$ and performs the following
operations:

1) We select all the threads of $u_{1}, \dots, u_{m}$ that will send
or receive the next message and we let them complete their internal
computations.
% In each $u_{i}$, such thread is the rightmost containing the channel
% $a$.
%This way, we satisfy (a) and (b). We also exploit parallelism as much
%as possible, thus (c).

2) We let the selected threads transmit their message.% ;
% immediately after, let the receiver perform the computations induced
% upon reception of the message; 

% to avoid inactivity
% while communication takes place, let the other threads perform some
% other independent calculations that may be carried out in
% parallel. This way we satisfy (c) and (d).

3) While executing 1) and 2), to avoid inactivity we let the other
threads perform some other independent calculations that may be
carried out in parallel. 
% This way we satisfy (c). 

4) If the previous operations are not possible, we extract the results, if
any.

%In concrete examples, the first operation is always performed first and is always  immediately followed by the second operation. However, as we shall show, the application order does not matter for the sake of termination.

%permute the smallest redex
%$\pp{a} {u_{1}\p \ldots\p u_m }$ of maximal complexity until $u_{1}, \ldots
%u_m$ are simply typed $\lambda$-terms, then apply some $\lambda$-reduction that normalize at least the senders of the message and finally
%apply the cross reductions, followed by other $\lambda$-reductions that at least normalize the locally rightmost receivers of the message.

%permute the smallest redex
%$\pp{a} {u_{1}\p \ldots\p u_m }$ of maximal complexity until $u_{1}, \ldots
%u_m$ are simply typed $\lambda$-terms, then apply some $\lambda$-reduction that normalize at least the senders of the message and finally
%apply the cross reductions, followed by other $\lambda$-reductions that at least normalize the locally rightmost receivers of the message.

\begin{definition}[Reduction Strategy $\succ$]\label{defi-redstrategy}
  Let $t=\pp{a}{ u_1 \p \shortdots \p u_m }$ be a $\lama$ term.  We write
$t\succ t'$
whenever $t'$ has been obtained from $t$ by applying on of the following:
\begin{enumerate}
\item We select a receiver $u_i$ for the next communication. We normalize,
among the threads that contain $a$, those that are rightmost in $u_i$
or in a process outlinked to $u_i$. If now it is possible, we apply a
cross reduction
\begin{footnotesize}
  \[\pp{a}{ u_1 \p \ldots \p u_i \p \ldots \p u_m }\mapsto \pp{a}{ u_1
      \p \ldots \p u_i'\p \ldots \p u_m}\]
\end{footnotesize}followed by some
intuitionistic reductions.
\item Provided that by 1.\ 
%it is impossible to have a reduction
%$t\mapsto^* t'$ where $t'\neq t $
we can only obtain the trivial reduction $ t \mapsto^* t $, we apply, if possible, a simplification reduction.
We then normalize the remaining simply typed $\lam$-terms. 
\end{enumerate}
\end{definition}
We can reduce every $\lama$-term in normal form just by
iterating the reduction relation $\succ$. Indeed, if a communication
is possible, by 1.\ we can select a suitable receiver and apply a 
cross reduction. If no communication is possible but a simplification
can be done, 2.\ applies and we can simplify the term. Otherwise, we
can just normalize the simply typed $\lam$-terms by 1.\ or 2.

This normalization strategy leaves some room for non-determinism: it
prescribes when a communication reduction should be fired, but does
not select a process out of those that can potentially receive
messages, thus leaving a number of possible ways of actually
performing the communication. To limit this non-determinism, before the beginning of the computation we select one process and we impose that only the selected process can receive messages. Immediately after the reception of the messages, we select the next process to the right, if any, or the first process  from left
otherwise. Another source of non-determinism is due to cross
reductions of the form\[\ppp{a}{\ax}{ (u_{m_{1}} \p \ldots \p u_{n_{1}}) \p
\ldots \p (u_{m_{p}} \p \ldots \p u_{n_{p}}) } \mapsto u_{j_1} \p
\ldots \p u_{j_q}\] In this case, we impose that $u_{j_1} \p \ldots \p
u_{j_q}$ results by selecting from each $(u_{m_{i}} \p \ldots \p
u_{n_{i}})$, with $1\leq i\leq p$, the leftmost thread not containing
$a$.

Before presenting the parallel programs for computing $\pi$ and the all-pair-shortest-path problem,
 we show that $\lama$ is more expressive than simply typed $\lambda$ calculus.
\begin{example}[Parallel OR] 
Berry's sequentiality theorem~\cite{barendregt84}
implies that there is no simply typed
$\lambda$-term $O: \mathsf{Bool} \IMPL \mathsf{Bool}
\IMPL\mathsf{Bool} $ such that $O\, \mathbf{ff}\, \mathbf{ff}
\mapsto^* \mathbf{ff}$, $O\, u\, \mathbf{tt} \mapsto^* \mathbf{tt}$,
$O\, \mathbf{tt} \, u \mapsto^* \mathbf{tt}$ for \emph{every}
$\lam$-term $u$, where $\mathbf{tt}, \mathbf{ff}$ are the boolean
constants. As a consequence, there cannot be a $\lam$-term $\mathsf{O}$
such that $\mathsf{O} [ \mathbf{ff}/x][ \mathbf{ff}/y] \mapsto^*
\mathbf{ff}$, $\mathsf{O} [u/x][ \mathbf{tt}/y] \mapsto^*
\mathbf{tt}$, $\mathsf{O} [\mathbf{tt}/x][u /y] \mapsto^* \mathbf{tt}$
for every $\lam$-term $u$. 
To implement a $\lama$-term with the above property  we process the
two inputs in parallel. If both inputs evaluate to $\mathbf{ff}$,
though, at least one process needs to have all the information in
order to output the result $\mathbf{ff}$. Hence the simple
topology\tikzstyle{proc}=[circle, minimum size=2mm, inner sep=0pt,
draw]
\begin{tikzpicture}[node distance=1.5cm,auto,>=latex', scale=0.6]
\node [proc] (1) at (0,0) {}; \node [proc] (2) at (2,0) {}; \path[->]
(1) edge [thick, bend left=10] (2);\end{tikzpicture} is enough.
Proc.~\ref{proc:topology} extracts from this graph the axiom
schema $ (\mathbb{A}\rightarrow \mathbb{A}\land \bot) \vel
(\mathbb{B}\impl \mathbb{B} \et \mathbb{A}) $ with
reduction\begin{footnotesize}
$$ \pp{a}{\mathcal{C}[\send{a} w ]\p ({\mathcal D}_1[\get{a}v_1] \p \!\shortdots\!\! \p
    {\mathcal D}_n[\get{a}v_n] )} \; \mapsto\; \pp{a} {\mathcal{C}[\get{a} w ]\p
({\mathcal D}_1[\lan v_1 , w\ran ] \p \! \shortdots\!\! \p {\mathcal
D}_n[\lan v_n, w\ran ] )} $$
\end{footnotesize}We add to $\lama$ the boolean type, $\mathbf{tt},
\mathbf{ff}$ and an $\mathsf{if} \_\mathsf{then}\_ \mathsf{else}\_ $
construct~\cite{Girard}.  We define in $\lama$

\centerline{{\small $\mathsf{O}  := \pp{a}{\mathsf{if} \; x \; \mathsf{then} \;
      \mathbf{tt} \; \mathsf{else} \; {\send{a}\, \mathbf{ff}\, \pi_{0} }
      \p \mathsf{if} \; y \; \mathsf{then} \; \mathbf{tt} \;
        \mathsf{else} \; \get{a} \, (\lam x^{\fal} \, x) \, \pi_1 } $}}\smallskip

\noindent where we assume $\send{a}: \mathsf{Bool} \IMPL
\mathsf{Bool}\land \bot$ in the first process and that $\get{a}: \ver \impl
\ver \et \mathsf{Bool}$ in the second one. Now, on one hand
 \begin{footnotesize}$$\mathsf{O} [ u/x][ \mathbf{tt}/y] =
\pp{a}{\mathsf{if} \, u \, \mathsf{then} \, \mathbf{tt} \,
\mathsf{else} \,{\send{a}\, \mathbf{ff}\pi_{0} }\parallel \mathsf{if}
\,\mathbf{tt} \, \mathsf{then} \, \mathbf{tt} \, \mathsf{else} \, \get{a} \,
(\lam x^{\fal} \, x) \, \pi_1 } \mapsto^{*} \pp{a} {\mathsf{if} \, u
\, \mathsf{then} \, \mathbf{tt} \, \mathsf{else} \, \send{a}\,
\mathbf{ff}\pi_{0} \parallel \mathbf{tt}} \, \mapsto \,
\mathbf{tt}$$\end{footnotesize}and symmetrically $\mathsf{O}[
\mathbf{tt}/x][u/y]\mapsto^{*} \, \mathbf{tt} $.  On the other hand,
$ \mathsf{O} [ \mathbf{ff}/x][
\mathbf{ff}/y]\mapsto^{*} \mathbf{ff}$.
%\begin{footnotesize}$$ \mathsf{O} [ \mathbf{ff}/x][
%\mathbf{ff}/y]\mapsto^{*} \pp{a}{\mathsf{if} \; \mathbf{ff} \;
%\mathsf{then} \; \mathbf{ff} \; \mathsf{else} \; {\send{a}\, \mathbf{ff}\,
%\pi_{0} } \parallel \mathsf{if} \; \mathbf{ff} \; \mathsf{then} \;
%\mathbf{ff} \; \mathsf{else} \; \get{a} \, (\lam x^{\fal} \, x) \, \pi_1 }
%\mapsto^{*} $$$$ \pp{a}{{\send{a}\, \mathbf{ff}\,\pi_{0} } \parallel \get{a} \,
%(\lam x^{\fal} \, x) \, \pi_1} \mapsto \; \pp{a}{\get{a}\,
%\mathbf{ff}\,\pi_{0}\p\lan (\lam x^{\fal} \, x) , \mathbf{ff} \ran \,
%\pi_1} \; \; \mapsto \; \lan (\lam x^{\fal} \, x) , \mathbf{ff} \ran
%\, \pi_1 \mapsto \mathbf{ff}$$\end{footnotesize}\end{example}
\end{example}

\begin{example}[A Parallel Program for Computing  $\pi$]

We implement in $\lama$ a parallel program for computing arbitrary
precise approximations of $\pi$. As is well known, $\pi$ can be computed
as follows:
%\begin{footnotesize}
$\pi=\lim_{l\to\infty }
\frac{1}{l} \sum_{i=1}^{l} f(\frac{i-\frac{1}{2}}{l})$ where 
  $f(x)=\frac{4}{1+x^{2}}$.
%\end{footnotesize}
For any given $l$, instead of
calculating sequentially the whole sum $ \sum_{i=1}^{l}
f(\frac{i-1/2}{l})$ and then dividing it by $l$, it is more efficient
to distribute different parts of the sum to $p$ parallel
processes. When the processes terminate, they can send the results
$m_{1}, \ldots, m_{p}$, as shown below, to a process which computes
the final result $\frac{1}{l}(m_{1}+ \ldots + m_{p})$ (see
also~\cite{loogen2011}).
%
% it is more efficient to distribute it to $p$ parallel processes and
% let the first process compute the sum of the first $l/p$ terms, the
% second compute the sum of the next $l/p$ terms, and so on
% (see~\cite{Eden}). When the processes have computed these partial
% sums, they can send the results $m_{1}, \ldots, m_{p}$ to another
% process which computes the final result $\frac{1}{l}(m_{1}+ \ldots +
% m_{p})$. The needed topology is thus:
%
The graph in the following figure, % Figure~\ref{fig:concurrent_pi},
in which we omit reflexive edges, is encoded by 
the axiom $\ax = (\mathbb{A}_{1}\rightarrow \mathbb{A}_{1}\land \bot)\lor \ldots \lor
(\mathbb{A}_{p}\rightarrow \mathbb{A}_{p}\land \bot) \lor \mathbb{B} \rightarrow ( \mathbb{B} \land
\mathbb{A}_{1} \land \ldots\land \mathbb{A}_{p} )$.
% (cf. Procedure~\ref{proc:topology}).
\tikzstyle{proc}=[circle, minimum size=1mm, inner sep=0pt, draw]
  \begin{tikzpicture}[node distance=1.3cm,auto,>=latex', scale=0.2]

    \node [proc, label={$\sum\limits _{i=1}^{l/p}
f(\frac{i-\frac{1}{2}}{l})$}] (4) at (0,0) {};

\node [proc, label={$\sum\limits _{i=l/p+1} ^{2(l/p)}
f(\frac{i-\frac{1}{2}}{l})$}] (5) at (15,0) {};

\node [proc, label={
$\sum\limits_{i=2(l/p)+1}^{3(l/p)}f(\frac{i-\frac{1}{2}}{l})$}] (6) at
(30,0) {};

% \node [proc, label={
%       $\sum\limits_{i=3(l/p)+1}^{4(l/p)}f(\frac{i-\frac{1}{2}}{l})$ }] (7) at
%     (9,0) {}; 

\node [] (1) at (40,0) {\ldots}; 

\node [proc, label={$\sum\limits
_{i=(p-1)(l/p)+1}^{l} f(\frac{i-\frac{1}{2}}{l})$}] (2) at (55,0) {};

\node [proc, label={below: $\frac{1}{l}(m_{1}+ \ldots + m_{p})$}] (3)
at (25,-3) {};

\path[->] (4) edge [thick, bend right=8] node {$m_1$} (3);

\path[->] (5) edge [thick, bend right=8] node {$m_2$} (3);

\path[->] (6) edge [thick] node {$m_3$} (3);

\path[->] (2) edge [thick, bend left=13] node {$m_p$} (3);
% \path[->] (1) edge [thick, loop left, looseness=30] (1);
% \path[->] (2) edge [thick, loop left, looseness=30] (2);
% \path[->] (3) edge [thick, loop below, looseness=30] (3);
% \path[->] (4) edge [thick, loop left, looseness=30] (4);
% \path[->] (5) edge [thick, loop left, looseness=30] (5);
% \path[->] (6) edge [thick, loop left, looseness=30] (6);

% \path[->] (7) edge [thick] node {$m_4$}
%     (3);
\end{tikzpicture}\\
To write the program, we add to
$\lama$ types and constants for rational numbers, together with
function constants $\mathsf{f}_{1}, \dots, \mathsf{f}_{p}, \mathsf{sum}$ such that
\begin{small}
$$\mathsf{f}_{k}\, l \mapsto^{*}
\sum_{i=(k-1)l/p+1}^{(kl/p)} f(\frac{i-\frac{1}{2}}{l}) \quad
\mbox{and} \quad \mathsf{sum}\, \lan n_{1}, \ldots, n_{i}\ran\,
l\mapsto^{*} \frac{1}{l}(n_{1}+\ldots+ n_{i})$$
\end{small}thus computing respectively
the $p$ partial sums and the final result.
The $\lama$ term that takes as input the length $l$ of the
summation to be carried out and yields the corresponding approximation
of $\pi$ is:$\quad  \pp{a}{\ \send{a}( {\mathsf{f}_{1}\, l)}\pi_{0} \p \ldots \p
{\send{a}(\mathsf{f}_{p}\, l)}\pi_{0}\p \mathsf{sum}\, (\get{a}\, (\lambda
x^{\bot}\, x)\, \pi_{1})\, l\ }\; $. By instantiating   $\mathbb{A}_{1}, \ldots ,
\mathbb{A}_{p}$ in the extracted $\ax$ with the type $\mathsf{Q}$ of rational numbers, we type all displayed occurrences of $a$ in the
first $p$ threads by $\mathsf{Q} \impl\mathsf{Q}\land \bot$, and the last by $\ver
\rightarrow ( \ver \land \mathsf{Q} \land \ldots\land \mathsf{Q} )$.
%
%
%
% , thus
% the term can be typed as an expression of $\lama$, with
% $\ax=(A_{1}\rightarrow A_{1}\land \bot)\lor \ldots \lor
% (A_{p}\rightarrow A_{p}\land \bot)\lor \ver\rightarrow ( \ver \land
% A_{1} \land \ldots\land A_{p} )$, by letting $A_{1}=\ldots=A_{p}=A$.
%
%
% $$\lambda l\ \pp{a}{\ a(	\mathsf{f}_{1}\, l)\pi_{1} \p \ldots \p a(\mathsf{f}_{p}\, l)\pi_{1}\p  \mathsf{sum} (a 0 \pi_{1})\, l\ }$$
% All displayed occurrences of $a$ in the first $p$ threads have type $A
% \impl A \et A$, while the last has type $A\rightarrow ( A \land A
% \land \ldots\land A )$, thus the term can be typed as an expression of
% $\lama$, with $\ax=(A_{1}\rightarrow A_{1} \et A_{1})\lor \ldots \lor
% (A_{p}\rightarrow A_{p} \et A_{p})\lor B\rightarrow ( B \land A_{1}
% \land \ldots\land A_{p} )$, by letting $A_{1}=\ldots=A_{p}=B=A$.
%
Given any multiple $n$ of $p$, we have\\{\footnotesize$(\pp{a}{\ \send{a}(	{\mathsf{f}_{1}\, l) \,\pi_{0}} \!\p \!\shortdots\!\! \p\! {\send{a}(\mathsf{f}_{p}\, l)\,\pi_{0}}\!\p \! \mathsf{sum}\, (\get{a}\, (\lambda x^{\bot}\, x)\, \pi_{1})\, l\  })[n/l] = \pp{a}{\ \send{a}(	{\mathsf{f}_{1}\, n) \,\pi_{0}} \!\p \!\shortdots\!\! \p\! {\send{a}(\mathsf{f}_{p}\, n)\,\pi_{0}} \!\p  \mathsf{sum}\, (\get{a}\, (\lambda x^{\bot}\, x)\, \pi_{1})\, n\ }$}\begin{footnotesize}$$\mapsto\; \pp{a}{\ \send{a}( {\sum_{i=1}^{(n/p)} f(\frac{i-\frac{1}{2}}{l}))\,\pi_{0}} \p \ldots \p
{\send{a}(\sum_{i=(p-1)n/p+1}^{n} f(\frac{i-\frac{1}{2}}{l}))\,\pi_{0}}\p
\mathsf{sum}\, (\get{a}\, (\lambda x^{\bot}\, x)\, \pi_{1})\, n\ }$$
  $$\mapsto^{*} \; \mathsf{sum}\, (\lan\sum_{i=1}^{(n/p)}
    f(\frac{i-\frac{1}{2}}{l}))), \ldots, \sum_{i=(p-1)n/p+1}^{n}
f(\frac{i-\frac{1}{2}}{l}))\ran\, n\ \quad \mapsto^{*} \;\frac{1}{n}
\sum_{i=1}^{n} f(\frac{i-\frac{1}{2}}{n})$$\end{footnotesize}\end{example}

\begin{example}[A Parallel Floyd--Warshall Algorithm]
\label{ex:floyd-warshall}
 
We define a $\lama$ term that implements a parallel version of the
Floyd--Warshall algorithm.
%, a well known procedure for computing shortest paths in graphs. 
The algorithm takes as input a directed 
graph and outputs a matrix containing the length of the
shortest path between each pair of nodes.
%
%
%The sequential version of the algorithm can be
%substantially sped-up by generating a number of parallel processes and
%fixing for each of them some rows of the output matrix to compute.
%
% 
Formally, the input graph is coded as a matrix $I(0)$ and the nodes of
the graphs are labeled as $1, 2, \ldots, n$. Then the sequential
Floyd--Warshall algorithm computes a sequence of matrixes $I(1),
\ldots, I(n)$, representing closer and closer approximations of the
desired output matrix.  In particular, the entry $(i,j)$ of $I(k)$ is
the length of the shortest path connecting $i$ and $j$ such that every
node of the path, except for the endpoints, is among the nodes $1, 2,
\ldots k$.  Now, each matrix $I(k)$ can be easily computed from
$I(k-1)$.  The idea is that passing through the node $k$ might be
better than passing only through the first $k-1$ nodes, or not.
Hence in order to compute $I(k+1)$ from $I(k)$, one only needs to
evaluate the right-hand side of the following equation:
\begin{equation}
\label{recursion} I_{i, j}(k)= \mathsf{min}(I_{i,j}(k-1), I_{i,
k}(k-1) + I_{k, j}(k-1))
\end{equation} Indeed, $I_{i,j}(k-1)$ represents the shortest
path between $i, j$ passing only through the first $k-1$ nodes, while
$I_{i,k}(k-1) + I_{k, j}(k-1)$ computes the shortest path between $i, j$ that
passes through $k$ and the first $k-1$ nodes.
To speed-up this computation, we create a parallel process for each row $I(k)$ of the matrix and put the process in charge
of computing the row.  We say that the $i$-th process has to compute
the $i$-th row $I_{i}(k)$. Which information does it need?
Actually, only two rows: $I_{i}(k-1)$ and $I_{k}(k-1)$. Hence at
each round $k$ the process-$i$ only lacks the row $I_{k}(k-1)$ to
perform its computation. This row can be communicated to process-$i$
by the process-$k$, in charge to compute that row. These
considerations lead to a well-known parallel algorithm.
%version of the Floyd--Warshall algorithm.

\subsubsection*{The ring Floyd--Warshall algorithm}

1. Take the input $n\times n$-matrix $I(0)$ and distribute the $i$-th
row $I_{i}(0)$ to process-$i$. Organize the $n$ processes in a ring
structure such as the following (see~\cite{loogen2011}), omitting
reflexive edges:\begin{center}
  \tikzstyle{proc}=[circle,minimum size=2mm, inner sep=0pt, draw]
  \hspace{-140pt}\begin{tikzpicture}[node
    distance=1.3cm,auto,>=latex', scale=0.17]

\node [] (0) at (-25,0)
    {$(*)$};

\node [] (0) at (-3,3) {\ldots};

\node [proc, label={above: $I_1(\; )$}] (1) at (0,4.2) {};
    
\node [proc, label={above right: $I_2(\; )$}] (2) at (3,3) {};

\node [proc, label={right: $I_3(\; )$}] (3) at (4.2,0) {};

\node [proc, label={below right: $I_4(\; )$}] (4) at (3,-3) {};

\node [proc, label={below: $I_5(\; )$}] (5) at (0,-4.2) {};

\node [proc, label={below left: $I_6(\; )$}] (6) at (-3,-3) {};

\node [proc, label={left: $I_7(\; )$}] (7) at (-4.2,0) {};

\path[->] (1) edge [thick, bend left=15] (2);

\path[->] (2) edge [thick, bend left=15] (3);

\path[->] (3) edge [thick, bend left=15] (4);

\path[->] (4) edge [thick, bend left=15] (5);

\path[->] (5) edge [thick, bend left=15] (6);

\path[->] (6) edge [thick, bend left=15] (7);

\path[->] (7) edge [thick, bend left=15] (0);

\path[->] (0) edge [thick, bend left=15] (1);
  \end{tikzpicture}
\end{center}
\vspace{-0.1cm}
\noindent 2. For $k=1$ to $n$, starting from process-$k$, let all the processes
forward the row $I_{k}(k-1)$ to their successors in the ring, until
the process $k+1$ receives again the same row. After the row has
circulated, let the processes compute in parallel the rows of the
matrix $I(k)$.

\noindent 3. Let $I(n)$ be the output.
\smallskip

\noindent 
%The idea of the Ring Floyd--Warshall Algorithm is that, at any stage
%In the Ring Floyd--Warshall Algorithm, at any stage
In this algorithm, at any stage
$k$,  communicating the required row $I_{k}(k-1)$ through the ring
structure requires less time compared to computing
each row of the matrix, therefore the overhead of the communication is
compensated by the speed-up in the matrix computation.

\subsubsection*{A  $\lama$ program for the Floyd--Warshall algorithm}

To write a $\lama$ program that computes the ring
Floyd--Warshall Algorithm, we add integers to $\lama$,
as usual, and we denote with $A$ the function type
corresponding to the rows of the matrixes $I(k)$ computed by the
algorithm. The expression $\mathrm{I}_x(y)$ represents the function
computing the $x$\textsuperscript{th} line of the matrix at the
$y$\textsuperscript{th} stage of the algorithm.  We assume that the value of ${I}_x(y)$ 
contains information about $x$ and $y$.
%the line number $x$ and the stage number $y$. 
We also add the
constant function $f:A \et A \impl A $ such that:
$f\lan \mathrm{I}_i(k-1) , \mathrm{I}_{k}(k-1) \ran = \mathrm{I}_i(k) $ and
$f\lan \mathrm{I}_z(l) , \mathrm{I}_{m}(n) \ran = \mathrm{I}_z(l)$.
%\begin{itemize}
%\item with input $\lan \mathrm{I}_i(k-1) , \mathrm{I}_{k}(k-1)
%\ran $ outputs $ \mathrm{I}_i(k) $
%\item for any other pair $\lan \mathrm{I}_z(l) , \mathrm{I}_{m}(n) \ran $, outputs $\mathrm{I}_z(l)  $
%\end{itemize} 
%%%% VERSION WITH DUMMY VARIABLES
% Finally, we add the constant function $f:A \et A \et S \impl A $ such
% that:
% \begin{itemize}
% \item with input $\lan \mathrm{I}_i(k-1) , \mathrm{I}_{k}(k-1), \delta
% \ran $ outputs $\lan \mathrm{I}_i(k) , \delta \ran $
% \item for any other pair $\lan \mathrm{I}_z(l) , \mathrm{I}_{m}(n) ,
% \delta \ran $, outputs $\lan \mathrm{I}_z(l) , \delta \ran $
% \end{itemize} 

The first equation implements the calculations needed for equation
(\ref{recursion}). The second comes into play at the end of each
iteration of the algorithm, when the process that receives twice the
same row discards it, starts sending its own row, and
begins the next iteration of the algorithm.

% %%%% VERSION WITH DUMMY VARIABLES
% The third argument
% $\delta : S$ of $f$ is a dummy term such that $S$ is not a subformula
% of $A$ and it is just introduced to activate the communication.

% %%%% VERSION WITH DUMMY VARIABLES OF THE PROCESS DEFINITION 
% {\   \emph{Process} $p_1$:\begin{small}
%   \[ ((f,a)^{n} \, \lan I_1(0) , \delta \ran ) \pi_0 \p ((a,
%     \pi_1)^{n+1}\, \lan 0 , I_1(0) , \delta \ran ) \pi_0 \pi_0 \p (
%     (a, \pi_1)^1\, \lan 0, \lan I_1(0) , \delta \ran \ran ) \pi_0
%     \pi_0 \]
% \end{small}
% \emph{Process} $p_{i}$, with $1< i < n$:
% \begin{footnotesize}
%   \[ ((f,a)^{n+1} \, \lan I_i(0) , \delta \ran ) \pi_0 \p ((a,
%     \pi_1)^{n+2}\, \lan 0,\lan  I_i(0) , \delta\ran  \ran ) \pi_0
%     \pi_0 \]\[ \quad \p (a \,
%     ((f,a)^{i} \, \lan I_i(0) , \delta \ran ) ) \pi_0 \pi_0 \p ((a,
%     \pi_1)^{i}\, \lan 0, \lan I_i(0) , \delta\ran  \ran ) \pi_0 \pi_0 \]
% \end{footnotesize}
% \emph{Process} $p_n$:
% \begin{footnotesize}
%   \[ ((f,a)^{n+1} \, \lan I_i(0) , \delta \ran ) \pi_0 \p ((a,
%     \pi_1)^{n+1}\, \lan 0, \lan I_i(0) , \delta \ran \ran ) \pi_0 \pi_0 \]\[ \quad\p ( a \,
%     ((f,a)^{i} \, \lan I_i(0) , \delta \ran ) ) \pi_0 \pi_0 \p ( (a,
%     \pi_1)^{i}\, \lan 0, \lan I_i(0) , \delta \ran \ran ) \pi_0 \pi_0 \]
% \end{footnotesize}} 

As handy notation, for any three terms $u, v, s$ we define $(u,v)^1
s$ as $u(v s)$, and $(u,v)^{n+1} s$ for $n>0$ as $u(v((u,v)^n
s))$. Moreover, for any two terms $u, s$ we define $(u,\pi_i)^1 s$ as
$u(s\,\pi_i)$, and $(u,\pi_i)^{n+1} s$ for $n>0$ as $u(((u,\pi_i)^n s)
\pi_i)$. Intuitively, $ (u,v)^{n} s$ represents what we
obtain if we take a term $s$ and then apply alternately $v$ and $u$
to the term resulting from the last operation.  We obtain,
ultimately, a term of the form $u(v (u(v(\ldots u(vs) \ldots))))$ in
which and $u$ and $v$ occur $n$ times each.  The notation
$(u,\pi_i)^{n} s$ is analogous, but instead of applying $v$ we apply
the projection $\pi_i$.
%We are now ready to define the $n$ processes that are run in
%parallel during the execution of the ring Floyd--Warshall algorithm: 
The $n$ processes that run in
parallel during the execution of the algorithm are ($1< i < n$): 
\\ 
\emph{Process}
$p_1$:  $\quad (f,\get{a})^{n} \, I_1(0) \p ((\send{a}, \pi_1)^{n}\,\get{a}
I_1(0) )\pi_0 \p (\send{a} I_1(0) ) \pi_0 $

\noindent \emph{Process} $p_{i}$: %, with $1< i < n$:
\begin{small}
  $(f,\get{a})^{n+1} \, I_i(0) \p ((\send{a}, \pi_1)^{n+1-i}\, (\get{a}, \pi_{1})^{i}\,\get{a}\,I_i(0) )
    \pi_0 \p (\send{a} \, (f,\get{a})^{i} \, I_i(0) ) \pi_0 \p ((\send{a}, \pi_1)^{i-1}\,
    \get{a} I_i(0)  ) \pi_0 $ 
\end{small}\emph{Process} $p_n$:
%\begin{small}
  $ (f,\get{a})^{n+1} \, I_n(0)
    %% SECOND THREAD
    % \p ((\get{a}, \pi_1)^{n}\, \get{a} I_n(0) 
    % ) \pi_0 
    \p ( \send{a} \, ((f,\get{a})^{n} \, I_n(0)) ) \pi_0 \p (( \send{a} ,
    \pi_1)^{n-1}\, \get{a} I_n(0) )\pi_0 $ 
%\end{small}

For an intuitive reading of the parts of these terms,
consider the notation $(f,\get{a})^{m}\, t $. This notation represents a
term of the form $f(\get{a} ( \ldots f(\get{a}t) \ldots ))$. Only one operation
can be immediately performed with a term like this: using the
innermost application of $a$ and consume $at$ to receive a message. The received message will
be the argument of the innermost $f$. The value that $f$ computes, in
turn, will be the argument of the next communication
channel. Terms of this form alternate two phases: one in
which they receive, and one in which they apply $f$ to
the received message.   The terms
$(\send{a},\pi_1)^{m}\, t $ have the form $\send{a}( ( \ldots \send{a}( t \pi_1) \ldots
)\pi_1)$. These terms project, send and receive; and then start
over. The projections are used to select messages from the tuple of received messages. The selected messages are not used, but forwarded to another process.

Each process $p_{i}$ ($i>1$) is a parallel composition
of four threads: the first and the third compute
the row $I_{i}(k)$, and the second and the fourth %have the task of
send and forward rows. The process $p_1$ behaves in the same
way, but has three threads: the first computes the rows $I_{1}(k)$,
the second receives and forwards rows, and the third sends its own
row. The term implementing the 
%ring Floyd--Warshall 
algorithm is $\pp{a} { \, p_{1} \p \ldots \p p_{n} }$. The axiom
extracted from the ring structure above by
Proc.~\ref{proc:topology} is
%\vspace{-0.1cm}
%\begin{small}
$\ax=(\mathbb{A}_{1} \rightarrow \mathbb{A}_{1} \et
    \mathbb{A}_{n})\lor (\mathbb{A}_{2} \rightarrow \mathbb{A}_{2} \et \mathbb{A}_{1} ) \lor\ldots \lor
    (\mathbb{A}_{n-1} \rightarrow \mathbb{A}_{n-1} \et \mathbb{A}_{n-2} ) \lor (\mathbb{A}_{n} \rightarrow
    \mathbb{A}_{n} \et \mathbb{A}_{n-1}) $.
    %\] \end{small}
We instantiate $\mathbb{A}_{1} , \ldots , \mathbb{A}_{n}$ in $\ax$  with $A$ and type $a$ by  $A \impl A \et A $.  

We present some steps of the execution of the 
instance \mbox{$\pp{a}{\, p_1 \p
p_2  \p p_3}  $}. 
%We will only describe some key computational steps and leave the entire reduction in the appendix.
%\begin{footnotesize}\begin{align*} & \po{a} \big(
%f(a(f(a(f(a I_1(0) ))))) \p (a((a((a((a (\lan 0, I_1(0) \ran \pi_1) )
%\pi_1))\pi_1))\pi_1) ) \pi_0 \\ & \quad \p (a(\lan 0, I_1(0) \ran
%\pi_1) ) \pi_0 \big) \\ & \p \chosen{\big( f(a(f(a(f(a(f(a I_2(0)
%))))))) \p ( a((a((a((a((a (\lan 0, I_2(0) \ran \pi_1))
%\pi_1))\pi_1))\pi_1))\pi_1) ) \pi_0 } \\ & \quad \chosen{\p (a(f(a( f
%(a I_2(0) )))) ) \pi_0 \p ( a((a (\lan 0, I_2(0) \ran \pi_1)) \pi_1) )
%\pi_0 } \big) \\ & \p \big( f(a(f(a(f(a(f(a I_3(0) ))))))) \p
%(a((a((a((a (\lan 0, I_3(0) \ran \pi_1)) \pi_1))\pi_1))\pi_1) ) \pi_0
%\\ & \quad \p (a(f(a( f(a(f (a I_3(0) )))))) ) \pi_0 \p (a((a((a (\lan
%0, I_3(0) \ran \pi_1)) \pi_1))\pi_1) ) \pi_0 \big) \pc
%\end{align*}\end{footnotesize}
According to our normalization principles,
we normalize the threads sending messages right before they
communicate; then, we normalize the rightmost threads receiving
the messages. All other $\IL$ reductions are performed
in-between communications. Finally, we make sure to contract all
intuitionistic redexes before the beginning of a new iteration of the
algorithm.
We start with:
\begin{footnotesize}
  \begin{align*} \mapsto^* 
  & \po{a}  \big(   f(\get{a}(f(\get{a}(f(\get{a}  I_1(0)))))) \p
      (\send{a}((\send{a}((\send{a}((\get{a}   I_1(0) ) \pi_1))\pi_1))\pi_1)) \pi_0 \p
      (\send{a} I_1(0)) \pi_0   \big)     \\
    & \p
    % \chosen{
      \big(   f(\get{a}(f(\get{a}(f(\get{a}(f(\get{a}  I_2(0)))))))) \p  (\send{a}((\send{a}((\get{a}((\get{a}((\get{a}
       I_2(0))
      \pi_1))\pi_1))\pi_1))\pi_1)) \pi_0
      % }
  \\ & \quad
    % \chosen{
      \p  (\send{a}(f(\get{a}( f  (\get{a} I_2(0))))) ) \pi_0 \p (\send{a}((\get{a}   I_2(0) ) \pi_1)) \pi_0
      % }
      \big)     \\ & \p \big(   f(\get{a}(f(\get{a}(f(\get{a}(f(\get{a}  I_3(0))))))))
      %% SECOND THREAD
      % \p (\get{a}((\get{a}((\get{a}((\get{a}  I_3(0) ) \pi_1))\pi_1))\pi_1)) \pi_0
                     \p (\send{a}(f(\get{a}( f(\get{a}(f  (\get{a}  I_3(0)))))))) \pi_0
    % \\ & \quad
    \p (\send{a}((\send{a}((\get{a}  I_3(0) ) \pi_1))\pi_1)) \pi_0 \big) \pc
\end{align*}
\end{footnotesize}We first transmit the value $I_1(0)$ from the
first process to the second process and we move the focus to the next
term. As an aid to the reader, we display
between $ ^{\star} \; ^{\star}$ the occurrences of the message that are involved in the reduction:\begin{footnotesize}\begin{align*} & \po{a} \big( f(\get{a}(f(\get{a}(f(\get{a}
I_1(0)))))) \p (\send{a}((\send{a}((\send{a}((\get{a}  I_1(0) ) \pi_1))\pi_1))\pi_1)) \pi_0 \p
( ^{\star} \get{a} I_1(0) ^{\star}  ) \pi_0 \big)   \\ 
& \p  \big( f(\get{a}(f(\get{a}(f(\get{a}(f \lan I_2(0), ^{\star} I_1(0) ^{\star}   \ran)))))) \p
 (\send{a}((\send{a}((\get{a}((\get{a}(\lan  I_2(0), ^{\star} I_1(0) ^{\star} \ran\pi_1))\pi_1))\pi_1))\pi_1) )
  \pi_0    
   \\ & \quad 
  \p (\send{a}(f(\get{a}(f \lan I_2(0),^{\star} I_1(0) ^{\star} \ran)))) \pi_0 \p (\send{a}(\lan
    I_2(0), ^{\star} I_1(0) ^{\star}\ran \pi_{1}  )) \pi_0 \big)
\\ 
& \p% \chosen{
      \big(  f(\get{a}(f(\get{a}(f(\get{a}(f(\get{a} I_3(0))))))))
      % \p (\get{a}((\get{a}((\get{a}((\get{a} (\lan 0,
% I_3(0)\ran \pi_1)) \pi_1))\pi_1))\pi_1)) \pi_0
(\send{a}(f(\get{a}(
      f(\get{a}(f (\get{a} I_3(0)))))))) \pi_0
      % }
  % \\ & \quad                                                                    % \chosen{
             \p (\send{a}((\send{a}((\get{a} (\lan 0, I_3(0)\ran \pi_1))
       \pi_1))\pi_1)) \pi_0
       % }
       \big) \pc \end{align*}\end{footnotesize}
\begin{remark}
\label{memory}
  $I_2(0)$ is not destroyed by the communication but saved
  using the memorization mechanism. This term is needed, indeed, by
  the function $f$ in order to compute $I_2(1)$.
\end{remark}We normalize the receiver of the previous communication and keep
normalizing the other redexes in parallel, thus the rightmost thread
of the second process become ready to forward the value $I_1(0)$ to
the third thread:
\begin{footnotesize}\begin{align*}
\mapsto^* & \po{a}  \big( f(\get{a}(f(\get{a}(f(\get{a} I_1(0)))))) \p
(\send{a}((\send{a}((\send{a}((\get{a}  I_1(0) ) \pi_1))\pi_1))\pi_1)) \pi_0 \p
(\get{a}(I_1(0)) \big)) \pi_0   \\ 
& \p \big( f(\get{a}(f(\get{a}(f(\get{a} (f \lan I_2(0), I_1(0) \ran)))))) \p
(\send{a}((\send{a}((\get{a}((\get{a} I_1(0))\pi_1))\pi_1))\pi_1)) \pi_0 \\ & \quad\p (\send{a}(f(\get{a} (f \lan
  I_2(0), I_1(0) \ran)))) \pi_0 \p (\; ^{\star}\send{a}I_1(0)^{\star}\;  ) \pi_0 \big)
\\ 
& \p % \chosen{
      \big(   f(\get{a}(f(\get{a}(f(\get{a}(f(\get{a} I_3(0))))))))
      % \p (\get{a}((\get{a}((\get{a}((\get{a}
% I_3(0)) \pi_1))\pi_1))\pi_1)) \pi_0
  \p (\send{a}(f(\get{a}(
      f(\get{a}(f (\get{a} I_3(0)))))))) \pi_0
      % }
                     % \\ & \quad
                      % \chosen{
                           \p (\send{a}((\send{a}((\get{a} I_3(0)) \pi_1))\pi_1)) \pi_0
                           % }
\big) \pc    \end{align*}\end{footnotesize}The third thread
receives $I_1(0)$ and the computation continues: $\quad \mapsto  \dots$\\
At the end of the first of the three cycles, the second process receives $I_1(0)$ again:
% \begin{footnotesize}\begin{align*} \mapsto^* & \po{a} \big( f(a(f(a
% I_1(1)))) \p (a((a((a I_1(0))\pi_1))\pi_1)) \pi_0 \p \lan I_1(0) ,
% I_1(0)\ran \pi_0 \big) \\ & \p \big(\chosen{ f(a(f(a(f(a I_2(1))))))
% \p (a((a((a((a I_1(0))\pi_1))\pi_1))\pi_1)) \pi_0 } \\& \quad
% \chosen{\p (a(f(a I_2(1))) \p aI_1(0)) \pi_0} \big) \\ & \p \big(
% f(a(f(a(f(a I_3(1)))))) \p (a((a((a(I_1(0)))\pi_1))\pi_1) ) \pi_0 \p
% (a(f(a( f(a I_3(1))))) ) \pi_0 \\& \quad \p ( a((a I_1(0))\pi_1))
% \pi_0 \big) \pc
% \end{align*}\end{footnotesize}
\begin{footnotesize}\begin{align*}
 \mapsto^* & \po{a} \big( f(\get{a}(f(\get{a} f \lan I_{1}(0), I_1(0)\ran))) \p
(\send{a}((\send{a}(( \, ^{\star} \get{a} I_1(0) ^{\star}\,)\pi_1))\pi_1) ) \pi_0 \p
\lan I_1(0) , I_1(0)\ran  \pi_0 \big)    \\ 
& \p  
\big( f(\get{a}(f(\get{a}(f \lan  I_2(0), \, ^{\star}I_1(0) ^{\star}\,  \ran ))))) \p (
 \send{a}((\send{a}((\get{a}( \lan I_1(0), \,  ^{\star} I_{1}(0) ^{\star} \,  \ran\pi_{1})\pi_1))\pi_1)) \pi_0 
 \\& \quad
   \p (
  \send{a}(f\lan  I_2(0) , \,  ^{\star}I_1(0) ^{\star} \, \ran ))) \pi_0 \p \lan
                                                                                   I_1(0),\,  ^{\star} I_1(0) ^{\star}\,  \ran \pi_0\big)  
\\
                      & \p  \big(
                      % \chosen{
     f(\get{a}(f(\get{a}(f(\get{a} f \lan I_3(0), I_{1}(0)\ran))) )) % \p
  % ( \get{a}((\get{a}((\get{a}(I_1(0)))\pi_1))\pi_1) ) \pi_0
     \p ( \send{a}(f(\get{a}(
                        f(\get{a} f \lan I_3(0), I_{1}(0)\ran))))) \pi_0
                        % }
                      % \\& \quad
                      % \chosen{
                      \p (\send{a}((\get{a} I_1(0))\pi_1)) \pi_0
                      % }
                      \big)
  \pc \end{align*}\end{footnotesize}The reception of $I_1(0)$
exhausts all forwarding channels of the second process  in the
rightmost thread  and triggers the communication of the %value
$I_2(1)$ just computed, thus starting the second cycle:\begin{footnotesize}
\begin{align*}
 \mapsto^* & \po{a} \big( f(\get{a}(f(\get{a} I_1(1)))) \p
( \send{a}((\send{a}((\get{a} I_1(0))\pi_1))\pi_1) ) \pi_0\p (\lan I_1(0) , I_1(0)\ran )
             \pi_0 \big)  
\\ 
& \p  \big( f(\get{a}(f(\get{a} I_2(1) ))) \p
 ( \send{a}((\send{a}((\get{a}I_1(0))\pi_1))\pi_1)) \pi_0 \p (\, ^{\star} \send{a} I_2(1)^{\star}\,   ) \pi_0
  \p \lan I_1(0), I_1(0) \ran  \pi_0 \big)    \\ 
           & \p \big(
           % \chosen{
 f(\get{a}(f(\get{a}(f(\get{a} I_3(1)))))) % \p ( \get{a}((\get{a}((\get{a}(I_1(0)))\pi_1))\pi_1) ) \pi_0
                      \p
  (\send{a}(f(\get{a}( f(\get{a}
             I_3(1))))) ) \pi_0
             % }
  % \\& \quad
  % \chosen{
  \p ( \send{a}((\get{a} I_1(0))\pi_1)) \pi_0
  % }
  \big)\pc
  % \\
 \quad  \mapsto  \dots 
\end{align*}\end{footnotesize}The reduction continues in the same
fashion for two other cycles, until the third process does not contain
any occurrence of $a$ anymore:
\begin{footnotesize}\begin{align*}
                      \mapsto^* & \po{a} \big(
                                  % \chosen{
                                  f\lan I_1(2),  I_3(2)\ran \p ( \send{a}
            I_3(2))  \pi_0 \p \lan I_1(0) ,
                                  I_1(0)\ran \pi_0
                                  % }
                                  \big) 
                                  \\ & \p  \big( f \lan  I_2(2),  I_3(2)\ran \p
                             ( \get{a} I_3(2) )  \pi_0 \p
\lan I_2(1), I_2(1)\ran \pi_0 \p \lan I_1(0), I_1(0) \ran \pi_0 \big) \\ & \p
 \big( f\lan I_3(2), I_3(2) \ran % \p  \lan I_2(1), I_3(2) \ran \pi_0
                                                                           \p
                                                                           \lan I_3(2), I_3(2) \ran  \pi_0 \p \lan I_2(1), I_2(1)
\ran  \pi_0 \big) \pc \end{align*}
\end{footnotesize}We then compute all occurrences of $f$ in parallel:
\begin{footnotesize}\begin{align*}
                      \mapsto^* & \po{a} \big(
                                  % \chosen{
                                  I_1(3) \p ( \send{a} I_3(2))  \pi_0 \p \lan I_1(0) ,
                                  I_1(0)\ran \pi_0
                                  % }
                                  \big) 
                                   \\ & 
                                  \p  \big( I_2(3) \p  ( \get{a} I_3(2))  \pi_0 \p
\lan I_2(1), I_2(1)\ran \pi_0 \p \lan I_1(0), I_1(0) \ran  \pi_0\big) \\ & \p
                                                                           \big(  I_3(3) \p % \lan I_2(1), I_3(2) \ran \pi_0 \p
\lan I_3(2), I_3(2) \ran \pi_0 \p \lan I_2(1), I_2(1)
\ran \pi_0  \big) \pc  \end{align*}
\end{footnotesize}We use a simplification reduction to
remove the intermediate results no more needed and only keep
the threads containing the result of the computation: $\;\mapsto^* I_1(3) \p I_2(3) \p
I_3(3)\,$.

\end{example}
\subsection*{Conclusions and Related Work}\label{sec:relwork}
We introduced $\lama$, a simple and yet expressive
typed parallel $\lambda$-calculus based on (suitable fragments of) 
propositional intermediate logics.  
%
%The main motivation behind $\lama$ was to introduce
%a parallel extension of the simply typed $\lambda$-calculus
%whose reductions are simple and yet expressive enough to code interesting
%parallel programs. 
$\lama$ is based on and greatly simplifies the calculi in~\cite{ACGTCS,ACG2018,ACGlics2017}.
The basic communication reductions in the
 calculi $\lambda_{CL}$~\cite{ACG2018} and $\lambda_G$~\cite{ACGlics2017} 
implement {\em particular} $\lama$ communications: the message passing
mechanisms based on classical logic %($\IL$ + $EM := \mathbb{A} \vee \neg \mathbb{A}$) 
and on G\"odel logic, %($\IL$ + $LIN:= (\mathbb{A} \impl \mathbb{B}) \vel (\mathbb{B} \impl \mathbb{A})$), 
respectively.  $\lambda_{CL}$ and $\lambda_G$ also contain 
additional reductions ({\em permutations} and {\em full cross}), needed
for proving weak normalization and the subformula property.
These calculi were generalized in~\cite{ACGTCS}, 
that introduces a Curry--Howard correspondence for all propositional intermediate logics characterized by classical disjunctive tautologies, 
the axioms for classical and G\"odel logic being particular cases. The reductions in~\cite{ACGTCS} are the same as 
for $\lambda_{CL}$ and $\lambda_G$ but their 
activation procedure is based on transmitting values and thus it is logic independent.
The simple(r) $\lama$ reduction allows us to encode interesting
parallel programs in Eden -- an extension of Haskell~\cite{loogen2011,Eden}, to prove 
strong normalization and to specify process networks in an automated way.
The price to pay is the lack of the subformula property 
and the fact that well-typed $\lama$-terms might contain deadlock.
In a model of parallel computation the latter is an unwanted feature,
 especially in absence\footnote{Note 
that as the number of $\lama$ communications is not type-determined, the system is compatible with the addition of a recursion operator.} of recursion or fixed points. 
Unlike $\lama$,
the typed versions of the $\pi$-calculus, starting with the seminal work in \cite{CairesPfenning},
are usually deadlock free, e.g., \cite{Wadler2012,TCP2013,Montesi}.

\bibliographystyle{plain}
\bibliography{bib_floyd-warshall}

\newpage 
\appendix
\section{Appendix}
\begin{theorem}[Adequacy Theorem]\label{Adequacy Theorem}
Suppose that $w: A$ is any simply typed $\lambda$-term, with intuitionistic free variables among $x_1: {A_1},\ldots,x_n:{A_n}$. For all terms $t_1, \ldots, t_n$ such that
 for $i=1,\ldots, n, \quad t_i\red A_i$
\[\mbox{we have} \quad w[t_1/x_1\cdots
t_n/x_n]\red A\]
\end{theorem}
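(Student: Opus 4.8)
The plan is to prove the statement by the standard Tait--Girard reducibility argument, namely by structural induction on the simply typed $\lambda$-term $w$ (equivalently, on its typing derivation), carrying the parallel substitution $\sigma = [t_1/x_1\cdots t_n/x_n]$ through each case. Since the Mini Adequacy Theorem (Th.~\ref{thm:MiniAdeq}) already settles the channel-free fragment, the only genuinely new ingredient here is the treatment of the two channel constants, which is precisely what Proposition~\ref{prop:channels} (Reducibility of Channels) was established for. I write $w\sigma$ for $w[t_1/x_1\cdots t_n/x_n]$ and, as usual, assume bound variables renamed so that no capture occurs and each newly bound variable is fresh for $t_1,\dots,t_n$.

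For the variable and the elimination cases the conclusion is immediate from the definition of reducibility. If $w = x_i$, then $w\sigma = t_i \red A_i$ by hypothesis. If $w = uv$ with $u : B\impl A$ and $v : B$, the induction hypothesis gives $u\sigma \red B\impl A$ and $v\sigma \red B$, whence $(u\sigma)(v\sigma)\red A$ directly by clause~3 of Definition~\ref{def:red}. If $w = u\,\pi_i$ with $u : B_0\et B_1$ and $A = B_i$, then $u\sigma \red B_0\et B_1$ by the induction hypothesis, which by clause~2 means exactly $(u\sigma)\pi_i\red B_i$. The two introduction cases are where Proposition~\ref{proposition-somecases} does the work. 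If $w = \lambda x^B u$ with $A = B\impl C$, I take any $s\red B$; then $(u\sigma)[s/x] = u[t_1/x_1\cdots t_n/x_n, s/x]\red C$ by the induction hypothesis applied to $u$ with the extended substitution, and part~(1) of Proposition~\ref{proposition-somecases} yields $\lambda x^B(u\sigma)\red B\impl C$. If $w = \langle u, v\rangle$ with $A = B\et C$, the induction hypothesis gives $u\sigma\red B$ and $v\sigma\red C$, and part~(2) yields $\langle u\sigma, v\sigma\rangle\red B\et C$.

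The remaining cases are the channel constants and $\mathsf{efq}$. Since channels are not intuitionistic variables, $\sigma$ does not act on them: if $w$ is $\send{a}^{A}$ or $\get{a}^{A}$, then $w\sigma = w$, and $w\red A$ is exactly Proposition~\ref{prop:channels}. If $w = \efq{P}{u}$ with $P$ atomic and $u:\bot$, then the induction hypothesis gives $u\sigma\red\bot$, hence $u\sigma\in\snn$ by \cruno. The term $(u\sigma)\,\mathsf{efq}_P$ is neutral and carries no redex of its own, so all its $\redn$-reducts lie inside $u\sigma$; as $u\sigma\in\snn$ this forces $(u\sigma)\,\mathsf{efq}_P\in\snn$, which is precisely $w\sigma\red P$ because $P$ is atomic. (Equivalently, one closes this case by \crtre{} combined with a side induction on the ordinal size $h(u\sigma)$ and \crdue.)

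I do not expect a serious obstacle inside this proof: every case reduces to a lemma already in place, and the real difficulty has been front-loaded into the stratification that precedes the statement. The one point deserving care is conceptual rather than computational: channels are base objects that no intuitionistic introduction or elimination rule produces and that the hypothesis substitution never reaches, so they cannot be handled by the induction hypothesis in the way variables are. This is exactly why the argument genuinely depends on Proposition~\ref{prop:channels}, which itself rests on the Mini Adequacy Theorem through Corollary~\ref{cor:sndt}; getting that dependency chain right, rather than any single case, is the crux of the proof.
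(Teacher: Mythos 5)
Your proof is correct and follows essentially the same route as the paper's: induction on the shape of $w$, with the variable, application, abstraction, and pair cases handled exactly as in the paper via Definition~\ref{def:red} and Proposition~\ref{proposition-somecases}, and the channel case discharged by Proposition~\ref{prop:channels}. The only difference is that you spell out the $\exfalso$ case (correctly, via neutrality and \crtre), which the paper folds into ``straightforward'' remaining cases.
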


\begin{proof}

Notation: for any term $v$ and formula $B$, we denote \[v[t_1/x_1\cdots t_n/x_n]\] with $\substitution{v}$. We proceed by induction on the shape of $w$. % and according to the shape of $w$. 

\begin{enumerate}

\item
If $w= x_{i}: A_{i}$, for some $i$, then
$A=A_i$. So $\substitution{w}=t_i\red
{A_i}={A}$.

\item
If $w= a: A_{i}$, for some $i$ and channel $a$, then
$A=A_i$. By Prop.~\ref{prop:channels}, $\substitution{a}=a\red A$.

\item If $w=ut$, then $ u:
B\rightarrow A$ and $ t: B$. So
$\substitution{w}=\substitution{u}\,\substitution{t}\red {A}$, for
$\substitution{u}\red {B}\rightarrow {A}$ and
$\substitution{t}\red {B}$ by induction hypothesis.

  \item If  $w=\lambda x^{\scriptscriptstyle B}\,
u$, then
$A=B\rightarrow C$. So, $\substitution{w}=\lambda x^{\scriptscriptstyle B}\, \substitution{u}$, since we may assume $x^{\scriptscriptstyle B} \neq x_1, \ldots, x_k$. For  every  $t\red
{B}$, by induction hypothesis on $u$, $\substitution{u}[t/x]\red{C}$.  
Hence, by Proposition \ref{proposition-somecases}, $\lambda x^{\scriptscriptstyle B}\, \substitution{u}\red {B}\rightarrow {C}=
{A}$.

\item The cases of pairs and projections are straightforward.\end{enumerate}\end{proof}

%\subsection*{More examples of $\lama$ programs}

We present an additional example of $\lama$ program. We simulate, in particular, a typical client-server interaction of an online sale.
\begin{example}[\textbf{Buyer and Vendor}]\label{ex:sale} We model the
following transaction: a buyer tells a vendor a product name
$\mathsf{prod}:\mathsf{String}$, the vendor computes the monetary cost
$\mathsf{price}:\mathbb{N}$ of $\mathsf{prod}$ and communicates it to
the buyer, the buyer sends back the credit card number $\mathsf{card}:
\mathsf{String}$ which is used to pay.
We introduce the following functions: $\mathsf{cost} :
\mathsf{String}\rightarrow \mathbb{N} $ with input a product name
$\mathsf{prod}$ and output its cost $\mathsf{price}$;
$\mathsf{pay\_for}: \mathbb{N}\rightarrow \mathsf{String} $ with input
a $\mathsf{price}$ and output a credit card number $\mathsf{card}$;
$\mathsf{use}: \mathsf{String}\rightarrow \mathbb{N}$ that produces
money using as input a credit card number $\mathsf{card}:
{\mathsf{String}} $.
The buyer and the vendor are the contexts
$\mathcal{B}$ and $\mathcal{V}$ of type $\mathsf{Bool}$.
The  communication channel $a$ is typed using the instance $(\mathsf{String} \impl \mathbb{N}) \vel (\mathbb{N} \impl \mathsf{String} ) $ of
the axiom $(A \impl B) \vel ( B \impl A) $.
The program is: \begin{small}\begin{align*} & \pp{a}{ \mathcal{B}[\send{a}(
\mathsf{pay\_for}(\send{a}(\mathsf{prod})))] \parallel
\mathcal{V}[\mathsf{use}( \send{a}( \mathsf{cost}(\get{a}\, 0)))]} \\ &\mapsto
\pp{a}{\mathcal{B}[\send{a}( \mathsf{pay\_for}(\get{a}(\mathsf{prod})))] \parallel \mathcal{V}[\mathsf{use}( \send{a}( \mathsf{cost}( \mathsf{prod} )))
]}
\\ &\mapsto \pp{a}{\mathcal{B}[\send{a}( \mathsf{pay\_for}(\get{a}(\mathsf{prod})))]  \parallel  \mathcal{V}[\mathsf{use}( \send{a}( \mathsf{price}))
]}\\
&\mapsto\pp{a}{\mathcal{B}[\send{a}( \mathsf{pay\_for}(\mathsf{price}))
] \parallel \mathcal{V}[\mathsf{use}( \get{a}( \mathsf{price}))]} \\
&\mapsto \; \pp{a}{\mathcal{B}[\send{a} (\mathsf{card})] \parallel
\mathcal{V}[\mathsf{use}( \get{a}( \mathsf{price}))]} \, \mapsto \,
\pp{a}{\mathcal{B}[\get{a}(
\mathsf{card})] \parallel \mathcal{V}[\mathsf{use}( \mathsf{card})]}
  \end{align*}\end{small}Finally $ \mapsto
\mathcal{V}[\mathsf{use}( \mathsf{card})]$: the buyer has performed its duty and the vendor uses the card number to obtain the due payment.
\end{example}

We present here the unabridged  version of the reduction in Example \ref{ex:floyd-warshall}.
 
\noindent \textbf{Full reduction of the ring Floyd--Warshall algorithm in $\lama$.} In this reduction, we  omit explanations but we indicate the selected process that can receive messages by underlining it.
 
\begin{footnotesize}
  \begin{align*} & \po{a} \big( f(a(f(a(f(a I_1(0)
)))))  \p (a((a((a((a (\lan 0, I_1(0) \ran \pi_1) )
\pi_1))\pi_1))\pi_1) ) \pi_0 \\& \quad  \p (a(\lan 0, I_1(0)  \ran
\pi_1) ) \pi_0  \big) \\ & \p \chosen{\big( f(a(f(a(f(a(f(a \lan
I_2(0) \ran)))))))  \p ( a((a((a((a((a (\lan 0, I_2(0) \ran \pi_1)) \pi_1))\pi_1))\pi_1))\pi_1) ) \pi_0 } \\ &
\quad \chosen{\p (a(f(a( f (a \lan I_2(0) \ran)))) ) \pi_0
 \p ( a((a (\lan 0, I_2(0) \ran \pi_1)) \pi_1) ) \pi_0
} \big) \\ & \p \big( f(a(f(a(f(a(f(a  I_3(0)
)))))))  \p (a((a((a((a (\lan 0, I_3(0) \ran
\pi_1)) \pi_1))\pi_1))\pi_1) )  \pi_0 \\ & \quad \p (a(f(a(
f(a(f (a  I_3(0) )))))) ) \pi_0  \p (a((a((a
(\lan 0, I_3(0) \ran \pi_1)) \pi_1))\pi_1) ) \pi_0 \big)
\pc
  \end{align*}
\end{footnotesize}\begin{footnotesize}
  \begin{align*} \mapsto^* 
  & \po{a}  \big(   f(a(f(a(f(a  I_1(0)))))) \p
      (a((a((a((a   I_1(0) ) \pi_1))\pi_1))\pi_1)) \pi_0 \p
      (a I_1(0)) \pi_0   \big)     \\
    & \p \chosen{\big(   f(a(f(a(f(a(f(a  I_2(0)))))))) \p  (a((a((a((a((a
       I_2(0))
      \pi_1))\pi_1))\pi_1))\pi_1)) \pi_0  }   \\
& \quad\chosen{  \p  (a(f(a( f  (a I_2(0))))) ) \pi_0 \p (a((a   I_2(0)) \pi_1)) \pi_0}  
      \big)     \\ & \p \big(   f(a(f(a(f(a(f(a  I_3(0))))))))
\p (a((a((a((a   I_3(0)) \pi_1))\pi_1))\pi_1)) \pi_0  \\
& \quad   \p (a(f(a( f(a(f  (a  I_3(0)))))))) \pi_0 \p (a((a((a   I_3(0)) \pi_1))\pi_1)) \pi_0 \big) \pc
\end{align*}\begin{align*}
\mapsto^{*}  & \po{a}  \big( f(a(f(a(f(a I_1(0)))))) \p
(a((a((a((a  I_1(0) ) \pi_1))\pi_1))\pi_1)) \pi_0 \p
( a(I_1(0)) ) \pi_0 \big)   \\ 
& \p  \big( f(a(f(a(f(a(f \lan I_2(0), I_1(0) \ran)))))) \p
 (a((a((a((a(\lan  I_2(0), I_1(0) \ran\pi_1))\pi_1))\pi_1))\pi_1) )
  \pi_0    \\
  & \quad \p (a(f(a(f \lan I_2(0), I_1(0) \ran)))) \pi_0 \p (a(\lan
    I_2(0), I_1(0)\ran \pi_{1}  )) \pi_0 \big)
\\ 
& \p\chosen{\big(  f(a(f(a(f(a(f(a I_3(0)))))))) \p (a((a((a((a I_3(0)) \pi_1))\pi_1))\pi_1)) \pi_0}  \\ & \quad \p\chosen{ (a(f(a(
f(a(f (a I_3(0)))))))) \pi_0 \p (a((a((a I_3(0))
\pi_1))\pi_1)) \pi_0} \big) \pc \end{align*}\begin{align*}
\mapsto^* & \po{a}  \big( f(a(f(a(f(a I_1(0)))))) \p
(a((a((a((a  I_1(0) ) \pi_1))\pi_1))\pi_1)) \pi_0 \p
(a(I_1(0)) \big)) \pi_0   \\ 
& \p \big( f(a(f(a(f(a (f \lan I_2(0), I_1(0) \ran)))))) \p
(a((a((a((a I_1(0))\pi_1))\pi_1))\pi_1)) \pi_0 \\&\quad  \p (a(f(a (f \lan
  I_2(0), I_1(0) \ran)))) \pi_0 \p (aI_1(0) ) \pi_0 \big)
\\ 
& \p \chosen{\big(   f(a(f(a(f(a(f(a I_3(0)))))))) \p (a((a((a((a 
I_3(0)) \pi_1))\pi_1))\pi_1)) \pi_0}  \\ & \quad \p \chosen{ (a(f(a(
f(a(f (a I_3(0)))))))) \pi_0 \p (a((a((a  I_3(0))
\pi_1))\pi_1)) \pi_0} \big) \pc \end{align*}\begin{align*}
\mapsto^* & \po{a}  \big(\chosen{ f(a(f(a(f(a I_1(0)))))) \p
(a((a((a((a  I_1(0) ) \pi_1))\pi_1))\pi_1)) \pi_0 \p
(a(I_1(0))) \pi_0 }\big)   \\ 
& \p \big( f(a(f(a(f(a (f \lan I_2(0), I_1(0) \ran)))))) \p
(a((a((a((a I_1(0))\pi_1))\pi_1))\pi_1) ) \pi_0
 \\&\quad 
 \p (a(f(a (f \lan I_2(0), I_1(0) \ran)))) \pi_0 \p (aI_1(0)) \pi_0 \big)  
\\ 
& \p  \big(  f(a(f(a(f(a(f\lan  I_3(0), I_1(0)\ran ))))))) \p
  (a((a((a(\lan 
I_3(0), I_1(0) \ran  \pi_1))\pi_1))\pi_1)) \pi_0  \\ &
                                                                 \quad \p (a(f(a(
f(a(f \lan  I_3(0), I_1(0)\ran )))))) \pi_0 \p (a((a(\lan 
I_3(0) , I_1(0) \ran
\pi_1))\pi_1) ) \pi_0 \big)\pc
\end{align*}
\end{footnotesize}
\begin{footnotesize}
  \begin{align*}
\mapsto^* & \po{a}  \big( \chosen{f(a(f(a(f(a I_1(0)))))) \p
(a((a((a((a  I_1(0) ) \pi_1))\pi_1))\pi_1)) \pi_0 \p
(a(I_1(0))) \pi_0} \big)   \\ 
& \p \big( f(a(f(a(f(a (f \lan I_2(0), I_1(0) \ran)))))) \p
(a((a((a((a I_1(0))\pi_1))\pi_1))\pi_1)) \pi_0  \\&\quad   \p (a(f(a (f \lan I_2(0), I_1(0) \ran)))) \pi_0 \p (aI_1(0)) \pi_0 \big)  
\\
& \p \ \big(  f(a(f(a(f(a (f\lan  I_3(0), I_1(0)\ran )))))) \p
  (a((a((a(I_1(0)))\pi_1))\pi_1)) \pi_0 \\ & \quad \p (a(f(a(
f(a (f\lan  I_3(0), I_1(0)\ran ))))) ) \pi_0 \p (a((a I_1(0))\pi_1)) \pi_0 \big) \pc \end{align*}\begin{align*}  
\mapsto^* & \po{a}  \big( f(a(f(a(f\lan I_1(0) , I_1(0)\ran)))) \p
(a((a((a(\lan  I_1(0) , I_1(0)\ran   \pi_1))\pi_1))\pi_1)) \pi_0
                                                                                                   \\&\quad  \p
\lan I_1(0) , I_1(0)\ran \pi_0 \big)    \\ 
& \p \big(\chosen{ f(a(f(a(f(a (f \lan I_2(0), I_1(0) \ran)))))) \p
 (a((a((a((a I_1(0))\pi_1))\pi_1))\pi_1) ) \pi_0 } \\&\quad \chosen{  \p (a(f(a (f \lan
  I_2(0), I_1(0) \ran))) ) \pi_0 \p (aI_1(0)) \pi_0} \big)  
\\ 
& \p  \big(  f(a(f(a(f(a (f\lan  I_3(0), I_1(0)\ran )))))) \p
  (a((a((a(I_1(0)))\pi_1))\pi_1)) \pi_0 \\ & \quad \p (a(f(a(
f(a (f\lan  I_3(0), I_1(0)\ran )))))) \pi_0 \p (a((a I_1(0))\pi_1)) \pi_0 \big) \pc \end{align*}\begin{align*}  
\mapsto^* & \po{a}  \big( f(a(f(a I_1(1)))) \p
(a((a((a I_1(0))\pi_1))\pi_1)) \pi_0 \p
\lan I_1(0) , I_1(0)\ran \pi_0 \big)    \\ 
& \p \big(\chosen{ f(a(f(a(f(a I_2(1)))))) \p
 (a((a((a((a I_1(0))\pi_1))\pi_1))\pi_1)) \pi_0}  \\&\quad \chosen{ \p (a(f(a I_2(1))) \p aI_1(0)) \pi_0} \big)  
\\ 
& \p  \big(  f(a(f(a(f(a I_3(1)))))) \p
  (a((a((a(I_1(0)))\pi_1))\pi_1) ) \pi_0  \\&\quad  \p (a(f(a(
f(a I_3(1))))) ) \pi_0 \p ( a((a I_1(0))\pi_1)) \pi_0  \big) \pc
\end{align*}
\end{footnotesize}
\begin{footnotesize}
  \begin{align*}
\mapsto^* & \po{a} \big( f(a(f(a I_1(1)))) \p
(a((a((a I_1(0))\pi_1))\pi_1) ) \pi_0 \p
\lan I_1(0) , I_1(0)\ran  \pi_0 \big)    \\ 
& \p  \big( f(a(f(a(f \lan  I_2(1), I_1(0) \ran ))))) \p (
 a((a((a( \lan I_1(0), I_{1}(0)\ran\pi_{1})\pi_1))\pi_1)) \pi_0  \\&\quad  \p (
  a(f\lan  I_2(1) , I_1(0) \ran ))) \pi_0 \p \lan I_1(0), I_1(0) \ran
   \pi_0\big)  
\\ 
& \p  \big(\chosen{  f(a(f(a(f(a I_3(1)))) )) \p
  ( a((a((a(I_1(0)))\pi_1))\pi_1) ) \pi_0 }  \\&\quad  \chosen{\p ( a(f(a(
f(a I_3(1)))))) \pi_0 \p (a((a I_1(0))\pi_1)) \pi_0 } \big) \pc \end{align*}\begin{align*}
 \mapsto^* & \po{a} \big( f(a(f(a I_1(1)))) \p
( a((a((a I_1(0))\pi_1))\pi_1) ) \pi_0\p (\lan I_1(0) , I_1(0)\ran )
             \pi_0 \big)  
\\ 
& \p  \big( f(a(f(a I_2(1)))) \p
 ( a((a((aI_1(0))\pi_1))\pi_1)) \pi_0 \p (a( I_2(1)  )) \pi_0 \p (
  \lan I_1(0), I_1(0) \ran ) \pi_0 \big)    \\ 
& \p \big(\chosen{
 f(a(f(a(f(a I_3(1)))))) \p ( a((a((a(I_1(0)))\pi_1))\pi_1) ) \pi_0}
                                                                              \\&\quad  \chosen{ \p
  (a(f(a( f(a
I_3(1))))) ) \pi_0 \p ( a((a I_1(0))\pi_1)) \pi_0 } \big)\pc
\end{align*}
\end{footnotesize}
\begin{footnotesize}
  \begin{align*}
\mapsto^* & \po{a} \big(\chosen{ f(a(f(a I_1(1)))) \p
( a((a((a I_1(0))\pi_1))\pi_1) ) \pi_0 \p \lan I_1(0) , I_1(0)\ran
            \pi_0 } \big)  
\\ 
& \p  \big( f(a(f(a I_2(1)))) \p
 ( a((a((aI_1(0))\pi_1))\pi_1)) \pi_0  \p ( a \, I_2(1)   ) \pi_0 \p
  \lan I_1(0), I_1(0) \ran \pi_0 \big)   \\ 
& \p   \big(
 f(a(f(a(f \lan  I_3(1), I_2(1)\ran )) ))\p
  ( a((a( \lan I_{3}(1), I_2(1)\ran \pi_{1})\pi_1)) \pi_0   \\
& \quad \p ( a(f(a( f \lan  I_3(1), I_2(1)\ran
  ))) ) \pi_0 \p ( a(\lan I_1(0), I_2(1)\ran \pi_1) ) \pi_0 \big)\pc
\end{align*}
\end{footnotesize}
\begin{footnotesize}
  \begin{align*}
\mapsto^* & \po{a} \big( \chosen{f(a(f(a I_1(1)))) \p (
a((a((a I_1(0))\pi_1))\pi_1) ) \pi_0 \p \lan I_1(0) , I_1(0)\ran \pi_0
            } \big)  
\\ 
& \p  \big( f(a(f(a I_2(1)))) \p
 ( a((a((aI_1(0))\pi_1))\pi_1)) \pi_0 \p ( a I_2(1)  ) \pi_0 \p \lan
  I_1(0), I_1(0) \ran \pi_0  \big)   \\ 
& \p  \big(
 f(a(f(a(f \lan  I_3(1), I_2(1)\ran  ))))\p
  ( a((a I_2(1))\pi_1) ) \pi_0  \\&\quad  \p (a(f(a I(f \lan  I_3(1), I_2(1)\ran
  ))) ) \pi_0 \p ( a I_2(1) ) \pi_0 \big) \pc \end{align*}\begin{align*}
 \mapsto^* & \po{a} \big( f(a(f\lan I_1(1) , I_2(1)\ran ))  \p
( a((a(\lan  I_1(0), I_2(1)\ran \pi_1))\pi_1) ) \pi_0 \p \lan I_1(0) ,
            I_1(0)\ran \pi_0  \big)   
\\ 
& \p  \big(\chosen{ f(a(f(a I_2(1)))) \p
 ( a((a((aI_1(0))\pi_1))\pi_1) ) \pi_0 \p ( a I_2(1)  ) \pi_0 \p \lan
  I_1(0), I_1(0) \ran \pi_0 } \big)   \\ 
& \p  \big(
 f(a(f(a (f \lan  I_3(1), I_2(1)\ran )))) \p
  ( a((a I_2(1))\pi_1)) \pi_0  \\&\quad   \p ( a(f(a (f \lan  I_3(1), I_2(1)\ran
  ))) ) \pi_0 \p ( a I_2(1)) \pi_0 \big) \pc \end{align*}\begin{align*}
\mapsto^* & \po{a} \big( f(a(f\lan I_1(1) , I_2(1)\ran )) \p (
a((a I_2(1))\pi_1) ) \pi_0 \p \lan I_1(0) ,
            I_1(0)\ran \pi_0  \big) 
\\ 
& \p  \big(\chosen{ f(a(f(a I_2(1)))) \p
 ( a((a((aI_1(0))\pi_1))\pi_1) ) \pi_0 \p ( a I_2(1)  ) \pi_0 \p \lan
  I_1(0), I_1(0) \ran \pi_0 } \big)   \\ 
& \p  \big(
 f(a(f(a (f \lan  I_3(1), I_2(1)\ran )))) \p
  ( a((a I_2(1))\pi_1) ) \pi_0  \\&\quad  \p ( a(f(a (f \lan  I_3(1), I_2(1)\ran
  ))) ) \pi_0 \p ( a I_2(1)) \pi_0 \big)\pc
\end{align*}
\end{footnotesize}
\begin{footnotesize}
  \begin{align*}
 \mapsto^* & \po{a}  \big( f(a(f\lan I_1(1) , I_2(1)\ran )) \p
( a((a I_2(1))\pi_1) ) \pi_0 \p \lan I_1(0) ,
            I_1(0)\ran \pi_0  \big)   
\\
& \p   \big( f(a(f\lan I_2(1), I_2(1)\ran )) \p
 ( a((a(\lan I_1(0), I_2(1)\ran \pi_1))\pi_1) ) \pi_0  \\&\quad  \p \lan I_2(1),
  I_2(1)\ran \pi_0 \p \lan I_1(0), I_1(0) \ran \pi_0  \big)    \\ 
& \p  \big(\chosen{ f(a(
f(a (f \lan  I_3(1), I_2(1)\ran )) ))\p
  ( a((a I_2(1))\pi_1) ) \pi_0 }  \\&\quad  \chosen{\p ( a(f(a (f \lan  I_3(1), I_2(1)\ran
  ))) ) \pi_0 \p ( a I_2(1)) \pi_0 } \big) \pc \end{align*}\begin{align*}
\mapsto^* & \po{a}  \big( f(aI_1(2)) \p
( a((a I_2(1))\pi_1) ) \pi_0 \p \lan I_1(0) ,
            I_1(0)\ran \pi_0  \big)   
\\ 
& \p   \big( f(a I_2(2)) \p
 ( a((a I_2(1))\pi_1) ) \pi_0 \p \lan I_2(1), I_2(1)\ran \pi_0  \p
  \lan I_1(0), I_1(0) \ran \pi_0  \big)    \\ 
& \p  \big(\chosen{
 f(a(f(a I_3(2)))) \p
  ( a((a I_2(1))\pi_1) ) \pi_0 \p ( a(f(a I_3(2)))) \pi_0  \p ( a
  I_2(1)) \pi_0 } \big) \pc \end{align*}\begin{align*}
\mapsto^* & \po{a}  \big( \chosen{f(aI_1(2)) \p (
a((a I_2(1))\pi_1) ) \pi_0 \p \lan I_1(0) ,
            I_1(0)\ran \pi_0  }\big)   
\\
& \p \big( f(a I_2(2)) \p
 ( a((a I_2(1))\pi_1) ) \pi_0 \p \lan I_2(1), I_2(1)\ran \pi_0 \p \lan I_1(0),
  I_1(0) \ran \pi_0  \big)   
\\ 
&  \p     \big(
 f(a(f\lan I_3(2), I_2(1)\ran))  \p
  ( a(\lan  I_2(1), I_2(1) \ran\pi_1) ) \pi_0  \\&\quad  \p ( a(f\lan I_3(2), I_2(1)\ran)) \pi_0 \p
  \lan  I_2(1), I_2(1) \ran \pi_0 \big)\pc
\end{align*}
\end{footnotesize}\begin{footnotesize}
  \begin{align*} 
  \mapsto^* & \po{a} \big( \chosen{f(aI_1(2)) \p ( a((a
I_2(1))\pi_1) ) \pi_0 \p \lan I_1(0) , I_1(0)\ran \pi_0} \big)   \\ & \p \big(
f(a I_2(2)) \p 
(  a((a I_2(1))\pi_1)) \pi_0  \p \lan I_2(1), I_2(1)\ran \pi_0\p \lan I_1(0),
I_1(0) \ran \pi_0 \big)   \\ & \p  \big(   f(aI_3(2)) \p (a I_2(1)) \pi_0 \p  (a  I_3(2)) \pi_0 \p \lan I_2(1), I_2(1) \ran \pi_0 \big) \pc \end{align*}\begin{align*}
\mapsto^* & \po{a}   \big( f\lan I_1(2),  I_3(2)\ran \p ( a(\lan
I_2(1),  I_3(2) \ran \pi_1)) \pi_0 \p \lan I_1(0) , I_1(0)\ran \pi_0 \big)  
\\ & \p \big(\chosen{ f(a I_2(2)) \p  (a((a I_2(1))\pi_1)) \pi_0 \p
     \lan I_2(1), I_2(1)\ran  \pi_0 \p
\lan I_1(0), I_1(0) \ran \pi_0 } \big)   \\ & \p \big(  f(aI_3(2)) \p
                                              ( a I_2(1) )  \pi_0 \p (
                                              a I_3(2)) \pi_0 \p \lan
                                              I_2(1), I_2(1) \ran  \pi_0 \big) \pc \end{align*}\begin{align*}
\mapsto^* & \po{a}   \big( f\lan I_1(2),  I_3(2)\ran \p ( a   I_3(2) )
             \pi_0 \p \lan I_1(0) ,
I_1(0)\ran  \pi_0 \big)   \\ & \p \big(\chosen{ f(a I_2(2)) \p  ( a((a I_2(1))\pi_1)) \pi_0  \p \lan
I_2(1), I_2(1)\ran  \pi_0 \p \lan I_1(0), I_1(0) \ran \pi_0} \big)   \\ & \p
\big(   f(aI_3(2)) \p ( a I_2(1) )  \pi_0 \p ( a  I_3(2))  \pi_0 \p \lan
I_2(1), I_2(1) \ran  \pi_0 \big) \pc 
\end{align*}
\end{footnotesize}\begin{footnotesize}
  \begin{align*}
\mapsto^* & \po{a}  \big( f\lan I_1(2),  I_3(2)\ran \p ( a  I_3(2))  \pi_0 \p \lan I_1(0) ,
I_1(0)\ran \pi_0 \big)    \\ & \p  \big( f \lan  I_2(2),  I_3(2)\ran
                         \p  ( a( \lan  I_2(1),  I_3(2)\ran \pi_1) )  \pi_0\p \lan
I_2(1), I_2(1)\ran \pi_0 \p \lan I_1(0), I_1(0) \ran  \pi_0 \big)   \\ & \p
\big(\chosen{  f(aI_3(2)) \p ( a I_2(1) )  \pi_0 \p ( a I_3(2)) \pi_0 \p \lan
I_2(1), I_2(1) \ran \pi_0} \big) \pc \end{align*}\begin{align*}
\mapsto^* & \po{a}  \big(f\lan I_1(2),  I_3(2)\ran \p ( a I_3(2) )
            \pi_0 \p \lan I_1(0) ,
I_1(0)\ran  \pi_0 \big)    \\ & \p  \big(   f \lan  I_2(2),
                                I_3(2)\ran  \p  ( a   I_3(2)) \pi_0 \p \lan
I_2(1), I_2(1)\ran  \pi_0 \p \lan I_1(0), I_1(0) \ran \pi_0 \big)  \\ & \p
\big(\chosen{  f(aI_3(2)) \p ( a I_2(1) )  \pi_0 \p ( a I_3(2)) \pi_0 \p \lan
I_2(1), I_2(1) \ran \pi_0 } \big) \pc 
\end{align*}
\end{footnotesize}
\begin{footnotesize}
  \begin{align*}
\mapsto^* & \po{a} \big(\chosen{ f\lan I_1(2),  I_3(2)\ran \p ( a
            I_3(2))  \pi_0 \p \lan I_1(0) ,
I_1(0)\ran \pi_0} \big) \\ & \p  \big( f \lan  I_2(2),  I_3(2)\ran \p
                             ( a I_3(2) )  \pi_0 \p
\lan I_2(1), I_2(1)\ran \pi_0 \p \lan I_1(0), I_1(0) \ran \pi_0 \big) \\ & \p
 \big( f\lan I_3(2), I_3(2) \ran \p \lan I_2(1), I_3(2) \ran \pi_0  \p
                                                                           \lan I_3(2), I_3(2) \ran  \pi_0 \p \lan I_2(1), I_2(1)
\ran  \pi_0 \big) \pc \end{align*}\begin{align*}
\mapsto^* & \po{a} \big(\chosen{ I_1(3) \p ( a I_3(2))  \pi_0 \p \lan I_1(0) ,
I_1(0)\ran \pi_0 } \big) \\ & \p  \big( I_2(3) \p  ( a I_3(2))  \pi_0 \p
\lan I_2(1), I_2(1)\ran \pi_0 \p \lan I_1(0), I_1(0) \ran  \pi_0\big) \\ & \p
 \big(  I_3(3) \p \lan I_2(1), I_3(2) \ran \pi_0  \p \lan I_3(2), I_3(2) \ran \pi_0 \p \lan I_2(1), I_2(1)
\ran \pi_0  \big) \pc
  \end{align*}
\end{footnotesize}
\[
    \mapsto^* I_1(3) \p I_2(3) \p I_3(3) \]

\end{document}